\definecolor{shadecolor}{rgb}{0.9,0.9,0.9}
\newtheorem{definition}{Definition}
\newtheorem{proposition}{Proposition}
\newtheorem{lemma}{Lemma}
\newtheorem{theorem}[proposition]{Theorem}
\newtheorem{corollary}[proposition]{Corollary}
\def\squareforqed{\hbox{\rlap{$\sqcap$}$\sqcup$}}
\def\qed{\ifmmode\squareforqed\else{\unskip\nobreak\hfil
\penalty50\hskip1em\null\nobreak\hfil\squareforqed
\parfillskip=0pt\finalhyphendemerits=0\endgraf}\fi}
\def\endenv{\ifmmode\;\else{\unskip\nobreak\hfil
\penalty50\hskip1em\null\nobreak\hfil\;
\parfillskip=0pt\finalhyphendemerits=0\endgraf}\fi}
\newenvironment{proof}{\noindent \textbf{{Proof~} }}{\hfill $\blacksquare$}
\newcounter{remark}
\newenvironment{remark}[1][]{\refstepcounter{remark}\par\medskip\noindent%
\textbf{Remark~\theremark #1} }{\medskip}
\newcounter{example}
\mathchardef\ordinarycolon\mathcode`\:
\def\vcentcolon{\mathrel{\mathop\ordinarycolon}}
\newmdenv[skipabove=7pt,
skipbelow=7pt,
backgroundcolor=darkblue!15,
innerleftmargin=5pt,
innerrightmargin=5pt,
innertopmargin=5pt,
leftmargin=0cm,
rightmargin=0cm,
innerbottommargin=5pt,
linewidth=1pt]{tBox}
\newmdenv[skipabove=7pt,
skipbelow=7pt,
backgroundcolor=red!15,
innerleftmargin=5pt,
innerrightmargin=5pt,
innertopmargin=5pt,
leftmargin=0cm,
rightmargin=0cm,
innerbottommargin=5pt,
linewidth=1pt]{rBox}
\newmdenv[skipabove=7pt,
skipbelow=7pt,
backgroundcolor=blue2!25,
innerleftmargin=5pt,
innerrightmargin=5pt,
innertopmargin=5pt,
leftmargin=0cm,
rightmargin=0cm,
innerbottommargin=5pt,
linewidth=1pt]{dBox}
\newmdenv[skipabove=7pt,
skipbelow=7pt,
backgroundcolor=darkkblue!15,
innerleftmargin=5pt,
innerrightmargin=5pt,
innertopmargin=5pt,
leftmargin=0cm,
rightmargin=0cm,
innerbottommargin=5pt,
linewidth=1pt]{sBox}
\definecolor{darkblue}{RGB}{0,76,156}
\definecolor{darkkblue}{RGB}{0,0,153}
\definecolor{blue2}{RGB}{102,178,255}
\definecolor{darkred}{RGB}{195,0,0}
\newcommand{\nc}{\newcommand}
\nc{\rnc}{\renewcommand}
\nc{\lbar}[1]{\overline{#1}}
\nc{\bra}[1]{\langle#1|}
\nc{\cptp}{\operatorname{CPTP}}
\nc{\hptp}{\operatorname{HPTP}}
\nc{\ket}[1]{|#1\rangle}
\nc{\spn}{\operatorname{span}}
\nc{\rep}[1]{\textbf{R}(#1)}
\nc{\ketbra}[2]{|#1\rangle\!\langle#2|}
\nc{\dketbra}[2]{|#1\rangle\!\rangle\!\langle\!\langle#2|}
\nc{\dket}[1]{|#1\rangle\!\rangle}
\nc{\dbra}[1]{\langle\!\langle#1|}
\nc{\dbraket}[2]{\langle\!\langle#1|#2\rangle\!\rangle}
\nc{\braket}[2]{\langle#1|#2\rangle}
\nc{\proj}[1]{| #1\rangle\!\langle #1 |}
\nc{\avg}[1]{\langle#1\rangle}
\nc{\smfrac}[2]{\mbox{$\frac{#1}{#2}$}}
\nc{\tr}{\operatorname{Tr}}
\nc{\ox}{\otimes}
\nc{\dg}{\dagger}
\nc{\dn}{\downarrow}
\nc{\cA}{{\cal A}}
\nc{\cB}{{\cal B}}
\nc{\cC}{{\cal C}}
\nc{\cD}{{\cal D}}
\nc{\cE}{{\cal E}}
\nc{\cF}{{\cal F}}
\nc{\cG}{{\cal G}}
\nc{\cH}{{\cal H}}
\nc{\cI}{{\cal I}}
\nc{\cJ}{{\cal J}}
\nc{\cK}{{\cal K}}
\nc{\cL}{{\cal L}}
\nc{\cM}{{\cal M}}
\nc{\cN}{{\cal N}}
\nc{\cO}{{\cal O}}
\nc{\cP}{{\cal P}}
\nc{\cQ}{{\cal Q}}
\nc{\cR}{{\cal R}}
\nc{\cS}{{\cal S}}
\nc{\cT}{{\cal T}}
\nc{\cU}{{\cal U}}
\nc{\cV}{{\cal V}}
\nc{\cX}{{\cal X}}
\nc{\cY}{{\cal Y}}
\nc{\cZ}{{\cal Z}}
\nc{\cW}{{\cal W}}
\nc{\csupp}{{\operatorname{csupp}}}
\nc{\qsupp}{{\operatorname{qsupp}}}
\nc{\var}{{\operatorname{var}}}
\nc{\rar}{\rightarrow}
\nc{\lrar}{\longrightarrow}
\nc{\polylog}{{\operatorname{polylog}}}
\nc{\wt}{{\operatorname{wt}}}
\nc{\av}[1]{{\left\langle {#1} \right\rangle}}
\nc{\supp}{{\operatorname{supp}}}
\nc{\argmin}{{\operatorname{argmin}}}
\def\x{\xi}
\nc{\RR}{{{\mathbb R}}}
\nc{\CC}{{{\mathbb C}}}
\nc{\FF}{{{\mathbb F}}}
\nc{\NN}{{{\mathbb N}}}
\nc{\ZZ}{{{\mathbb Z}}}
\nc{\PP}{{{\mathbb P}}}
\nc{\QQ}{{{\mathbb Q}}}
\nc{\UU}{{{\mathbb U}}}
\nc{\EE}{{{\mathbb E}}}
\nc{\id}{{\operatorname{id}}}
\nc{\CHSH}{{\operatorname{CHSH}}}
\newcommand{\Op}{\operatorname}
\nc{\be}{\begin{equation}}
\nc{\ee}{{\end{equation}}}
\nc{\bea}{\begin{eqnarray}}
\nc{\eea}{\end{eqnarray}}
\nc{\rU}{\mbox{U}}
\nc{\ob}[1]{#1}
\nc{\SEP}{{\text{\rm SEP}}}
\nc{\NS}{{\text{\rm NS}}}
\nc{\LOCC}{{\text{\rm LOCC}}}
\nc{\PPT}{{\text{\rm PPT}}}
\nc{\EXT}{{\text{\rm EXT}}}
\nc{\Sym}{{\operatorname{Sym}}}
\nc{\ERLO}{{E_{\text{r,LO}}}}
\nc{\ERLOCC}{{E_{\text{r,LOCC}}}}
\nc{\ERPPT}{{E_{\text{r,PPT}}}}
\nc{\ERLOCCinfty}{{E^{\infty}_{\text{r,LOCC}}}}
\nc{\Aram}{{\operatorname{\sf A}}}
\def\grd@save@target#1{%
  \def\grd@target{#1}}
\def\grd@save@start#1{%
  \def\grd@start{#1}}
\tikzset{
  grid with coordinates/.style={
    to path={%
      \pgfextra{%
        \edef\grd@@target{(\tikztotarget)}%
        \tikz@scan@one@point\grd@save@target\grd@@target\relax
        \edef\grd@@start{(\tikztostart)}%
        \tikz@scan@one@point\grd@save@start\grd@@start\relax
        \draw[minor help lines,magenta] (\tikztostart) grid (\tikztotarget);
        \draw[major help lines] (\tikztostart) grid (\tikztotarget);
        \grd@start
        \pgfmathsetmacro{\grd@xa}{\the\pgf@x/1cm}
        \pgfmathsetmacro{\grd@ya}{\the\pgf@y/1cm}
        \grd@target
        \pgfmathsetmacro{\grd@xb}{\the\pgf@x/1cm}
        \pgfmathsetmacro{\grd@yb}{\the\pgf@y/1cm}
        \pgfmathsetmacro{\grd@xc}{\grd@xa + \pgfkeysvalueof{/tikz/grid with coordinates/major step}}
        \pgfmathsetmacro{\grd@yc}{\grd@ya + \pgfkeysvalueof{/tikz/grid with coordinates/major step}}
        \foreach \x in {\grd@xa,\grd@xc,...,\grd@xb}
        \node[anchor=north] at (\x,\grd@ya) {\pgfmathprintnumber{\x}};
        \foreach \y in {\grd@ya,\grd@yc,...,\grd@yb}
        \node[anchor=east] at (\grd@xa,\y) {\pgfmathprintnumber{\y}};
      }
    }
  },
  minor help lines/.style={
    help lines,
    step=\pgfkeysvalueof{/tikz/grid with coordinates/minor step}
  },
  major help lines/.style={
    help lines,
    line width=\pgfkeysvalueof{/tikz/grid with coordinates/major line width},
    step=\pgfkeysvalueof{/tikz/grid with coordinates/major step}
  },
  grid with coordinates/.cd,
  minor step/.initial=.2,
  major step/.initial=1,
  major line width/.initial=2pt,
}
\def\problem@s{}
\newcounter{problems@cnt}
\newcommand{\allproblems}{\problem@s}
\definecolor{shadecolor}{rgb}{0.9,0.90,0.9}
\pgfplotsset{compat=1.18} 
\begin{document}
\title{Programmable Open Quantum Systems}
\author{Mingrui Jing}
\author{Mengbo Guo}
\affiliation{Thrust of Artificial Intelligence, Information Hub,\\
The Hong Kong University of Science and Technology (Guangzhou), Guangzhou 511453, China}
\author{Lin Zhu}
\affiliation{Thrust of Artificial Intelligence, Information Hub,\\
The Hong Kong University of Science and Technology (Guangzhou), Guangzhou 511453, China}
\affiliation{Quantum Science Center of Guangdong-Hong Kong-Macao Greater Bay Area, Shenzhen 518045, China}
\author{Hongshun Yao}
\author{Xin Wang}
\email{felixxinwang@hkust-gz.edu.cn.}
\affiliation{Thrust of Artificial Intelligence, Information Hub,\\
The Hong Kong University of Science and Technology (Guangzhou), Guangzhou 511453, China}

\begin{abstract}
Programmability is a unifying paradigm for enacting families of quantum transformations via fixed processors and program states, with a fundamental role and broad impact in quantum computation and control. While there has been a shift from viewing open systems solely as a source of error to treating them as a computational resource, their programmability remains largely unexplored. In this work, we develop a framework that characterizes and quantifies the programmability of Lindbladian semigroups by combining physically implementable retrieval maps with time‑varying program states. Within this framework, we identify quantum programmable classes enabled by symmetry and stochastic structure, including covariant semigroups and fully dissipative Pauli Lindbladians with finite program dimension. We further provide a necessary condition for physical programmability that rules out coherent generators and typical dissipators generating amplitude damping. For such non‑physically programmable cases, we construct explicit protocols with finite resources. Finally, we introduce an operational programming cost, defined via the number of samples required to program the Lindbladian, and establish its core structural properties, such as continuity and faithfulness. These results provide a notion of programming cost for Lindbladians, bridge programmable channel theory and open‑system dynamics, and yield symmetry‑driven compression schemes and actionable resource estimates for semigroup simulation and control in noisy quantum technologies.
\end{abstract}

\maketitle
\textbf{\textit{Introduction}}.--- Programmability is the ability to encode a desired transformation into a compact program and reliably retrieve it on arbitrary input, and it serves as a unifying principle of computation and control in both classical and quantum settings~\cite{Brown1992field,Nielsen1997programmable}. As quantum technologies move from isolated few-qubit demonstrations to large-scale, noisy devices~\cite{Preskill2018quantum}, an efficient programming paradigm, where fixed processors are supplied with suitable program states to enact families of target maps on data registers, is required~\cite{Nielsen1997programmable,Bisio2010optimal}. 
Unlike the classical case, where arbitrary Boolean functions can be deterministically realized with finite programs, deterministic universal programming of all quantum unitaries with a finite-dimensional program register is impossible~\cite{Nielsen1997programmable,Sedlak2019optimal}. This no-go result has motivated two main directions: probabilistic programming of restricted one-parameter unitary families with optimal success probabilities and quantified overheads~\cite{Vidal2002storing,Sedlak2020probabilistic}, and deterministic schemes that restrict the target family and exploit symmetry to bypass the dimensionality barrier.

A particularly powerful approach in the probabilistic setting is the \textit{port-based} strategy, which enables probabilistic universality with heralded success and program states given by the Choi states of the target channels~\cite{Kaur2017amortized,Bennett1996mixed,Horodecki1999general,Chiribella2009realization}. More recently, symmetry has emerged as a key resource for exact deterministic programmability of infinite families of covariant channels with the optimality being judged through the structure of the commutant and the geometry of the corresponding Choi states~\cite{Gschwendtner2021programmabilityof}. Similar strategies have been applied to extend these results to demonstrate advantages in the storage and retrieval of isometries~\cite{Yoshida2025quantum}. These developments established a quantitative theory of programmability for classes of unitary and non-unitary channels, clarifying the role of symmetry, entanglement, and program dimension as computational resources.

\begin{figure}[h!]
    \centering
    \includegraphics[width=0.85\linewidth]{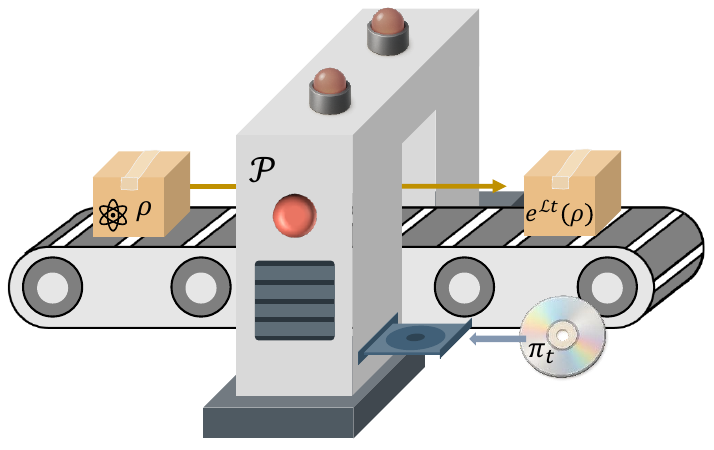}
    \caption{Overview diagram of programming open-system dynamics. By injecting the program $\pi_t$ at time $t$, the processor $\cP$ can reproduce the action of $e^{t\cL}$ onto an arbitrary initial state $\rho_0$.}
    \label{fig:overview_problem}
\end{figure}

However, a large class of physically relevant Markovian quantum dynamics is non-unitary and time-continuous due to the  unavoidably interacting with uncontrolled environments.  Such dynamics is captured by one-parameter semigroups of completely positive and trace-preserving (CPTP) maps governed by the Lindblad master equation~\cite{Lidar2019lecture,Rivas2012open}. Simulating and controlling open-system dynamics is essential for predicting noise-driven behavior in near-term and fault-tolerant devices, optimizing error mitigation and control strategies, and guiding the design, benchmarking, and verification of quantum technologies~\cite{Chen2025,Peng2025quantum,Barreiro2011open,Chen2018simulating,Ding2024simulating,Kamakari2022digital,Weimer2021simulation,Lostaglio2021error}. More recent works have shifted from viewing open systems solely as a source of error (decoherence) to treating them as a computational resource for solving ordinary differential equations~\cite{Shang2025designing} and for preparing quantum Gibbs states~\cite{Chen2025,Guo2025designingopen}. These tasks are inherently interdisciplinary, linking quantum information, computation, and many-body physics, which showcase a vital role of efficient programming of open quantum systems.

In this work, we introduce a framework for characterizing the programmability of the dynamics of open quantum systems and propose an operational notion to quantify the physical programming cost of a Lindbladian. We formulate programmability in terms of injecting a time-varying analytic family of program states into a fixed retrieval map, such that the resulting programmable architecture produces the entire Lindbladian semigroup for all times (see Fig.~\ref{fig:overview_problem}). We identify a class of quantum programmable systems by examining the intrinsic symmetry and stochastic features of their Lindbladians with finite program dimension. Beyond these constructive cases, we establish a necessary structural condition for physical programmability that rules out broad classes of dynamics, including coherent generators and paradigmatic dissipative channels such as amplitude damping. For such non-programmable generators, we derive explicit quasi-quantum sampling protocols that achieve exact programming, and examine their performance via numerical sampling.

Building on the notion of physical implementability for simulating maps beyond quantum operations~\cite{Jiang2021physical}, we propose an advanced notion of programming cost within the framework of semidefinite programming (SDP). We prove several  properties of this cost, including faithfulness, unitarily invariance, monotonicity, and an analytic continuation principle that extends exact programmability from any nontrivial time interval to all times. We also investigate the tradeoff between the cost and the error threshold through numerical experiments. Our results provide the first quantitative bridge between programmable channel theory and open system dynamics, elevating symmetry and invariant subspaces to compression mechanisms for entire dynamical semigroups. This framework offers practically relevant guidelines for resource-efficient simulation and control of open quantum systems, and suggests new cross-disciplinary strategies for engineering programmable quantum devices.


\vspace{2mm}
\textbf{\textit{Programmability of Lindbladian}}.---
Consider $\cL$ a physical Lindbladian superoperator acting on the principal system $\cD(\cH_S)$. Let $\rho_0\in \cD(\cH_S)$ be an arbitrary initial state. The dynamics of the system at any time $t\geq 0$ are fully determined by the Lindblad master equation: $\frac{d\rho}{dt} = \cL(\rho)$. The solution to the equation can be expressed as $\rho(t) =e^{t\cL}(\rho_0)$, where $e^{t \cL}$ forms a quantum dynamical semigroup generated by $\cL$~\cite{Rivas2012open}, and for any $t\geq 0$, $e^{t \cL}$ is a quantum channel. 

We target to construct a  programmable processor that solves the master equation with in a time interval, provided arbitrary initial states. Let $\cH_{tot} = \cH_S \ox \cH_P$ be a composite quantum system. A programmable quantum processor consists of a linear map $\cP$ from $\cH_{tot}$ to $\cH_{S'}$ with $\cH_{S'} \simeq \cH_{S}$, and a continuous set of program states $\pi_t$ acting on $\cH_P$ of dimension $d_P$. Denote $\Omega$ as the allowable subset of linear maps. We say $\cL$ is $\Omega_{\epsilon}$-programmable within a time interval $[0,T]$ for some $T\geq 0$, if there exists one-parameter, continuous set of program states $\pi_t \in \cD(\cH_P)$ and a $\cP\in\Omega$ such that $\forall t \in [0, T]$, 
\begin{equation*}
    \frac{1}{2}\|\cP(\cdot \ox \pi_t) - e^{t \cL} \|_{\diamond} \leq \epsilon.
\end{equation*}
We call $\{\cP, \pi_t\}_t$ a $\Omega_{\epsilon}$-programming protocol of $\cL$ within $[0,T]$, and $\cP$ the programming channel, (or retrieval channel). If such a protocol exists as $T\rightarrow \infty$, we say $\cL$ is $\Omega_{\epsilon}$-programmable. When $\epsilon$ vanishes, we call it the (exact) $\Omega$-programmable scenario. In particular, if a CPTP channel $\cP$ can be constructed, we say $\cL$ is \textit{quantum programmable}. If such a construction only exists for Hermitian-preserving and trace-preserving (HPTP) map $\cP$. We say $\cL$ is \textit{quasi-quantum programmable}.
The channel programming problem has been investigated in several previous literature~\cite{Yoshida2025quantum,Gschwendtner2021programmabilityof,Nielsen1997programmable} for which our definition of programmability is closely related to the $\mathbb{S}$-set retrieving problem by treating the entire semigroup generated by $\cL$ as a subset of quantum channels. 

Given $\cA_t=e^{t \cL}$ as the quantum dynamical semigroup generated by $\cL$, a clear observation showcases that if there exists a finite set of fixed, time-independent quantum channels $\{\cE_j\}_{j=1}^K$ such that, for any time $t\geq 0$, $\cA_t$ can be decomposed into a convex combination $\cA_t = \sum_j p_j(t) \cE_j$ where $p_j(t)$'s are the time-dependent probability amplitudes. Then, based on the measure-and-prepare strategy~\cite{Gschwendtner2021programmabilityof}, we can construct $\pi_t = \sum_j p_j(t) \ketbra{j}{j}$ within a $K$-level qudit system. By measuring the system of program states with respect to the computational basis, the outcome can tell which channel $\cE_j$ is going to be applied. The entire configuration, therefore, forms a CPTP programming protocol. 

From the perspective of generator space, if the composition of the Lindbladian on a fixed set of quantum channels can be represented as a reversed transition-rate matrix (Q-matrix)~\cite{Norris1998markov}, provided $\cI \in \Op{conv}[\{\cE_j\}_j]$. Then, the Lindbladian equation reduces to the classical master equation which leads to the solution $\cA_t = \sum_j p_j(t) \cE_j$ and $p_j(t)$ is evolved under the stochastic matrix generated by the Q-matrix. This fact can force a class of Lindbladians to be quantum programmable.
\begin{theorem}
    Given $\cL$ be any $n$-qubit fully dissipative Lindbladian where all the jump operators are Pauli operators. Then, $\cL$ is quantum programmable.
\end{theorem}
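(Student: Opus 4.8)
The plan is to show that the semigroup $e^{t\cL}$ is, at every time $t\ge 0$, a convex mixture of a fixed finite family of Pauli conjugation channels with analytic mixing weights, and then to invoke the measure-and-prepare construction recalled just above the theorem. Since $\cL$ is fully dissipative it has no coherent part, so we may write it in canonical form $\cL(\rho)=\sum_j\gamma_j\bigl(P_j\rho P_j-\rho\bigr)$ with $\gamma_j\ge 0$ and each $P_j$ an $n$-qubit Pauli operator (using $P_j^\dagger=P_j$ and $P_j^2=I$). Introduce the conjugation superoperators $\cP_P(\rho)=P\rho P$ for $P$ ranging over the $4^n$ Pauli operators modulo phase. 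The key elementary observation is that these maps \emph{commute}: any two Paulis obey $PQ=\pm QP$ and the sign cancels under conjugation, so $\cP_P\cP_Q=\cP_Q\cP_P=\cP_{PQ}$; hence $\{\cP_P\}_P$ is an abelian group isomorphic to $\ZZ_2^{2n}$, and in particular $\spn\{\cP_P\}_P$ is invariant under $\cL$.

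Next I would restrict attention to this invariant subspace. Computing $\cL(\cP_Q)=\sum_j\gamma_j(\cP_{P_jQ}-\cP_Q)$ and reindexing the sum over group elements shows that, in the basis $\{\cP_P\}_P$, $\cL$ is represented by a matrix $M$ with nonnegative off-diagonal entries, nonpositive diagonal entries, and vanishing column sums --- that is, $M$ is (the transpose of) a transition-rate matrix of a continuous-time random walk on $\ZZ_2^{2n}$, precisely the ``reversed Q-matrix'' situation flagged before the theorem, and $I\in\operatorname{conv}\{\cP_P\}_P$ since $\cP_I=\id$. Consequently $e^{tM}$ is a column-stochastic matrix for all $t\ge 0$, and because $\cP_I=\id$,
\[
e^{t\cL}=e^{t\cL}(\cP_I)=\sum_{P}p_P(t)\,\cP_P,\qquad p_P(t)=\bigl[e^{tM}\bigr]_{P,I}\ge 0,\quad \sum_P p_P(t)=1,
\]
with each $p_P(t)$ analytic in $t$. (Equivalently, exponentiating term by term and using $\cP_{P_j}^2=\id$ yields an explicit product-measure form for the weights $p_P(t)$.)

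Finally I would assemble the protocol exactly as in the measure-and-prepare strategy: take a program register of dimension $d_P\le 4^n$ --- in fact the order of the subgroup of $\ZZ_2^{2n}$ generated by the jump Paulis, hence $2^{r}$ with $r$ the number of independent generators --- with orthonormal basis $\{\ket{P}\}$, set the analytic one-parameter program family $\pi_t=\sum_P p_P(t)\ketbra{P}{P}$, and let the fixed retrieval channel $\cP$ measure the program register in this basis and apply the indicated conjugation $\cP_P$ to the data register. Then $\cP$ is manifestly CPTP and time-independent, and $\cP(\cdot\ox\pi_t)=\sum_P p_P(t)\,\cP_P=e^{t\cL}$ for every $t\ge 0$, so $\cL$ is exactly quantum programmable with finite program dimension.

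I do not expect serious obstacles once the commutativity observation is in place; the one step that genuinely needs care is verifying that $M$ is a bona fide (transposed) generator matrix --- nonnegative off-diagonal entries and zero column sums --- since this is exactly what forces the coefficients $p_P(t)$ to remain nonnegative and normalized, and hence what makes the mixture interpretation, and with it the measure-and-prepare reduction, legitimate. The remaining points are bookkeeping: confirming that the program dimension is finite and, optionally, sharpening $4^n$ down to the order of the subgroup actually generated by the jump operators.
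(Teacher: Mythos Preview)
Your proposal is correct and follows essentially the same route as the paper: compute the action of $\cL$ on the span of Pauli conjugation channels, recognize the resulting matrix as a (transposed) $Q$-matrix so that its exponential is stochastic, conclude $e^{t\cL}=\sum_P p_P(t)\,\cP_P$ with $p_P(t)$ a probability vector, and finish with the measure-and-prepare construction. Your extra observations --- the commutativity/$\ZZ_2^{2n}$ group structure and the sharpening of the program dimension to $2^r$ via the subgroup generated by the jump Paulis --- are not used in the paper's argument but are correct refinements.
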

Pauli-Lindbladians serve as canonical testbeds for studying mixing, contractivity, and controlled systems, and they underpin efficient noise identification and benchmarking by reducing parameter estimation to a few axis-aligned decay constants~\cite{Van2024techniques,Van2023probabilistic,Schwartzman2025modeling,Chruifmmode2016generalized}. This theorem provides a convenient basis for modeling and simulating general Pauli dynamics in quantum devices by storing the evolution information into the static quantum resources. The details and proofs have be developed in Appendix~\ref{appendix:isolated_system}. 

Meanwhile, symmetry plays a central role across quantum information theory, quantum resource theory, and channel simulation~\cite{Hayashi2006quantum,Chitambar2019quantum}. The group structures enables the compression of complex dynamics into symmetry-invariant subspaces, facilitating reuse, reducing algorithmic overhead for simulation and estimation, which yields deeper physical insight and more elegant mathematical formulations. Let $G$ be a compact symmetry group and $U$ be a unitary representation on Hilbert spaces $\cH_S$. Then, the semigroup $\cA_t$ is said to be \textit{covariant}~\cite{Holevo1995structure,Chruscinski2021universal} if for any $A\in\cB(\cH_S)$ and any element $g\in G$, the equality, $\cA_t(U_g A U_g^{\dagger}) = U_g \cA_t(A) U_g^{\dagger}$
for any time $t\geq 0$. This is equivalent to say the Lindbladian $\cL$ is commute with the unitary evolution of $U_g$ for all $g$. 

\begin{corollary}\label{coro:cptp_programmable_covariant}
  Let $\cL$ be a covariant Lindbladian. Then, $\cL$ is quantum programmable with the $d_P$ determined by the commutant of $\overline{U}\ox U$, where $U$ is an irrep of a compact group.
\end{corollary}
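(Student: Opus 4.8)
The plan is to exhibit a finite decomposition of the covariant semigroup $\cA_t = e^{t\cL}$ into a time-independent set of quantum channels and then invoke the measure-and-prepare construction already described before Theorem~1. First I would recall that the Lindbladian $\cL$ commuting with $\{U_g\}_{g\in G}$ means that, as a superoperator, $\cL$ lies in the commutant of the representation $g\mapsto \overline{U}_g\ox U_g$ acting on $\cB(\cH_S)\cong\cH_S\ox\cH_S$. By Schur-Weyl--type decomposition, when $U$ is an irrep this commutant is a finite-dimensional algebra whose dimension $m$ equals the number of times the trivial/multiplicity sectors occur in $\overline{U}\ox U$; more generally it decomposes into a direct sum of matrix blocks indexed by the irreps appearing in $\overline{U}\ox U$. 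Since $\cL$ preserves this block structure, so does every $\cA_t = e^{t\cL}$, and the whole semigroup is confined to the $m$-dimensional affine slice of superoperators spanned by the commutant together with the trace-preservation and Hermiticity constraints.

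Next I would argue that a finite collection of fixed channels suffices to span this slice as a \emph{convex} hull, not merely an affine one. The key observation is that the identity channel $\cI$ lies in the commutant (it is covariant for any group), and $\cA_t \to \cA_\infty$ as $t\to\infty$ for a covariant semigroup converges to a covariant channel (the projection onto the fixed-point algebra, which by covariance is again a covariant channel). So the curve $\{\cA_t\}_{t\ge 0}$ is a bounded, continuous path living in a finite-dimensional convex body of covariant CPTP maps. I would then take finitely many extreme points $\{\cE_j\}_{j=1}^K$ of the (compact, finite-dimensional) convex set of covariant channels whose convex hull contains the entire trajectory --- $K$ can be bounded by $m+1$ via Carath\'eodory once we know the trajectory sits in an $m$-dimensional convex set containing such extreme points. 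Writing $\cA_t = \sum_j p_j(t)\cE_j$ with $p_j(t)\ge 0$, $\sum_j p_j(t)=1$ for every $t$, we define the program state $\pi_t = \sum_j p_j(t)\proj{j}$ on a $K$-level register and let $\cP$ measure in the computational basis and apply $\cE_j$ on outcome $j$; this $\cP$ is manifestly CPTP and independent of $t$, so $\cL$ is quantum programmable with $d_P = K \le m+1$, where $m$ is the dimension of the commutant of $\overline{U}\ox U$.

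The continuity of the weights $p_j(t)$ --- needed for the ``continuous set of program states'' requirement in the definition --- follows because the correspondence $\cA_t \mapsto (p_j(t))_j$ can be chosen to be the solution of a linear system with coefficients analytic in $t$ (the entries of $\cA_t$ in the commutant basis are $e^{t\cL}$ matrix elements, hence analytic), so after a fixed choice of barycentric coordinates the $p_j$ are real-analytic on any interval where the simplex decomposition is non-degenerate; a compactness/partition argument patches these together into a globally continuous (indeed piecewise-analytic) family.

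The main obstacle I anticipate is the convexity step rather than the affine confinement: a priori the trajectory $\{\cA_t\}$ lives in an affine subspace of the right dimension, but one must verify it actually lies inside the convex hull of \emph{quantum} channels from the covariant set, with a \emph{uniform} finite witness set $\{\cE_j\}$ valid for all $t$ simultaneously. This is where covariance does real work: the covariant channels form a convex body whose extreme-point structure is governed entirely by the commutant of $\overline{U}\ox U$ (hence is finite-dimensional and well-understood), and the semigroup property plus $\cI\in\mathrm{conv}\{\cE_j\}$ (or the contractivity $\cA_t\to\cA_\infty$) pins the curve inside this body. Making the dimension count $d_P = $ (dimension of the commutant of $\overline{U}\ox U$) tight, and checking it matches the bound claimed in the statement, is the remaining technical point to nail down in the appendix.
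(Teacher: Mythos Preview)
Your approach is essentially the same as the paper's: confine the semigroup to the finite-dimensional convex set of $UU$-covariant channels, extract a finite time-independent decomposition $\cA_t=\sum_j p_j(t)\cE_j$, and apply the measure-and-prepare construction. The paper simply cites Theorem~25 (and, in the appendix, Lemma~16) of Gschwendtner et al.\ for the crucial fact you flag as the ``main obstacle'': the set of $UU$-covariant channels is not just a compact finite-dimensional convex body but a \emph{polytope}, so it has finitely many extreme points and every covariant channel---hence every $\cA_t$ simultaneously---lies in their convex hull. This single structural input is what closes your gap; your Carath\'eodory argument alone does not, because Carath\'eodory gives at most $m+1$ extreme points \emph{per point} and does not by itself yield a uniform finite set valid for all $t$ (a general compact convex body, e.g.\ a disk, has a continuum of extreme points). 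Once the polytope result is in hand, the $\cE_j$ are taken to be all the vertices, so $d_P$ equals the vertex count of that polytope rather than the Carath\'eodory bound $m+1$ you propose. Your discussion of the continuity/analyticity of $p_j(t)$ actually goes a step beyond the paper, which does not address this point explicitly.
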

The proof directly follows by theorem 25 of~\cite{Gschwendtner2021programmabilityof}. Briefly, the covariant condition implies a block diagonal form to the commutant space of $\overline{U} \ox U$, which forces the dynamics generated by $\cL$ lying in an invariant subspace (polytope)~\cite{Song2022positive,Yorke1967invariance} within all the quantum channels.

From corollary~\ref{coro:cptp_programmable_covariant}, we have identified that for covariant Lindbladian, the corresponding dynamics can be quantum programmable where the dimension of program states is restricted. A particular example is the isotropic depolarizing system where its dynamics can be represented as a time-dependent depolarizing channel being covariant  under the action of $\Op{SU}(d)$. In fact, symmetry provides a unifying framework to classify and constrain the dynamics of open quantum systems, shaping their spectra, steady states, and universal behaviors~\cite{Kawabata2023symmetry,Siudzinska2022phase,Snamina2020dynamical}. It also enables robust control and protected responses in noiseless subsystems  for Lindbladian dynamics~\cite{Albert2016geometry}. 

\vspace{2mm}
\textbf{\textit{Programming dynamics beyond deterministic quantum operations}}.--- The CPTP-programmability provides a distinct convenience in simulating Lindbladian dynamics. However, not every open systems can be fully storaged and retrieved using a parameterized static resource. In this section, we discuss some typical Lindbladians that can never be programmed via a single quantum channel. Let $\cL$ be a quantum-programmable Lindbladian acting on $\cD(\cH_d)$. Necessarily, there exists a quantum channel $\cE$, and a constant $\alpha \geq 0$ such that for any  $\rho\in\cD(\cH_d)$, $\cL(\rho) = \alpha( \cE(\rho) - \rho)$.
This form provides a direct condition to check whether a given $\cL$ is quantum programmable. In particular, we can identify that the following dynamics can not be programmed via a  quantum channel.
\begin{proposition}\label{prop:non-cptp_coherent}
    Any $\cL$ contains non-trivial coherent part, i.e., $H\not\propto I_d$, is not quantum programmable.
\end{proposition}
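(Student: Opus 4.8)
The plan is to contradict the structural criterion recalled just above the statement. Suppose $\cL$ is quantum programmable. Then there exist $\alpha\ge 0$ and a channel $\cE$ with $\cL(\rho)=\alpha\big(\cE(\rho)-\rho\big)$ for all $\rho$. Fix any Kraus decomposition $\cE(\cdot)=\sum_k K_k(\cdot)K_k^{\dagger}$ with $\sum_k K_k^{\dagger}K_k=\mathbb 1_d$, and rewrite $\alpha\rho=\tfrac{\alpha}{2}\{\sum_k K_k^{\dagger}K_k,\rho\}$ to obtain
\begin{equation*}
\cL(\rho)=\sum_k \alpha\Big(K_k\rho K_k^{\dagger}-\tfrac12\{K_k^{\dagger}K_k,\rho\}\Big),
\end{equation*}
which is a Lindblad generator with vanishing Hamiltonian and jump operators $\{\sqrt{\alpha}\,K_k\}_k$. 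Hence quantum programmability forces $\cL$ to admit a purely dissipative representation; contrapositively, a $\cL$ carrying a genuine coherent part cannot be quantum programmable. An equivalent and perhaps more suggestive way to run the same argument: $\cL+\alpha\,\id=\alpha\,\cE$ must be completely positive, so the whole anti-Hermitian (commutator) content of $\cL$ would have to be supplied by the $\id$ that is being subtracted off, which carries none.

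The step deserving care — and where I would put most of the writeup — is the precise meaning of ``$\cL$ contains a non-trivial coherent part, i.e.\ $H\not\propto\mathbb 1_d$.'' The Hamiltonian in a Lindblad decomposition is gauge dependent: shifting $L_k\mapsto L_k+c_k\mathbb 1_d$ feeds a commutator term back into $H$, so a nonzero Hamiltonian in the canonical (traceless-jump) form does not by itself preclude a different, admissible representation with $H=0$. I would therefore phrase the hypothesis representation-independently — \emph{no} Lindblad representation of $\cL$ has Hamiltonian proportional to $\mathbb 1_d$, equivalently the canonical Hamiltonian $H_0$ is not in $\mathrm{span}_{\mathbb R}\{\,i(L_k-L_k^{\dagger}),\,L_k+L_k^{\dagger}\,\}+\mathbb R\,\mathbb 1_d$ for the canonical jump operators $\{L_k\}$ — and under that reading the displayed computation closes the argument with nothing left to do. It also cleanly recovers the intended corollaries: a purely coherent generator $\cL(\cdot)=-i[H,\cdot]$ with $H\not\propto\mathbb 1_d$ has no jump operators at all, so the only admissible shift is trivial and it automatically satisfies the hypothesis; more generally any Hamiltonian term acting on a sector untouched by the dissipator is irremovable.

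So the main obstacle here is conceptual rather than technical: the positivity/Kraus manipulation is short, and the real work is fixing the definition of ``coherent part'' so that Proposition~\ref{prop:non-cptp_coherent} is both true and non-vacuous. I would close by exhibiting one explicit Lindbladian with an irremovable coherent term (to show the class ruled out is nonempty) and by remarking that the same necessary condition $\cL=\alpha(\cE-\id)$ is what will be reused to exclude dissipators generating amplitude damping, so it is worth isolating the consequence ``$\cL$ quantum programmable $\Rightarrow$ $\cL$ has a Lindblad form with $H=0$'' as a standalone lemma.
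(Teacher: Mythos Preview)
Your proof is correct and reaches the same conclusion as the paper, but by a more direct route. The paper (inside the appendix proof of the necessary condition) writes $\cL=\alpha(\cE-\cF)$ with both $\cE,\cF$ CPTP, fixes the jump superoperator to be $\Theta=\alpha\cE$, and then carries out an explicit eigenvalue computation on the Choi matrix of $\cF(\rho)=\rho+\tfrac{i}{\alpha}[H,\rho]$ to conclude that complete positivity of $\cF$ forces $H\propto I_d$. You bypass that spectral analysis entirely: once $\cL=\alpha(\cE-\id)$ is granted, your one-line Kraus rewriting exhibits a purely dissipative Lindblad form directly. The paper's route does have the side benefit of isolating, as a standalone lemma, that $\rho\mapsto -i[H,\rho]$ is completely positive iff $H\propto I_d$; yours is shorter but does not produce that lemma along the way.

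Your discussion of the gauge ambiguity is a genuine improvement over the paper's treatment. The paper states the hypothesis as ``$H\not\propto I_d$'' without specifying in which Lindblad representation $H$ is taken, and its argument --- like yours --- only establishes that \emph{some} representation has trivial Hamiltonian, which is precisely the issue you flag. Your representation-independent reading (no Lindblad form of $\cL$ has $H\propto I_d$) is the correct one for the proposition to be both true and non-vacuous; the paper does not address this point.
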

The inherent Hamiltonian evolution would introduce the imaginary spectrum of $\cL$ which can not be simply programmed via a physical channel. This naturally extends the previous results on the no-go theorem of storing and retrieving the quantum phase gate~\cite{Sedlak2020probabilistic,Vidal2002storing} to the scenarios of the one-parameter group evolutions generated by any Hamiltonian involving the interaction with thermal bath.

A second example that is not quantum programmable is the Lindbladian of the spontaneous emission system.
\begin{proposition}\label{prop:non-cptp_ad}
    Let $\cL$ be defined by a single jump operator $L = \sqrt{\gamma}\ketbra{0}{1}$ where $\gamma$ is the emission rate. Then, $\cL$ is not quantum programmable.
\end{proposition}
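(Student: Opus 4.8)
The plan is to argue by contradiction from the structural necessary condition for quantum programmability stated just above. Suppose $\cL$ were quantum programmable; then there would exist a quantum channel $\cE$ and a constant $\alpha\ge 0$ with $\cL(\rho)=\alpha\big(\cE(\rho)-\rho\big)$ for every $\rho\in\cD(\cH_d)$. The generator here is nonzero (the emission rate $\gamma>0$), so the degenerate case $\alpha=0$ is impossible and $\cE=\id+\alpha^{-1}\cL$ must be completely positive. I would derive a contradiction by computing this candidate $\cE$ explicitly and showing that its Choi operator fails to be positive semidefinite for \emph{every} $\alpha>0$.

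First I would expand $\cL$ in Lindblad form for $L=\sqrt{\gamma}\,\ketbra{0}{1}$, namely $\cL(\rho)=\gamma\big(\ketbra{0}{1}\rho\ketbra{1}{0}-\tfrac{1}{2}\{\ketbra{1}{1},\rho\}\big)$, and record its action on matrix units: $\cL(\ketbra{0}{0})=0$, $\cL(\ketbra{1}{1})=\gamma(\ketbra{0}{0}-\ketbra{1}{1})$, and $\cL(\ketbra{0}{1})=-\tfrac{\gamma}{2}\ketbra{0}{1}$ (together with the Hermitian conjugate). From these I assemble the Choi operator $J(\cL)=\sum_{i,j}\cL(\ketbra{i}{j})\ox\ketbra{i}{j}$ and add $J(\id)=\sum_{i,j}\ketbra{i}{j}\ox\ketbra{i}{j}$ (the unnormalized maximally entangled projector) to obtain $J(\cE)=J(\id)+\alpha^{-1}J(\cL)$.

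The key step is the observation that, in the basis $\{\ket{00},\ket{11},\ket{01},\ket{10}\}$, $J(\cE)$ is block diagonal: a manifestly positive diagonal block $\mathrm{diag}(\gamma/\alpha,\,0)$ on $\spn\{\ket{01},\ket{10}\}$, and a $2\times 2$ block on $\spn\{\ket{00},\ket{11}\}$ with diagonal entries $1$ and $1-\gamma/\alpha$ and off-diagonal entry $1-\gamma/(2\alpha)$. The determinant of the latter is $1\cdot(1-\gamma/\alpha)-(1-\gamma/(2\alpha))^2=-\gamma^2/(4\alpha^2)<0$, so $J(\cE)\not\ge 0$, contradicting complete positivity of $\cE$. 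Hence no such $\cE$ and $\alpha$ exist, and $\cL$ is not quantum programmable. I would also note that, in contrast with Proposition~\ref{prop:non-cptp_coherent}, no spectral obstruction is available here: the Liouvillian eigenvalues of this $\cL$ are all real ($0,\,-\gamma/2,\,-\gamma/2,\,-\gamma$), and $\cE=\id+\alpha^{-1}\cL$ is even positive on states once $\alpha\ge\gamma$, so the argument genuinely requires complete positivity and the Choi-operator computation.

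The main obstacle is essentially bookkeeping: one must verify that all off-block Choi entries vanish (so the block decomposition is legitimate) and keep track of the fact that $\cL$ attenuates the $\ket{00}$–$\ket{11}$ Choi coherence only to first order in $\alpha^{-1}$ while feeding no compensating weight into $\ketbra{00}{00}$ — precisely because $\cL(\ketbra{0}{0})=0$ — which is what forces the $\{00,11\}$ block to be indefinite for all $\alpha>0$.
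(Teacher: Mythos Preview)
Your proposal is correct and follows essentially the same route as the paper: invoke the necessary condition $\cL=\alpha(\cE-\cI)$, compute the Choi operator of the candidate $\cE=\cI+\alpha^{-1}\cL$, reorder to the basis $\{\ket{00},\ket{11},\ket{01},\ket{10}\}$ to expose block-diagonal structure, and show the $\{00,11\}$ block has determinant $-\gamma^2/(4\alpha^2)<0$. The only cosmetic difference is your Choi convention (output system in the first tensor factor rather than the second), and your closing remarks on why a spectral argument fails here are a nice addition not present in the paper.
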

This model is used to capture the photonic loss which is essential for understanding how dissipation and decoherence shape light–matter interactions, enabling accurate modeling of open quantum systems and non-Hermitian dynamics. The corresponding semigroup is the famous amplitude damping channel in quantum information theory. 
The detailed proof of the necessary condition can be found in Appendix~\ref{appendix:program_beyond_physical_channel}, and the further proofs for proposition~\ref{prop:non-cptp_coherent} and~\ref{prop:non-cptp_ad} can be found in Appendix~\ref{appendix:isolated_system} and~\ref{appendix:proofs_of_photonic_loss}, respectively. 

\begin{figure}[!ht]
    \centering
    \includegraphics[width=0.9\linewidth]{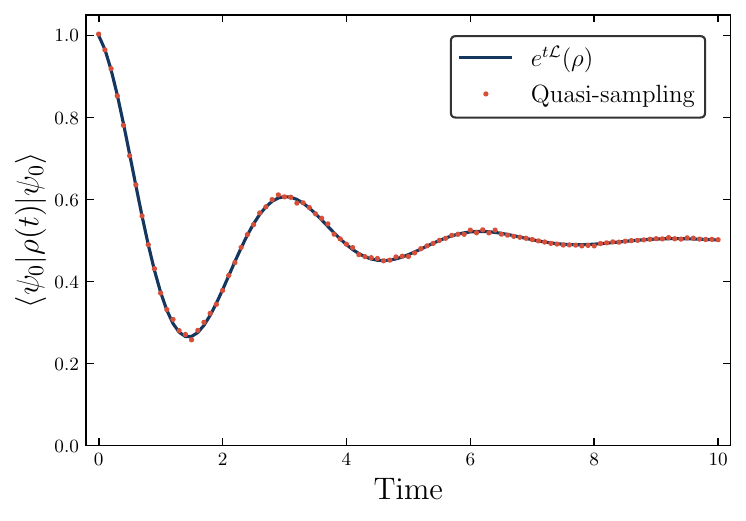}
    
    \caption{Examination of the accuracy of the quasi-sampling method for programming the SWAP–dephasing Lindbladian with $\lambda=0.5$. The solid line represents the exact analytical result, while the scattering points denote quasi-sampled estimates for each time point. The simulation is performed over the time interval $T\in[0,10]$, demonstrating excellent agreement between quasi-sampling and the exact  dynamics.}
    \label{fig:time_evolution_Z}
\end{figure}

Desipte the non-CPTP programmability, many quantum computational tasks ultimately end with measurements, and the goal is to estimate expectation values rather than access the full output state. This motivates the development of the quasi-quantum methods, in which the action of a non-physical map $\cE$ can be `virtually' simulated via Monte-Carlo sampling of its quasi-decomposition of physical CPTP channels. As a result, the expectation value of $\tr(O\cE(\rho))$ can be estimated by the post-processing and reweighting~\cite{Jiang2021physical}. We take advantage of this perspective by considering HPTP maps as our programming primitives that can be implemented through quasi-quantum technique.
\begin{proposition}\label{prop:hptp_protocol_arbitrary_H}
Let $\cL = i\Op{ad}_H$ for some $d$-dimensional Hermitian operator $H$ with $K$ distinct eigenvalues $(\lambda_1, \cdots, \lambda_K)$, where $\Op{ad}_H = [H,\cdot]$. Then, there exists an quasi-sampling protocol with the corresponding program states $\pi_t$ of dimension $K$ to exactly program  $\cL$.
\end{proposition}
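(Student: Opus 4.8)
The plan is to realize the unitary‑conjugation semigroup generated by $\cL=i\,\Op{ad}_H$ directly from the spectral data of $H$, encoding the relevant phases into a $K$‑dimensional pure program state and designing a single, $t$‑independent HPTP retrieval map. First I would diagonalize $H=\sum_{k=1}^{K}\lambda_k P_k$ with orthogonal spectral projectors $P_k$ (so $\sum_k P_k=I_d$), so that $e^{t\cL}(\rho)=e^{iHt}\rho\,e^{-iHt}=\sum_{k,l=1}^{K}e^{i(\lambda_k-\lambda_l)t}\,P_k\rho P_l$. The only $t$‑dependence sits in the phases $e^{i(\lambda_k-\lambda_l)t}$, so I would take $\cH_P\simeq\CC^{K}$ with orthonormal basis $\{\ket k\}_{k=1}^{K}$ and set $\pi_t=\ketbra{\phi_t}{\phi_t}$ with $\ket{\phi_t}=\tfrac1{\sqrt K}\sum_{k=1}^{K}e^{i\lambda_k t}\ket k$. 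This is a genuine pure state of dimension $K$, analytic in $t$, with $\bra k\pi_t\ket l=\tfrac1K e^{i(\lambda_k-\lambda_l)t}$.

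Next I would build the retrieval map as $\cP=\cP_0+\cR$. The main part $\cP_0(X)=K\sum_{k,l}P_k\big[(\id_S\ox\bra k)\,X\,(\id_S\ox\ket l)\big]P_l$ extracts the $(k,l)$ block of the program register, rescales by $K$, and sandwiches $\rho$ between $P_k$ and $P_l$; a direct computation gives $\cP_0(\rho\ox\pi_t)=\sum_{k,l}e^{i(\lambda_k-\lambda_l)t}P_k\rho P_l=e^{t\cL}(\rho)$, and $\cP_0$ is Hermitian‑preserving because taking the adjoint of the defining sum merely swaps $k\leftrightarrow l$. The snag is that $\cP_0$ is not trace‑preserving as a superoperator on all of $\cB(\cH_S\ox\cH_P)$: one computes $\tr[\cP_0(X)]=K\,\tr\!\big[X\sum_m P_m\ox\ketbra mm\big]\neq\tr[X]$ in general. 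I would repair this with the fixed correction $\cR(X)=\omega_0\,\tr\!\big[X\big(I_{SP}-K\sum_m P_m\ox\ketbra mm\big)\big]$ for any fixed output state $\omega_0$. Then $\tr[\cP(X)]=\tr[\cP_0(X)]+\tr[\cR(X)]=\tr[X]$, so $\cP$ is trace‑preserving; $\cR$ is Hermitian‑preserving since the bracketed operator is Hermitian; and the key identity is that $\cR$ kills the programming inputs, $\tr\!\big[(\rho\ox\pi_t)\big(I_{SP}-K\sum_m P_m\ox\ketbra mm\big)\big]=1-\sum_m\tr[\rho P_m]=0$, using $\bra m\pi_t\ket m=1/K$ and $\sum_m P_m=I_d$. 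Hence $\cP(\rho\ox\pi_t)=e^{t\cL}(\rho)$ exactly for every $\rho$ and every $t$.

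Finally I would observe that $\cP$ is HPTP but, for $H\not\propto I_d$, not completely positive — by Proposition~\ref{prop:non-cptp_coherent} no CPTP retrieval can exist, so the quasi‑quantum relaxation is genuinely required — and then invoke the physical‑implementability decomposition of HPTP maps into quasi‑probabilistic mixtures of CPTP maps~\cite{Jiang2021physical} to conclude that $\cP$ is realizable by Monte‑Carlo quasi‑sampling. Thus $\{\cP,\pi_t\}_t$ is an exact quasi‑quantum programming protocol for $\cL$ with program dimension $d_P=K$, as claimed.

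I expect the only genuine obstacle to be trace‑preservation: the naive ``phase‑extraction'' map $\cP_0$ already reproduces $e^{t\cL}$ on the product inputs $\rho\ox\pi_t$, but it is not a legitimate (H)PTP map on all of $\cB(\cH_S\ox\cH_P)$, so the substance of the argument is choosing the correction $\cR$ to simultaneously restore global trace‑preservation, stay Hermitian‑preserving, and vanish on precisely the inputs the protocol uses. The spectral decomposition, the analyticity of $t\mapsto\pi_t$, and the HPTP‑to‑quasi‑sampling reduction are all routine.
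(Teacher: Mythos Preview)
Your proof is correct and shares the paper's essential idea: the same normalized program state $\ket{\phi_t}=\tfrac1{\sqrt K}\sum_k e^{i\lambda_k t}\ket k$ and the same ``phase--extraction'' core map $\cP_0(X)=K\sum_{k,l}P_k\,(\id_S\ox\bra k)X(\id_S\ox\ket l)\,P_l$, which already reproduces $e^{t\cL}$ on the product inputs. The only substantive difference is the trace--preservation repair. The paper pre--composes $\cP_0$ with $\cI_S\ox\cM$, where $\cM(\sigma)=\tfrac{I_K}{K}\tr\sigma+\sigma-\Delta(\sigma)$ replaces the diagonal of the program register by the uniform one; this makes the composite TP on all inputs while acting as the identity on $\pi_t$ (since $\Delta(\pi_t)=I_K/K$). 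You instead add the rank--one correction $\cR(X)=\omega_0\,\tr\!\big[X(I_{SP}-K\sum_m P_m\ox\ketbra mm)\big]$, which globally restores TP and vanishes on $\rho\ox\pi_t$. Both fixes are equally valid; yours is arguably the more elementary additive patch, while the paper's compositional form fits more naturally into later cost estimates via the subadditivity of $\|\cdot\|_\diamond$ under composition.
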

Proposition~\ref{prop:hptp_protocol_arbitrary_H} extends the results from the qubit gate example in~\cite{Vidal2002storing}. It shows that the dimension of the program states for programming a purely coherent $\cL$ is determined by the number of degenerate spaces of $H$. With the known spectral information of $H$, a quasi-quantum protocol can be constructed for any purely coherent dynamics. 

For the photonic loss system, we have taken the ideas from the circuit knitting technique~\cite{Piveteau2023circuit}, for which the program states consists of $6$ parameterized components. 
\begin{proposition}\label{hptp_protocol_ad}
Let $\cL$ be defined by a single jump operator $L = \sqrt{\gamma}\ketbra{0}{1}$ where $\gamma$ is the emission rate. Then, there exists an exact quasi-sampling protocol with the corresponding program states $\pi_t$ of dimension $12$.
\end{proposition}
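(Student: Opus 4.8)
The plan is to realize the whole amplitude–damping semigroup $\cA_t=e^{t\cL}$ as a convex mixture of a \emph{fixed} family of HPTP maps, with the mixture weights carried by the program state, and then to assemble the retrieval map from that mixture in the manifestly trace‑preserving form $(1+\gamma_0)\cN_+-\gamma_0\cN_-$ of Ref.~\cite{Jiang2021physical}. First I would write things out explicitly: with $p(t)=1-e^{-\gamma t}$, the channel $\cA_t$ has Kraus operators $K_0=\ketbra{0}{0}+\sqrt{1-p(t)}\,\ketbra{1}{1}$, $K_1=\sqrt{p(t)}\,\ketbra{0}{1}$, so its Pauli transfer matrix scales $X,Y$ by $\sqrt{1-p(t)}$ and has $Z\!\mapsto\!Z$ entry $1-p(t)$, $I\!\mapsto\!Z$ entry $p(t)$. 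The single feature that defeats CPTP programmability (Proposition~\ref{prop:non-cptp_ad}) is that the coherence contracts only as $\sqrt{1-p}$, strictly slower than the $1-p$ forced by any CPTP interpolation between the identity and the relaxation channel; a ``coherence–restoring'' non‑physical primitive is therefore unavoidable, and I would build one from a circuit‑knitting gadget.

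Concretely, I would produce twelve fixed HPTP maps $\cH_1,\dots,\cH_{12}$ and real‑analytic weights $p_1(t),\dots,p_{12}(t)\ge 0$ with $\sum_\alpha p_\alpha(t)=1$ such that $\cA_t=\sum_\alpha p_\alpha(t)\,\cH_\alpha$ for every $t\ge 0$. The $\cH_\alpha$ would be drawn from the six elementary measure‑and‑prepare primitives built on the $X,Y,Z$ eigenstates that underlie space‑like wire cutting~\cite{Piveteau2023circuit}, each used once with a positive and once with a negative multiple — this is the ``six parameterized components'', and it is the source of the dimension $12$. The work here is to check that (i) a single such family reproduces the entire curve rather than one $\cA_t$ at a time, (ii) the weights $p_\alpha(t)$ are non‑negative and analytic — they come out as polynomials in $a:=e^{-\gamma t/2}$ — and (iii) the decomposition survives the whole range $p\in[0,1)$ and the limit $p\to 1$; geometrically (ii)–(iii) amount to verifying that the two‑parameter image $(a,a^2)$ of the transfer matrix stays inside the twelve‑vertex hull.

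Given this, the processor assembles cleanly. Let $\gamma_0\ge 0$ be a common bound on the physical‑implementability cost of the $\cH_\alpha$, write each as $\cH_\alpha=(1+\gamma_0)\Phi^+_\alpha-\gamma_0\Phi^-_\alpha$ with $\Phi^\pm_\alpha$ CPTP (padding the cheaper ones), take $\cH_P=\CC^{12}$, put $\pi_t=\sum_\alpha p_\alpha(t)\ketbra{\alpha}{\alpha}$, and let $\cN_\pm$ be the CPTP maps on $\cH_S\ox\cH_P$ that measure the program register and apply $\Phi^\pm_\alpha$ on outcome $\alpha$. Setting $\cP=(1+\gamma_0)\cN_+-\gamma_0\cN_-$, the map $\cP$ is Hermitian‑preserving by construction and trace‑preserving because $\cN_\pm$ are and $(1+\gamma_0)-\gamma_0=1$, hence HPTP; moreover $\cP(\rho\ox\pi_t)=\sum_\alpha p_\alpha(t)\big[(1+\gamma_0)\Phi^+_\alpha-\gamma_0\Phi^-_\alpha\big](\rho)=\sum_\alpha p_\alpha(t)\cH_\alpha(\rho)=\cA_t(\rho)$ for all $t\ge 0$, i.e.\ exact programming with $d_P=12$. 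Analyticity of $t\mapsto\pi_t$ is inherited from the $p_\alpha(t)$, so the family meets the regularity requirement in the definition of programmability; alternatively one may establish exactness on a small interval and invoke the analytic‑continuation principle.

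The main obstacle is the middle step, in two parts. One is finding a single fixed HPTP family whose convex hull contains the full amplitude‑damping trajectory — not each point separately, which is trivial since each $\cA_t$ is CPTP, but the whole curve simultaneously with a fixed $\gamma_0$; this is exactly what Proposition~\ref{prop:non-cptp_ad} forbids when $\gamma_0=0$. The more delicate point is trace preservation of $\cP$ as a fixed linear map on all of $\cH_S\ox\cH_P$: a naive ``measure the program and rescale'' retrieval map is trace‑preserving only on the slice $\{\rho\ox\pi_t\}$, not globally, and it is precisely the rigid $(1+\gamma_0)\cN_+-\gamma_0\cN_-$ structure with a single common prefactor that repairs this while pinning the vertex count at $12$. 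The remaining estimates — non‑negativity of the $p_\alpha(t)$, the $p\to 1$ endpoint, continuity in $t$ — are routine once the gadget is specified.
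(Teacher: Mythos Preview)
Your high-level strategy—express $\cA_t$ as a convex combination $\sum_\alpha p_\alpha(t)\cH_\alpha$ of finitely many \emph{fixed} HPTP maps with non-negative analytic weights, encode the weights in a diagonal program state, and build the retrieval map by measuring the program register and applying a common $(1+\gamma_0)\Phi^+_\alpha-\gamma_0\Phi^-_\alpha$ quasi-implementation—is sound and would yield a valid HPTP programming protocol. But this is not the paper's route, and your version of the middle step has a genuine gap.

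The paper does not decompose $\cA_t$ into a convex mixture of fixed HPTP maps with time-varying classical weights. Instead it realizes the amplitude-damping channel via the standard two-qubit circuit $\mathrm{CNOT}\cdot\mathrm{CR}_Y(\theta)$ with $\theta=2\arcsin\sqrt{p(t)}$, expands $\mathrm{CR}_Y(\theta)$ as $\mathrm{CNOT}\,(I\otimes R_Y(-\theta/2))\,\mathrm{CNOT}\,(I\otimes R_Y(\theta/2))$, and then applies the six-term circuit-cutting decomposition to the \emph{inner} $\mathrm{CNOT}$. The time-dependence is carried not by scalar weights but by a \emph{qubit} state $\sigma_j(\theta)$ obtained by pushing $R_Y(\pm\theta/2)$ and the $j$th cut primitive onto the ancilla $\ket{0}$; the program state is $\pi_\theta=\sum_{j=1}^{6}p_j\,\sigma_j(\theta)\otimes\ketbra{j}{j}$ with \emph{fixed} probabilities $p_j=|\alpha_j|/\kappa$. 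The dimension $12=2\times 6$ is qubit $\otimes$ six-level classical register, not twelve classical weights.

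Your gap is the actual construction of the twelve $\cH_\alpha$. ``Six measure-and-prepare primitives, each with a positive and a negative multiple'' does not give trace-preserving maps—a nonunit scalar multiple of a CPTP map is never TP—so as stated the family $\{\cH_\alpha\}$ is not well-defined, and you never verify that the amplitude-damping curve sits inside the convex hull. More to the point, if you carry out the convex-hull approach correctly you will find that far fewer than twelve vertices suffice: the PTM of $\cA_t$ depends polynomially of degree $2$ on $a=e^{-\gamma t/2}$, so the Bernstein basis $(1-a)^2,\,2a(1-a),\,a^2$ already gives $\cA_t=\sum_{k=0}^{2}B_k(a)\cH_k$ with three fixed HPTP vertices ($\cH_0$ the replacement onto $\ket{0}$, $\cH_2=\cI$, and $\cH_1$ an explicit non-CP HPTP map). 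Thus your scheme, once completed, proves a \emph{stronger} statement with $d_P=3$; the number $12$ is not produced by your construction but is an artifact of the paper's circuit-cutting realization.
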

The entire processor requires a quasi-sampling and post-selection on the measurement outcomes from the ancillary system. Above all, the HPTP maps demonstrate a strong advantage in programming general quantum dynamics compared with physical channels. All the detailed derivations of proposition~\ref{prop:hptp_protocol_arbitrary_H} and~\ref{hptp_protocol_ad} and protocol circuits have been collected in Appendix~\ref{appendix:isolated_system} and~\ref{appendix:proofs_of_photonic_loss}, respectively. 

A more interesting example of a two-qubit open-system dynamics combines coherent exchange interaction with dissipative collective dephasing. The competition between SWAP-generated exchange interaction and Bell-basis dephasing provides a minimal setting for studying information scrambling and chaotic relaxation in noisy quantum systems~\cite{Zanardi2021}. Its Lindbladian reads~\cite{Zanardi2021}, $\mathcal{L} = i\Op{ad}_S + \lambda \left( \mathcal{D}_{\mathbb{B}} - \cI \right)$,
where $S$ is the SWAP operator, $\mathcal{D}_{\mathbb{B}}$ is the completely dephasing channel
in the Bell basis $\mathbb{B}$, and $\lambda$ is the dephasing strength. Since $[\Op{ad}_S, \cD_{\mathbb{B}}\,]=0$, the resulting quantum dynamical can be expressed in, $\mathcal{E}_t 
= e^{-\lambda t} e^{it\Op{ad}_S}
+ \bigl(1-e^{-\lambda t}\bigr) \cD_{\mathbb{B}}$,
where $e^{it\Op{ad}_S}(\rho) = e^{itS}\rho e^{-itS}$. By adding ancillary system, a quasi-quantum programming protocol can be constructed for the system combined with proposition~\ref{prop:hptp_protocol_arbitrary_H}.

We further perform numerical simulations to examine the performance of the quasi-quantum protocol for reproducing the target time evolutions of the two-qubit chaotic Lindbladian. Setting the initial state $\ket{\psi_0}=\ket{01}$, the overlap between the initial state and the evolved state can be theoretically computed $\bra{\psi_0}\rho(t)\ket{\psi_0}= \frac{1}{2}(1 + e^{-\lambda t}\cos(2t))$. As shown in figure~\ref{fig:time_evolution_Z}, our programming protocol demonstrates an excellent agreement with the exact values, indicating the potential of our framework as a computational resource to simulate and control open-systems.

\vspace{2mm}
\textbf{\textit{Quantifying programming cost of Lindbladian}.}--- Quantify the cost of computational resources of programming quantum dynamics is important in both operational and practical senses. In the previous literature, the cost, or the resources of programming quantum channels is reasonably determined by the minimum dimension of the program states~\cite{Gschwendtner2021programmabilityof,Yoshida2025quantum} required for achieving the programming tasks. Though, it is usually hard to estimate and demand case-by-case analysis. We take the inspiration from the physical implementation of
HPTP maps using quasi-technique~\cite{Jiang2021physical}. In particular, the cost of realizing quasi-probability decomposition of any linear map is quantified by the sampling overhead, defined as the $1$-norm of the decomposition coefficients. This cost has broad applications in circuit knitting, quantum error mitigation and many-body dynamical simulations~\cite{Piveteau2022quasiprobability,Zhao2024retrieving,Gherardini2024quasiprobabilities}. Meanwhile, the diamond norm of any TP maps has been proved to be equal to their implementation cost~\cite{Regula2021operational}.

We define the \textit{programming cost} of $\cL$ as follows. Let $\epsilon\geq 0$, the $\epsilon$-error \textit{programming cost} of the superoperator $\cL$ with respect to the program states $\pi_t$ for $t\in[0,T]$ is defined as
\begin{equation}\label{eq:primal_program_pga}
\begin{aligned}
    \gamma_{\epsilon}(\pi_t,\cL, T)&:=\log \min_{\cP \in \Omega}\Big\{||\cP||_\diamond\Big\arrowvert \\&\forall t\in[0,T], \frac{1}{2}\left\|\cP(\cdot \otimes\pi_t) - e^{t \cL}\right\|_{\diamond} \leq \epsilon\Big\}.
\end{aligned}
\end{equation}
Here, we take the binary logarithm in the definition. Given the construction of $\pi_t$, the cost~\eqref{eq:primal_program_pga} can be estimated via solving convex optimization by sampling a sufficient number of time steps. Notice that, there is no guarantee for the cost to be strictly bounded, and different choices of $\pi_t$ can significantly vary the cost values. We say $\pi_t$ is \textit{veritable} if $\gamma_{\epsilon}(\pi_t, \cL,T) < \infty$, and $\cL$ is $\Omega_{\epsilon}$-programmable within $[0,T]$. One can show that the cost $\gamma_{\epsilon}(\pi_t, \cL,T)$ is monotone in $T$. Therefore, if the programming cost can be bounded above by taking $T\rightarrow \infty$, the entire dynamics generated by $\cL$ can be $\Omega_{\epsilon}$-programmable, and we can switch the order of limit and minimization to define, $\gamma_{\epsilon}(\pi_t,\cL) :=\lim_{T\rightarrow \infty}\gamma_{\epsilon}(\pi_t,\cL, T)$.  

\begin{figure}[!ht]
    \centering
    \includegraphics[width=.45\textwidth]{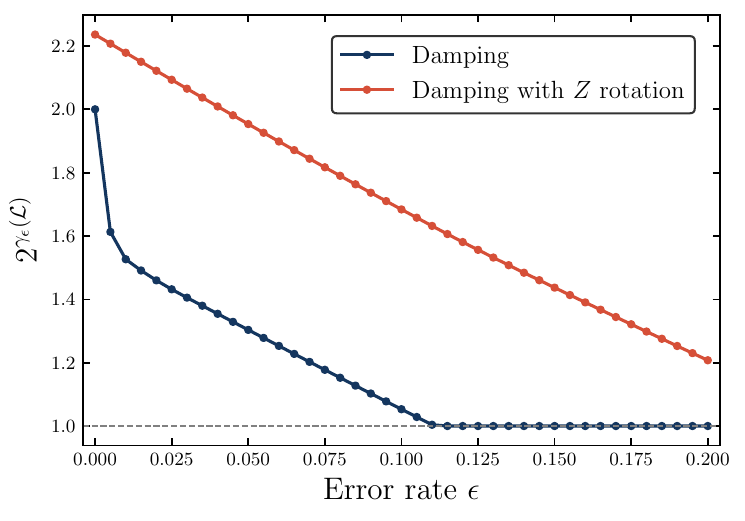}

    \caption{Numerical estimations on the port-based $\epsilon$-error programming cost $\gamma_{\epsilon}$ of two Lindbladians with respect to the $\epsilon\in[0,0.2]$. The blue and red curves correspond to programming the damping Lindbladian and the damping Lindbladian supplemented with a $Z$ rotation, respectively. The time interval is fixed to $[0,10]$.}
    \label{fig:error_threshold}
\end{figure}

In general, searching for the veritable set of program states can be challenging which requires an infinite precision of solving convex optimization programming~\cite{Banchi2020convex}. To ensure the physical feasibility of preparing the program states, instead of encountering the extreme theoretical conditions, we make the assumption that the states $\pi_t$ must be \textit{analytically} defined, i.e., the map $\pi:[0,\infty)\rightarrow \cD(\cH_P)$ is \textit{(real) analytic}~\cite{Krantz2002primer,Ahlfors1979complex}. Since the exponential map $t\rightarrow e^{tM}$ is analytic. The duality of linear maps defined on the Hilbert space ensures the map $t\rightarrow e^{t \cL}$ is also analytic and due to the Choi isomorphism. 

An important choice of analytic program states is to use Choi states $J(e^{t \cL})$ at time $t$, which is a well-known configuration in the channel simulation problem, and is closely related to the concept of \textit{teleportation-simulatability} in the quantum information theory~\cite{Kaur2017amortized,Bennett1996mixed,Horodecki1999general,Chiribella2009realization}. With the importance of Choi states in the studies of quantum channels and dynamical resources~\cite{Gour2019quantify,Regula2021one,Hsieh2021communication}, we, particularly, write $\gamma_{\epsilon}(\cL)$ in the expression of programming cost if the $\pi_t = J(e^{t \cL})$. We also omit writing $\epsilon$ as a subscript when taking $\epsilon=0$, which corresponds to the exact programming scenario. The numerical results for investigating the effects on the cost values with respect to the error threshold $\epsilon$ have been illustrated in figure~\ref{fig:error_threshold}. With the port-based program states, we examine the cost variation from a purely photonic loss system and the one with an intrinsic $Z$-rotation. One can observe the latter system is much harder to program than the purely damping system.

Given any Lindbladian $\cL$ acting on $\cD(\cH_d)$, the programming cost has the following properties:
    \begin{enumerate}
        \item (\textbf{Analytic continuation}) Let $\pi_t$ be the analytic program states defined for $t\geq 0$. Then, $\gamma(\pi_t, \cL) = \gamma(\pi_t, \cL, T)$ for any $T>0$.
        \item (\textbf{Faithfulness}) $\gamma(\pi_t, \cL) \geq 0$ and $\gamma(\pi_t, \cL) = 0$ iff. $\cL$ is CPTP-programmable.
        \item For any invertible quantum noise $\cN$ acting on $\cD(\cH_P)$, $\gamma(\pi_t, \cL) \leq \gamma(\cN(\pi_t), \cL) \leq \gamma(\pi_t, \cL) + \log\|\cN^{-1}\|_{\diamond}$.
    \end{enumerate}
The faithfulness of programming cost delivers that when $\gamma(\pi_t, \cL) = 0$, there exists a  quantum channel and the corresponding program states $\pi_t$ to simulate the dynamics generated by $\cL$ for all time. In the practical scenario, when the program states are unavoidably 
influenced by the invertible noise, the programming cost can increase. Interestingly, any closed evolution $\cU$ on the $\cH_P$ can keep the programming cost invariant since $\|\cU^{-1}\|_{\diamond} = 1$. The detailed proofs on the properties of the programming cost can be found in Appendix~\ref{appendix:properties_cost}.


\textbf{\textit{Concluding remarks}}.--- We have introduced a general framework for programming open quantum system dynamics by combining physically implementable retrieval maps with analytic, time-varying program states. Our theoretical analysis led to an operational notion and properties of programming cost. We further identified symmetry and stochastic structure as central resources for quantum programmability of open quantum systems, while a necessary condition for quantum programmability rules out broad classes. In terms of applications, our explicit protocols with finite, quantified sampling cost indicate that programmable open-system architectures can become practical tools for semigroup simulation and control. This motivates further analytical refinements and concrete circuit constructions tailored to noisy quantum platforms.

One may already notice that $\mathcal{L}$ is quantum programmable if the corresponding invariant subspace of the dynamical equation $\frac{d\mathcal{A}_t}{dt} = \mathcal{L}\circ \mathcal{A}_t$ with $\mathcal{A}_0 = \mathcal{I}$ can be fully covered by a fixed polytope in the space of quantum channels. Determining nontrivial invariant subspaces of any dynamical system is \textit{undecidable}~\cite{Yorke1967invariance}. Further research seeks an efficient algorithm to identify the invariant subspaces of the dynamics, or even for a restricted set of input states. On the other hand, the optimality of these HPTP programming protocols in terms of the programming cost is not settled. Apart from the Choi-state resource, an efficient construction of other analytic program states is an interesting direction that needs further exploration.

\textbf{\textit{Acknowledgments.}}---
This work was partially supported by the National Key R\&D Program of China (Grant No.~2024YFB4504004), the National Natural Science Foundation of China (Grant. No.~12447107), the Guangdong Provincial Quantum Science Strategic Initiative (Grant No.~GDZX2403008, GDZX2403001), the Guangdong Provincial Key Lab of Integrated Communication, Sensing and Computation for Ubiquitous Internet of Things (Grant No. 2023B1212010007).


\newpage
\bibliography{ref}

\numberwithin{equation}{section}
\renewcommand{\theequation}{S\arabic{equation}}
\renewcommand{\theproposition}{S\arabic{proposition}}
\renewcommand{\thedefinition}{S\arabic{definition}}
\renewcommand{\thefigure}{S\arabic{figure}}

\newpage
\vspace{2cm}
\onecolumngrid
\vspace{2cm}

\begin{center}
\large{\textbf{Supplemental Material for
`Programmable Open Quantum Systems'}}
\end{center}

In this Supplemental Material, we offer detailed proofs of the theorems and propositions in the manuscript `Programmable Open Quantum Systems'. In Section~\ref{appendix:Notations and preliminaries}, we will introduce the basics of master equation, quantum dynamical semigroups and the Liouville representation~\ref{appendix:lindblad_to_choi}. We then, review the concepts of physical implementability of general linear maps in~\ref{appendix:physical_implementability}. In Section~\ref{appendix:programmability of oqs}, we generally discuss the programmability of open quantum systems by demonstrating the properties of the programming cost in~\ref{appendix:properties_cost}. We particular discuss the programmability of Lindbaldian dynamics with the Choi state resources in~\ref{appendix:choi_state_resource}. In Section~\ref{sec:program_invariant_subspaces}, we discuss the several cases of Lindbladians that can be physically programmed with a vanishing programming cost. At last, in Section~\ref{appendix:program_beyond_physical_channel}, we discuss the cases stated in the main contents, including the Lindbladians that is not CPTP-programmable. Furthermore, we prove the HPTP-programmability of these dynamics by constructing the feasible protocols. In~\ref{appendix:isolated_system}, we discuss the details of programming Lindbladians of isolated systems. In~\ref{appendix:proofs_of_programming_dephasing_exchange}, we discuss the details about a Lindbaldian with both coherent and dissipative parts, serving as the model examined in the numerical experiment. In~\ref{appendix:proofs_of_photonic_loss}, we discuss an HPTP-programmable Lindbladian corresponding to the photonic loss system, and is purely dissipative.

\section{Notations and preliminaries}\label{appendix:Notations and preliminaries}

Let $\cH_d$ be a $d$-dimensional Hilbert space. Throughout the entire manuscript, we label the total programming system with the  associated Hilbert spaces $\cH_S$ and $\cH_P$, i.e., the principal and the program state spaces respectively, with the corresponding dimensions $d_S$ and $d_P$. We use $\cB(\cH_d)$ to represent the set of all bounded linear operators acting on system $S$, and $\cP(\cH_d)$ to represent the set of Hermitian and positive semidefinite operators. For any linear operator $A\in \cB(\cH_d)$, we denote $\overline{A}$ as the complex conjugation of $A$, $A^T$ as the tranpose of $A$. The operation $A^{\dagger}$ is, therefore, defined as taking both complex conjugation and transpose to the original operator $A$, denoted as $A^{\dagger}$. A linear operator $\rho \in \cP(\cH_d)$ is considered a density operator of a quantum state if its trace is equal to one. We denote the set of all density operators on system $S$ as $\cD(\cH_d)$. 
A pure quantum states $\psi$ is a rank-one density operator. In this manuscript, we use ket-bra notation to denote the state vector $\ket{\psi} \in \CC^d$ where $\psi = \ketbra{\psi}{\psi}$.
The maximally entangled state of dimension $d\otimes d$ is denoted as $\Phi_d=1/d\sum_{i,j=0}^{d-1} \ketbra{ii}{jj}$. The trace norm of $\rho$ is denoted as $\|\rho\|_1 = \tr(\sqrt{\rho^\dagger \rho})$. We write a general linear map that transforms any operators from $\cB(\cH_A)$ to $\cB(\cH_B)$ as $\cN$, (or $\cN_{A\rightarrow B}$), and denote the set of all linear maps from $\cB(\cH_A)$ to $\cB(\cH_A)$ as $\mathscr{L}(A\rightarrow B)$. In particular, the set of all \textit{Hermitian-preserving and trace-preserving} (HPTP) linear maps, denoted as $\hptp(A\rightarrow B)$. The set of all quantum channels, denoted as $\cptp(A\rightarrow B)$, contains the linear transformations from $\cB(\cH_A)$ to $\cB(\cH_B)$ that is \textit{completely positive and trace-preserving} (CPTP). Clearly, $\cptp(A\rightarrow B) \subseteq \hptp(A\rightarrow B) \subseteq \mathscr{L}(A\rightarrow B)$. 

The Choi isomorphism of the linear map $\cN$ is represented as $J(\cN) \in \cB(\cH_A\ox \cH_B)$, which is defined as $d(\cI_{A} \ox \cN)(\Phi_{AA'})$ where $\cI$ is the identity map and $\Phi_{AA'}$ is the maximally entangled state acting on $\cH_{AA'}$. When dealing with Choi operators, composition of linear maps and their actions on linear operators can be conveniently expressed
in terms of the \emph{link product}~\cite{Chiribella2008quantum}, which will be denoted as $\star$. For maps $\cE\in\mathscr{L}(A\rightarrow B)$ and $\cF\in\mathscr{L}(B\rightarrow C)$.
The link product $J(\cE) \star J(\cF)$ gives the Choi operator of the composite map $\cF \circ \cE$ is defined as
\begin{equation*}
   J(\cE) \star J(\cF) := \tr_B \Big[ \big(J(E)^{T_B}\otimes I_C\big)\big(I_A\otimes J(\cF)\big)\Big], 
\end{equation*}
with $T_B$ the partial transposition on $\mathcal{H}_B$ and $\tr_B$ the partial trace on $\mathcal{H}_B$.  The link product is commutative, 
$J(\cE) \star J(\cF) = J(\cF) \star J(\cE)$, and associative, $J(\cA)\star(J(\cB)\star J(\cC)) = (J(\cA)\star J(\cB))\star J(\cC)$. 

\subsection{Lindbladian master equation, quantum dynamical semigroup and its  representations}\label{appendix:lindblad_to_choi}
Open quantum systems are described by the evolution of density operators under completely positive and trace-preserving dynamics that account for irreversible exchange of information and energy with an environment. Under assumptions of weak coupling, short bath correlation times, and time-homogeneous driving, the dynamics becomes Markovian in the sense that the future depends only on the present state. In this regime the time evolution is generated by the  Gorini–Kossakowski–Sudarshan–Lindblad (GKSL) master equation~\cite{Lidar2019lecture,Rivas2012open}, also called the Lindblad master equation. If $\rho(t)$ denotes the system density operator, the evolution is given by,
\begin{equation}\label{eq:lindblad_master_equation}
    \frac{d}{dt}\rho = -i[H,\rho] + \sum_j \gamma_j\left(L_j \rho L_j^{\dagger} - \frac{1}{2}\left\{L_j^{\dagger} L_j, \rho\right\}\right) = \cL(\rho),
\end{equation}
where $H$ is an effective Hamiltonian, which may include environment-induced Lamb shifts, and $L_j$ are the jump operators that encode the dissipative channels. The superoperator $\cL\in\mathscr{L}(A\rightarrow A')$ is referred to as the Lindbladian. The first term in Eq.~\eqref{eq:lindblad_master_equation} is named the coherent part and the remaining term is called the dissipative part. In a formal setup, one can write the commutator into the adjoint representation $\Op{ad}_H(\rho) = [H,\rho]$ with $H$ being a Hermitian operator. The associated family of linear maps $\cA_t = e^{t \cL}$ defines the continuous-time evolution through $\rho(t) = \cA_t(\rho(0))$ for $t \geq 0$. The family of maps $\cA_t$ forms a quantum dynamical semigroup in the sense that $\cA_0$ is the identity map.
\begin{definition}[quantum dynamical semigroup]
    Let $\cD(\cH)$ be the density operator space acting on a Hilbert space $\cH$. A quantum dynamical semigroup is a collection of bounded (super)operators $\cA_t$ for $t\geq 0$ satisfying:
    \begin{enumerate}
        \item $\cA_0(\rho) = \rho \; \forall \rho \in \cD(\cH)$,
        \item $\cA_{t+s}(\rho) = \cA_t \circ 
        \cA_s(\rho)$ for all $\rho \in \cD(\cH)$,
        \item $\cA_t$ is completely positive for all $t\geq 0$,
        \item $\cA_t$ is $\sigma$-weakly continuous operator for all $t\geq 0$,
        \item For all $\rho \in \cD(\cH)$, the map $t\mapsto \cA_t(\rho)$ is continuous with respect to the $\sigma$-weak topology.
        \end{enumerate}
\end{definition}

The structure of the Lindbladian guarantees complete positivity and trace preservation of the evolution $\cA_t$ at all times, and each quantum dynamical semigroup is uniquely generated (Hille-Yosida theorem). In finite dimensions, any strongly continuous one-parameter semigroup of completely positive and trace-preserving maps has a generator of GKSL form, and conversely any such generator yields a well-defined evolution~\cite{gyamfi2020fundamentals,Song2022positive,Agredo2022kossakowski}. 

For analysis and computation it is often advantageous to pass to Liouville space, also called operator space, by vectorizing operators. In this representation one maps an operator $A$ to a vector $\dket{A}$ in a space of dimension $d^2$ equipped with the Hilbert–Schmidt inner product $\dbraket{A}{B} = \tr(A^{\dagger}B)$. Linear maps on operators become matrices acting on these vectors, so that the commutator with the Hamiltonian becomes a simple difference of Kronecker products and the dissipative terms factorize into tensor products of jump operators and their adjoints. To be clear, in this manuscript, we use the convention of vectorization as $\dket{\ketbra{i}{j}} = \ket{i}\ket{j}$. By using the property $\dket{ABC} = (A\ox C^T)\dket{B}$, the operator $-i[H,\cdot]$ can be computed as,
\begin{equation*}
    -i\dket{[H, \rho]} = -i\dket{H\rho - \rho H} = -i(H\ox I - I\ox H^T) \dket{\rho}.
\end{equation*}
Similarly for the dissipation part, we have,
\begin{equation*}
    \dket{L_k \rho L_k - \frac{1}{2}\{L_k^{\dagger} L_k, \rho\}} = (L_k \ox \overline{L}_k - \frac{1}{2}(I\ox L_k^{\dagger} L_k) - \frac{1}{2}((L_k^{\dagger} L_k)^T\ox I) \dket{\rho}.
\end{equation*}
We now define the superoperator $\bm{L}$,
\begin{equation}
    \bm{L} = -i(H \ox I - I \ox H^T) + \sum_k \left(L_k \ox \overline{L}_k - \frac{1}{2}(I\ox L_k^{\dagger} L_k) - \frac{1}{2}((L_k^{\dagger} L_k)^T\ox I)\right).
\end{equation}
By vectorizing the density operator $\rho \rightarrow \dket{\rho}$, this representation turns the master equation into a linear ordinary differential equation in a larger vector space, enabling spectral analysis, Krylov subspace propagation, and steady-state solvers.
\begin{equation}
    \frac{d}{dt} \dket{\rho} = \bm{L}\dket{\rho} \Rightarrow \dket{\rho(t)} = e^{\bm{L} t} \dket{\rho(0)}.
\end{equation}
The generator $\bm{L}$ has a spectrum confined to the closed left half-plane in the complex numbers, so that the real parts of its eigenvalues are nonpositive. The $\Op{ker}[\bm{L}]$ consists of stationary states, and the negative of the largest nonzero real part defines a spectral gap that controls asymptotic mixing rates when no persistent oscillations survive. Eigenvalues with purely imaginary parts correspond to reversible components of the dynamics supported on decoherence-free subspaces or noiseless subsystems; when the peripheral spectrum reduces to the eigenvalue zero, the evolution converges exponentially to a unique stationary state that is full rank, a property often called primitivity in the context of irreducible dynamics. Conditions for uniqueness and mixing, developed by Evans and Frigerio and by Fagnola and Rebolledo among others, relate the noise algebra generated by the jump operators to the existence of a faithful steady state~\cite{Frigerio1977quantum,Fagnola2001existence,Yoshida2024uniqueness,Baumgartner2012structures}.

Notice that the $d^2 \ox d^2$ matrix $\bm{K}(t) = e^{\bm{L} t}$ is the corresponding channel representation of $\cA_t$ at time $t$. Such a matrix is similar to the Choi matrix of the channel up to a reshuffling of basis. Let $\cE_t$ be the channel at time $t$. Recalling its Choi matrix,
\begin{equation*}
    J(\cE_t) = \sum_{i,j=0}^{d-1} \ketbra{i}{j} \ox \cE_t(\ketbra{i}{j}).
\end{equation*}
Each element $\cE_t(\ketbra{i}{j})$ can be derived from $\bm{K}(t)\dket{\ketbra{i}{j}}$. Denote $K_{ab,ij}(t)$ as the components of $\bm{K}(t)$ where $(a,b)$ is the row indices. Subsititude into the Choi matrix to derive,
\begin{equation}
    J(\cE_t) = \sum_{i,j,a,b = 0}^{d-1} K_{ab,ij}(t) \ketbra{i}{j} \ox \ketbra{a}{b}.
\end{equation}

\subsection{Physical implementability of general linear maps}\label{appendix:physical_implementability}
Physically realizable quantum operations are exactly CPTP. Many linear maps of interest in quantum information, such as positive but not completely positive maps, or inverses of noisy channels, are not CPTP and thus are not directly implementable. In previous literature, Jiang et. al.~\cite{Jiang2021physical} has devised a general technique to simulate the actions of linear maps by accessing quasi-sampling strategy. The key idea is to express an HPTP map $\cN$ as a signed linear combination of CPTP maps. 
\begin{equation*}
    \cN = \sum_{\alpha} \eta_{\alpha} \cE_{\alpha} \; \text{with} \; \cE_{\alpha} \in \cptp\; \alpha \in \RR.
\end{equation*}
Equivalently, the optimal decomposition can always be taken with two channels. The non-physicality, or the simulation cost is defined by the minimal total weight of this quasiprobability:
\begin{equation*}
    \nu(\cN):=\log\min\{\eta_+ +\eta_- \;\big|\; \cN = \eta_+ \cE_{+} - \eta_- \cE_{-} \; \text{with} \; \cE_{\pm} \in \cptp\; \eta_{\pm} \geq 0\}.
\end{equation*}
A relaxation replacing CPTP by CPTN (completely positive, trace non-increasing) yields the same optimum. This quantity satisfies the following properties:
\begin{itemize}
    \item Faithfulness: $\nu(\cN) = 0$ iff. $\cN$ is CPTP.
    \item Additivity under tensor product, $\nu(\cN\ox \cM) = \nu(\cN) + \nu(\cM)$. 
    \item Subadditivity under channel composition: $\nu(\cN\circ\cM) = \nu(\cN) + \nu(\cM)$.
    \item Unitarily invariance: $\nu(\cU \circ \cN \circ \cV) = \nu(\cN)$ for $\cU,\cV$ two unitary operations.
\end{itemize}
This quantity has operational meaning in probabilistic error mitigation. For any invertible noise channel $\cN$ with inverse $\cN^{-1}$ (generally not CPTP), one uses the optimal quasiprobability decomposition of $\cN^{-1}$ to mitigate the noise effect from estimating expectation values. The sampling overhead is propositional to  $2^{\nu(\cN^{-1})}$. Consequently, $\nu(\cN^{-1})$ gives the fundamental lower bound on sampling cost. Furthermore, from the investigation of the operational application of the diamond norm~\cite{Regula2021operational}, when the object $\cN$ is TP or, more generally proportional to a TP map, it holds that, $\nu(\cN) = \log\|\cN\|_{\diamond}$
where $\|\cdot\|_{\diamond}$ is the diamond norm defined for general linear maps.

\section{Programming Lindbladian with intrinsc symmetry patterns}\label{sec:program_invariant_subspaces}

From the differential equation Eq.~\eqref{eq:de_master_eq_linear_maps}, provided a fixed Lindbladian $\cL$, there is a unique solution $\cA_t = e^{t \cL}$ that solves the equation with initial condition $\cA_0 = \cI$. Using the measure-and-prepare strategy, one can observe that if the entire dynamical semigroup is closed within a polytope in the quantum channel space. This can be extended to the problem of \textit{invariant subspace} of the differential equation~\eqref{eq:de_master_eq_linear_maps}. However, finding the invariant subspaces of a dynamical system is generally difficult (NP-hard), especially for nonlinear systems or complex linear operators~\cite{Blondel2000boundedness,Blondel2000survey}. In the following discussion, we will focus on specific cases of interests to delve the problem in the framework of physical programmability.

In the space of quantum channels, invariant subspaces arise whenever a family of maps is closed under convex mixing and composition, so that the action of admissible processing cannot leave the family, for example,  a convex hull $C = \Op{conv}{\cE_k}{k=1}^m$ of fixed channels, where any element $\cE \in C$ has the form,
\begin{equation*}
    \cE(\rho) = \sum_{k=1}^m p_k \cE_k(\rho)
\end{equation*}
for $\sum_k p_k = 1$ and $p_k \geq 0$, and $C$ is invariant under convex combinations and, when $\{\cE_k\}$ is closed under composition, under semigroup generation as well. Such convexly generated sets define linear cones in the Liouville space that are stable under coarse-graining and represent operationally meaningful subspaces.
\begin{definition}[$Q$-matrix~\cite{Norris1998markov}]
    Let $\mathbb{I}$ be a countable set. A $Q$-matrix on $\mathbb{I}$ is a matrix $Q=(q_{ij}:i,j\in \mathbb{I})$ satisfying the following conditions:
    \begin{enumerate}
        \item $0\leq -q_{ii} < \infty$ for all $i$, where $q_{ii} = \sum_{i\neq j} q_{ij}$;
        \item $q_{ij} \geq 0$ for all $i\neq j$;
        \item $\sum_{i\in \mathbb{I}} q_{ij} = 0$ for all $i$.
    \end{enumerate}
\end{definition}

\begin{lemma}\label{lem:decomposible_generator}
    Let $\cS = \{\cE_1, \cE_2, \cdots, \cE_k\}$ be a finite set of linearly independent quantum channels on $\cB(\cH_d)$. Then, the dynamic semigroup $\{\Phi_t\}_{t\geq 0} \subseteq \Op{conv}[\cS]$ if the following conditions holds, 
    \begin{enumerate}
        \item $\cI\in \Op{conv}[\cS]$.
        \item There exists a $k\times k$ $Q$-matrix $Q$, such that $\cL\circ\cE_i = \sum_{j=1}^k q_{ji} \cE_j$.
    \end{enumerate}
\end{lemma}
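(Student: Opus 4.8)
The plan is to reduce the claim to the elementary theory of continuous-time Markov chains by following the would-be mixing weights along the flow. Set $V:=\spn\{\cE_1,\dots,\cE_k\}$ inside the (finite-dimensional) space of linear superoperators on $\cB(\cH_d)$, and read Condition~2 as the statement that the linear map $F:Y\mapsto \cL\circ Y$ leaves $V$ invariant; written in the basis $\{\cE_j\}$, the restriction $F|_V$ is exactly the matrix $Q$. Since $\Phi_t=e^{t\cL}$ solves the linear ODE $\tfrac{d}{dt}\Phi_t=F(\Phi_t)$ with $\Phi_0=\cI$, and Condition~1 gives $\cI\in\Op{conv}[\cS]\subseteq V$, invariance of $V$ under $F$ together with uniqueness of solutions forces $\Phi_t\in V$ for all $t\ge 0$. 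Linear independence of $\{\cE_j\}$ then lets us write $\Phi_t=\sum_{j=1}^k p_j(t)\,\cE_j$ with uniquely determined coefficients, and $p(0)$ is a probability vector precisely because $\cI\in\Op{conv}[\cS]$.

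Next I would extract the equation of motion for $p(t):=(p_1(t),\dots,p_k(t))^{\mathsf T}$. Substituting the expansion into $\tfrac{d}{dt}\Phi_t=\cL\circ\Phi_t$ and using $\cL\circ\cE_i=\sum_j q_{ji}\cE_j$ gives $\sum_j\dot p_j(t)\,\cE_j=\sum_j\big(\sum_i q_{ji}\,p_i(t)\big)\cE_j$, so linear independence yields $\dot p(t)=Q\,p(t)$, hence $p(t)=e^{tQ}p(0)$. It remains to verify that $p(t)$ stays a probability vector. Normalization is automatic: writing $\mathbf 1$ for the all-ones vector, $\tfrac{d}{dt}\big(\mathbf 1^{\mathsf T}p(t)\big)=\mathbf 1^{\mathsf T}Q\,p(t)=0$ because each column of $Q$ sums to zero (the $Q$-matrix axiom $\sum_i q_{ij}=0$), so $\mathbf 1^{\mathsf T}p(t)\equiv\mathbf 1^{\mathsf T}p(0)=1$. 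Positivity uses the remaining $Q$-matrix axiom $q_{ij}\ge 0$ for $i\ne j$: choosing $c:=\max_i|q_{ii}|$ makes $Q+cI$ entrywise nonnegative, whence $e^{tQ}=e^{-ct}\sum_{n\ge0}\tfrac{t^n}{n!}(Q+cI)^n$ is entrywise nonnegative for $t\ge0$, so $p(t)=e^{tQ}p(0)\ge0$ componentwise. Combining the two facts, $p(t)$ is a probability distribution and $\Phi_t=\sum_j p_j(t)\,\cE_j\in\Op{conv}[\cS]$.

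The one genuinely substantive point, as opposed to bookkeeping, is the entrywise nonnegativity of $e^{tQ}$ for $t\ge0$ --- i.e.\ that a $Q$-matrix generates a (sub)stochastic semigroup; the shift $Q\mapsto Q+cI$ above is the cleanest self-contained way to see it, and it is what closes the positivity of the weights. A secondary remark worth including is that $\cL$ need only be prescribed on $V$ for the argument to run, since $\Phi_t$ never leaves $V$; everything else is forced by the linear independence of $\{\cE_j\}$, which is exactly what makes the coefficient vector $p(t)$ well defined and makes its dynamics close into the classical master equation $\dot p=Qp$.
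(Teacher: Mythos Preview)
Your proof is correct and follows essentially the same route as the paper: expand $\Phi_t$ in the basis $\{\cE_j\}$, derive the classical master equation $\dot p=Qp$ for the coefficients, and use that $e^{tQ}$ is column-stochastic to conclude $p(t)$ remains a probability vector. You are a bit more explicit than the paper on two points --- the invariance argument that justifies why $\Phi_t$ stays in $V=\spn\{\cE_j\}$, and the shift trick $Q\mapsto Q+cI$ proving entrywise nonnegativity of $e^{tQ}$ --- whereas the paper simply posits the ansatz and cites Norris for stochasticity, but the structure is the same.
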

\begin{proof}
    Let $\cS$ be such a set satisfying two conditions. Denote $c_j(t)$ to be the time-dependent real coefficients such that $c_j(0) = \lambda_j$ for $j=1,\cdots, k$. We propose the solution $\Phi_t$ of form: $\Phi_t = \sum_{i=1}^{k} c_i(t) \cE_i$. 
    It suffices to show $\bm{c}(t) = (c_1(t), \dots, c_k(t))^T$ must be a probability vector for all $t \ge 0$. Substituting our proposed form into the differential equation, 
    \begin{equation*}
    \begin{aligned}
        \frac{d}{dt}\Phi_t = \frac{d}{dt} \sum_{j=1}^{k} c_j(t) \cE_j &= \sum_{j=1}^{k} \frac{dc_j(t)}{dt} \cE_j =  \sum_{j=1}^{k} c_j(t) \cL \circ \cE_j = \sum_{m=1}^k \left(\sum_{j=1}^k q_{mj}c_j(t)\right)\cE_m \\
        &\Rightarrow \frac{dc_m(t)}{dt} = \sum_{j=1}^k q_{mj} c_j(t),
    \end{aligned}
    \end{equation*}
    which reduces to  the (classical) master equation:
    $\frac{d\bm{c}(t)}{dt} = Q \bm{c}(t)$, and the formal solution to this differential equation is reads
    $\bm{c}(t) = e^{Q t} \bm{c}(0)$. The matrix $P(t) = e^{Q t}$ are stochastic matrices for all $t \ge 0$, with each column summing to $1$~\cite{Norris1998markov}. Now, 
    $\bm{c}(0)$ is a probability vector, their product $\bm{c}(t) = P(t)\bm{c}(0)$ is also a probability vector for all $t \ge 0$. Because $\bm{c}(t)$ remains a probability vector for all $t \ge 0$,  $\Phi_t = \sum_i c_i(t) \cE_i$ is always a convex combination of the vertices $
    \cE_i$, which completes the proof.
\end{proof}

\begin{theorem}
    Given $\cL$ be any $n$-qubit fully dissipative Lindbladian with all the jump operators defined by Pauli's. Then $\cL$ is $\cptp$-Programmable.
\end{theorem}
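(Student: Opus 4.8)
The plan is to exhibit a finite family of fixed unitary channels whose convex hull contains the entire semigroup $\{e^{t\cL}\}_{t\ge 0}$, and then to feed this into Lemma~\ref{lem:decomposible_generator} and the measure-and-prepare construction described in the main text.

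The first step is to normalize $\cL$. Because every jump operator is proportional to an $n$-qubit Pauli operator $P_a$ (Hermitian and unitary, $P_a^2=I$), one has $L_a^\dagger L_a\propto I$, so the GKSL generator collapses to the Pauli form
\begin{equation*}
  \cL=\sum_{a\in S}\gamma_a\bigl(\cU_a-\cI\bigr),\qquad \cU_a(\rho):=P_a\rho P_a ,
\end{equation*}
with $S$ a finite set of nonidentity $n$-qubit Paulis and $\gamma_a>0$. Let $G$ be the subgroup of the $n$-qubit Pauli group modulo phases that is generated by $S$, so $|G|\le 4^n$. I would then record three elementary facts about the family $\cS:=\{\cU_v:v\in G\}$: (i) it is closed under composition, $\cU_u\circ\cU_v=\cU_{u+v}$, since the Pauli phases cancel under conjugation; (ii) it contains $\cI=\cU_0$; and (iii) it is linearly independent, because in the Liouville (Pauli-transfer-matrix) representation $\cU_v$ is the diagonal operator with entries $(-1)^{\langle v,w\rangle}$ over the Pauli basis $\{P_w\}$, and nondegeneracy of the symplectic form $\langle\cdot,\cdot\rangle$ makes the $|G|$ functions $w\mapsto(-1)^{\langle v,w\rangle}$ pairwise distinct characters of the Pauli group, hence linearly independent.

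Next I would compute the action of $\cL$ on the vertices. Using (i),
\begin{equation*}
  \cL\circ\cU_v=\sum_{a\in S}\gamma_a\bigl(\cU_{a+v}-\cU_v\bigr)=\sum_{u\in G}q_{uv}\,\cU_u ,
\end{equation*}
where $q_{uv}=\gamma_a$ if $u=a+v$ for the (unique) $a\in S$ with that property, $q_{vv}=-\sum_{a\in S}\gamma_a$, and $q_{uv}=0$ otherwise. A one-line verification shows that $Q=(q_{uv})_{u,v\in G}$ is a $Q$-matrix: the off-diagonal entries are nonnegative, the diagonal entries are finite, and every column sums to $\sum_{a\in S}\gamma_a-\sum_{a\in S}\gamma_a=0$. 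Together with fact (ii), both hypotheses of Lemma~\ref{lem:decomposible_generator} are met, so $\{e^{t\cL}\}_{t\ge 0}\subseteq\Op{conv}[\cS]$; concretely $e^{t\cL}=\sum_{v\in G}p_v(t)\,\cU_v$ where $\bm p(t)=e^{Qt}\bm p(0)$ with $p_v(0)=\delta_{v,0}$ is a probability vector whose components are entire (hence analytic) functions of $t$.

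Finally I would assemble the protocol exactly as in the measure-and-prepare argument of the main text: set $\cH_P=\CC^{|G|}$ with orthonormal basis $\{\ket v\}_{v\in G}$, take the analytic program states $\pi_t:=\sum_{v\in G}p_v(t)\,\ketbra{v}{v}$, and define the retrieval channel $\cP(\rho\ox\sigma):=\sum_{v\in G}\langle v|\sigma|v\rangle\,P_v\rho P_v$, which is manifestly CPTP (a classically controlled Pauli). Then $\cP(\cdot\ox\pi_t)=\sum_{v\in G}p_v(t)\,\cU_v=e^{t\cL}$ for every $t\ge 0$, so $\cL$ is quantum programmable with program dimension $d_P=|G|\le 4^n$. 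The argument is essentially routine once Lemma~\ref{lem:decomposible_generator} is available; the only places demanding any care are the bookkeeping of Pauli-group phases — which fortunately drop out because the channels see only conjugations — and checking the linear independence in fact (iii), which is precisely what licenses applying the lemma to the family $\cS$.
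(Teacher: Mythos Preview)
Your proof is correct and follows essentially the same route as the paper: both verify that left-composition by $\cL$ acts on the Pauli unitary channels via a $Q$-matrix and then invoke Lemma~\ref{lem:decomposible_generator}. Your refinement of working inside the subgroup $G$ generated by the jump operators (rather than all $4^n$ Paulis, as the paper does) yields the sharper program dimension $d_P=|G|$, and your explicit check of linear independence and of the measure-and-prepare retrieval map supplies details the paper leaves implicit.
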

\begin{proof}
    Let $\cL$ be a fully dissipative Lindbladian with $L_j = P_j$ where $P_j$'s are $n$-qubit Pauli operators, and $\gamma_j \geq 0$ be the dissipation rates. Denotes $\cP_k(\rho) = P_k \rho P_k$. Then, for the off-diagonal terms in the Pauli basis, 
    \begin{equation*}
    \begin{aligned}
        \cL\circ \cP_k(\rho) &= \sum_{j=1}^{4^n-1} \gamma_j (P_j P_k \rho P_k P_j - \frac{1}{2}\{P^{\dagger}_jP_j, P_k\rho P_k\})\\ = &\sum_{j=1}^{4^n-1} \gamma_j (P_j P_k \rho P_k P_j - P_k\rho P_k)= \sum_{l\neq k} \gamma_{j} P_{l} \rho P_{l} - \Gamma P_k \rho P_k,
    \end{aligned}
    \end{equation*}
    where $\Gamma = \sum_j \gamma_j$, $P_l = c_{jk} P_jP_k$ with $|c_{jk}| = 1$ and we denote $l = j \sim k$ as a re-labeling of $4^n$ Pauli operators including the identity. We can now define $q_{kk} := -\Gamma; \quad q_{jk} := \gamma_{j\sim k}$ for $j\neq k$. It suffices to verify the conditions for the $Q$-matrix. But, $-q_{kk} = \Gamma \geq 0$, $\sum_{j\neq k} q_{jk} = \sum_j \gamma_j = -\Gamma = q_{kk}$, and,
    \begin{equation*}
        \sum_{j=0}^{4^n - 1} q_{jk} = q_{kk} + \sum_{j\neq k} q_{jk} = -\Gamma + \sum_{j\neq k} \gamma_{j\sim k} = -\Gamma + \Gamma = 0,
    \end{equation*}
    where the map $j \mapsto j \sim k$ for a fixed $k$ is a permutation. As $j$ runs over all indices except $k$, the index $j = j \sim k$ runs over all indices except $0$. Therefore, $q_{jk}$ forms a $4^n\times 4^n$ $Q$-matrix and by Lemma~\ref{lem:decomposible_generator}, we have completed the proof.
\end{proof}

Within this general picture, covariant channels provide a symmetry-defined invariant subspace. We start with the most general definition of covariance.
\begin{definition}[$UV$-covariant quantum channel~\cite{Gschwendtner2021programmabilityof}] Let $G$ be a compact group and let $U$ and $V$ be representations on Hilbert spaces $\cH_1$ and $\cH_2$. Let $\cN:\cB(\cH_1) \rightarrow \cB(\cH_2)$ be a linear map. We call $\cN$ $UV$-covariant if
\begin{equation*}
    \cN(U_g A U_g^{\dagger}) = V_g\cN(A) V_g^{\dagger} \quad \forall A \in \cB(\cH_1), \; \forall g\in G.
\end{equation*}    
\end{definition}

\begin{lemma}
    Let $\cL$ be a Lindbladian and $\cA_t:=e^{t \cL}$ be the generated dynamic semigroup. Then for any fixed unitary evolution $\cU(\cdot) = U(\cdot) U^{\dagger}$, $\cU^{\dagger} \circ \cA_t \circ \cU$ is a valid dynamic semigroups.
\end{lemma}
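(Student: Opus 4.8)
The plan is to verify directly that the conjugated family $\tilde\cA_t := \cU^{\dagger}\circ\cA_t\circ\cU$ satisfies the five defining axioms of a quantum dynamical semigroup, exploiting that $\cU$ is a unitary channel and hence $\cU^{\dagger}\circ\cU = \cU\circ\cU^{\dagger} = \cI$. First I would note that $\tilde\cA_0 = \cU^{\dagger}\circ\cI\circ\cU = \cI$, which handles the identity axiom. Next, for the semigroup property I would compute $\tilde\cA_{t+s} = \cU^{\dagger}\circ\cA_{t+s}\circ\cU = \cU^{\dagger}\circ\cA_t\circ\cA_s\circ\cU = \cU^{\dagger}\circ\cA_t\circ\cU\circ\cU^{\dagger}\circ\cA_s\circ\cU = \tilde\cA_t\circ\tilde\cA_s$, where the crucial insertion of $\cU\circ\cU^{\dagger} = \cI$ in the middle is what makes the conjugation pass through the composition.

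For complete positivity I would argue that $\cA_t$ is CP by hypothesis (it generates a quantum dynamical semigroup), and pre- and post-composition by the unitary channels $\cU$ and $\cU^{\dagger}$ preserves complete positivity, since unitary channels are themselves CP; hence $\tilde\cA_t$ is CP for all $t\geq 0$. Trace preservation follows identically: $\cA_t$ is TP and unitary conjugation is TP, so the composition is TP. The two continuity axioms ($\sigma$-weak continuity of each $\tilde\cA_t$ and continuity of $t\mapsto\tilde\cA_t(\rho)$ in the $\sigma$-weak topology) follow because composition with the fixed bounded maps $\cU,\cU^{\dagger}$ is $\sigma$-weakly continuous, so it commutes with limits; the map $t\mapsto\tilde\cA_t(\rho) = \cU^{\dagger}\big(\cA_t(\cU(\rho))\big)$ is then continuous as a composition of a continuous map with a fixed continuous one.

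In finite dimensions (the setting of the paper), an equivalent and perhaps cleaner route is to observe that $\tilde\cA_t = e^{t\tilde\cL}$ where $\tilde\cL := \cU^{\dagger}\circ\cL\circ\cU$: indeed $\tilde\cA_t = \cU^{\dagger}\circ e^{t\cL}\circ\cU = \sum_{k\geq 0}\frac{t^k}{k!}\,\cU^{\dagger}\circ\cL^k\circ\cU = \sum_{k\geq 0}\frac{t^k}{k!}\,\tilde\cL^k = e^{t\tilde\cL}$, using again $\cU\circ\cU^{\dagger} = \cI$ to telescope the powers. Then it suffices to check that $\tilde\cL$ is again of GKSL form, which is immediate from Eq.~\eqref{eq:lindblad_master_equation}: conjugating $H$ and each $L_j$ by the fixed unitary $U$ sends $H\mapsto U^{\dagger}HU$ (still Hermitian) and $L_j\mapsto U^{\dagger}L_jU$ (arbitrary jump operators), and the dissipator structure is covariant under this substitution. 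By the correspondence between GKSL generators and quantum dynamical semigroups cited earlier, $\tilde\cA_t$ is a valid dynamical semigroup. I do not expect a genuine obstacle here; the only point requiring a little care is the bookkeeping of $\sigma$-weak continuity if one insists on the infinite-dimensional axioms verbatim, but since the paper works in finite dimensions the generator argument disposes of it cleanly.
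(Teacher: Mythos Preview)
Your proposal is correct and follows essentially the same approach as the paper: both verify the five axioms directly, checking $\tilde\cA_0=\cI$ and the semigroup law via the insertion $\cU\circ\cU^{\dagger}=\cI$, while the paper simply declares conditions 3--5 ``automatically satisfied'' where you spell out the reasons. Your additional generator-based route (showing $\tilde\cL=\cU^{\dagger}\circ\cL\circ\cU$ is again GKSL) is a nice alternative that the paper does not give, but it is not needed for the lemma as stated.
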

\begin{proof}
    Let $\cU$ be some fixed unitary evolution and define $\Phi_t = \cU^{\dagger} \circ \cA_t \circ \cU$. Recalling the definition of quantum dynamical semigroup, conditions 3,4,5 are automatically satisfied. Besides, $\Phi_0 = \cU^{\dagger} \circ \cI \circ \cU = \cI$; For any input state $\rho$, the semigroup property,
    \begin{equation*}
        \Phi_{t+s}(\rho) = \Phi_{t} \circ \Phi_{s}(\rho) = \cU^{\dagger} \circ \cA_{t} \circ \cU \circ \cU^{\dagger} \circ \cA_{s} \circ \cU(\rho) =  \cU^{\dagger} \circ \cA_{t+s} \circ \cU(\rho),
    \end{equation*}
    as required.
\end{proof}

\begin{lemma}\label{lem:lindbladian_covariant}
    Let $G$ be a compact group, $\cL$ be a Lindbladian and $\cA_t$ be the dynamical semigroup generated by $\cL$. Then $\cA_t$ is $UU$-covariant for all $t\geq 0$ iff. $\cL$ is $UU$-covariant.
\end{lemma}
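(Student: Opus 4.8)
The plan is to recast both covariance statements as commutation relations between superoperators and then move between $\cL$ and its semigroup $\cA_t = e^{t\cL}$ using the power series of the exponential and its ``inverse'' operation, differentiation at $t=0$. For a fixed $g \in G$ write $\cU_g(\cdot) := U_g (\cdot) U_g^{\dagger}$ for the conjugation channel; it is an invertible linear map on $\cB(\cH_S)$ with $\cU_g^{-1} = \cU_{g^{-1}}$. Then ``$\cL$ is $UU$-covariant'' is precisely $\cL \circ \cU_g = \cU_g \circ \cL$ for all $g$, and ``$\cA_t$ is $UU$-covariant for all $t \geq 0$'' is precisely $\cA_t \circ \cU_g = \cU_g \circ \cA_t$ for all $g$ and all $t \geq 0$. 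So it is enough to prove, for each fixed $g$, that $[\cL,\cU_g] = 0$ if and only if $[\cA_t,\cU_g] = 0$ for every $t \geq 0$; the equivalence in the lemma then follows by quantifying over $g \in G$.

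For the ``only if'' direction I would argue directly from the exponential series: if $\cL \circ \cU_g = \cU_g \circ \cL$, then $\cL^n \circ \cU_g = \cU_g \circ \cL^n$ for all $n$, hence $\cA_t \circ \cU_g = \big(\sum_n \tfrac{t^n}{n!}\cL^n\big)\circ\cU_g = \cU_g \circ \big(\sum_n \tfrac{t^n}{n!}\cL^n\big) = \cU_g \circ \cA_t$ (equivalently, $\cU_g^{-1}\circ e^{t\cL}\circ\cU_g = e^{t(\cU_g^{-1}\circ\cL\circ\cU_g)} = e^{t\cL}$). For the ``if'' direction, assume $\cU_g^{-1}\circ\cA_t\circ\cU_g = \cA_t$ for all $t \geq 0$. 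Since $\cU_g,\cU_g^{-1}$ are fixed linear maps and the generator is recovered as $\cL = \frac{d}{dt}\big|_{t=0^+}\cA_t$, differentiating this identity at $t=0$ gives $\cU_g^{-1}\circ\cL\circ\cU_g = \cL$, i.e.\ $[\cL,\cU_g] = 0$. Alternatively one can invoke the preceding lemma: $\Phi_t := \cU_g^{-1}\circ\cA_t\circ\cU_g$ is a quantum dynamical semigroup with generator $\cU_g^{-1}\circ\cL\circ\cU_g$, and since $\Phi_t = \cA_t$ for all $t$ and the generator of a semigroup is unique (Hille--Yosida), the two generators coincide.

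The argument is essentially mechanical, so there is no serious obstacle; the only points needing a little care are that $\cU_g$ must be genuinely invertible for the similarity transformation / uniqueness-of-generator step to apply (true, since unitary conjugations are invertible), and that the one-sided derivative of $t \mapsto \cA_t$ at $t=0$ exists, equals $\cL$, and commutes with pre-\ and post-composition by the fixed maps $\cU_g^{\pm 1}$ --- all automatic in the finite-dimensional setting used throughout, where $\cA_t = e^{t\cL}$ is real-analytic in $t$.
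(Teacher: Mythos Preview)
Your proposal is correct and mirrors the paper's proof: both directions are handled the same way (differentiate $\cA_t$ at $t=0$ to recover $\cL$; pass from $\cL$ to $\cA_t$ either via the exponential series or, equivalently, via uniqueness of the semigroup generator applied to $\Phi_t = \cU_g^{-1}\circ\cA_t\circ\cU_g$, which is exactly the paper's route and which you also offer as an alternative). One cosmetic slip: you have the ``if'' and ``only if'' labels reversed relative to the statement ``$\cA_t$ covariant iff $\cL$ covariant'' --- your power-series argument is the ``if'' direction and your differentiation argument is the ``only if'' direction.
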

\begin{proof}
    For the ($\Rightarrow$) direction, we have,
    \begin{equation*}
        \cA_t(U_g X U_g^{\dagger}) = U_g \cA_t(X) U_g^{\dagger}\quad \forall X \in \cB(\cH_1)\; \forall g\in G\; t\geq 0.
    \end{equation*}
    Taking the derivative at $t = 0$ of both sides, since $U_g$ are time-independent, we derive,
    \begin{equation*}
        \cL(U_gXU_g^{\dagger})=\frac{d}{dt}\Big|_{t=0}\cA_t(U_g X U_g^{\dagger}) = U_g \frac{d}{dt}\Big|_{t=0}\cA_t(X) U_g^{\dagger} = U_g\cL(X)U_g^{\dagger},
    \end{equation*}
    which showcases that $\cL$ is $UU$-covariant.

    For the ($\Leftarrow$) direction, we define another dynamical semigroup $\Phi_t(X) = U_g^{\dagger}e^{t \cL}(U_g X U_g^{\dagger}) U_g$  for some $g\in G$. To show $\Phi_t = \cA_t$, it suffices to show they have the same infinitesimal generators. But
    \begin{equation*}
        \frac{d}{dt}\Big|_{t=0} \Phi_t(X) = U_g^{\dagger}\cL(U_gXU_g^{\dagger})U_g = \cL(X) = \frac{d}{dt}\Big|_{t=0} \cA_t(X),
    \end{equation*}
    for any $X\in\cB(\cH_1)$. Therefore, we end the proof.
\end{proof}

The $UU$-covariant property of dynamical semigroup is fully characterized by its corresponding generator.
\begin{definition}[Commutant]
    Let $\cM$ be a matrix algebra on the Hilbert space $\cH_d$. Its commutant is 
    \begin{equation*}
        \cM' := \{B | BA = AB \; \forall A\in \cM \; \forall t \geq 0\}
    \end{equation*}
\end{definition}
Let $U$ be a unitary representation of a compact group $G$ on $\cH_1$, which can be written as $\cH_1 = \bigoplus_{k=1}^K (\cH_k \ox \cH_k')$ such that $U_g = \bigoplus_{k=1}^K U_g^{(k)} \ox I_{n_k}$ for all $g\in G$ where $U^{(k)}$, $k\in\{1, \cdots, K\}$, are \textit{irreps} of $G$. Furthermore, the corresponding operator algebra generated by $\{U_g\}_{g\in G}$ and its commutant can be written into the direct sum decomposition~\cite{Simon1996representations}.

\begin{lemma}[Lem.11~\cite{Gschwendtner2021programmabilityof}]\label{lem:covariant_channel_choi}
    The covariance property of a channel $\cT\in\cT_{UV}$ w.r.t. the unitary representations $U,V$ of a group $G$ is equivalent to  $[J(\cT), \overline{U}_g \ox V_g] = 0, \forall g\in G$. Here, $\cT_{UV}$ denotes all $UV$-covariant channels.
\end{lemma}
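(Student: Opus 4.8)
The plan is to convert the covariance property into an identity between Choi operators, using the injectivity of the Choi isomorphism, and then to push the group conjugations $A\mapsto U_gAU_g^{\dagger}$ and $B\mapsto V_gBV_g^{\dagger}$ through the isomorphism by means of the transpose (``ricochet'') identity $(M\ox I)\ket{\Phi}=(I\ox M^{T})\ket{\Phi}$ for the unnormalized maximally entangled vector $\ket{\Phi}=\sum_i\ket{i}\ket{i}$ (so that $J(\cN)=(\cI\ox\cN)(\ketbra{\Phi}{\Phi})$ in the paper's normalization). Note first that neither complete positivity nor trace preservation is used anywhere, so it suffices to prove the equivalence for an arbitrary linear map $\cN:\cB(\cH_1)\rightarrow\cB(\cH_2)$: namely, $\cN(U_gAU_g^{\dagger})=V_g\cN(A)V_g^{\dagger}$ for all $A$ and $g$ if and only if $[J(\cN),\overline{U}_g\ox V_g]=0$ for all $g$.

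First I would introduce, for each $g\in G$, the twisted map $\cN_g(A):=V_g^{\dagger}\cN(U_gAU_g^{\dagger})V_g$. Multiplying the covariance identity on the left by $V_g^{\dagger}$ and on the right by $V_g$ shows that $\cN$ is $UV$-covariant precisely when $\cN_g=\cN$ for every $g$; and since the Choi correspondence $\cN\mapsto J(\cN)$ is a linear bijection from $\mathscr{L}(\cH_1\rightarrow\cH_2)$ onto $\cB(\cH_1\ox\cH_2)$, this is in turn equivalent to $J(\cN_g)=J(\cN)$ for every $g$.

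Next I would compute $J(\cN_g)$. Starting from $J(\cN_g)=(\cI\ox\cN_g)(\ketbra{\Phi}{\Phi})$ and pulling the outer $V_g^{\dagger},V_g$ out, one is left with $(\cI\ox\cN)$ applied to $(I\ox U_g)\ketbra{\Phi}{\Phi}(I\ox U_g^{\dagger})$; the ricochet identity rewrites this as $(U_g^{T}\ox I)\ketbra{\Phi}{\Phi}(\overline{U}_g\ox I)$, and since $\cN$ acts only on the second tensor factor it commutes with operators of the form $(\cdot)\ox I$, so these migrate outside the map. Collecting terms yields
\[
J(\cN_g)=\big(U_g^{T}\ox V_g^{\dagger}\big)\,J(\cN)\,\big(\overline{U}_g\ox V_g\big).
\]
Writing $W_g:=\overline{U}_g\ox V_g$, which is unitary with $W_g^{\dagger}=U_g^{T}\ox V_g^{\dagger}$ (using $(\overline{U}_g)^{\dagger}=U_g^{T}$ for unitary $U_g$), this reads $J(\cN_g)=W_g^{\dagger}J(\cN)W_g$. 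Hence $J(\cN_g)=J(\cN)$ for all $g$ is equivalent to $W_g^{\dagger}J(\cN)W_g=J(\cN)$ for all $g$, i.e.\ to $J(\cN)W_g=W_gJ(\cN)$, i.e.\ to $[J(\cN),\overline{U}_g\ox V_g]=0$ for all $g$, which is exactly the asserted characterization.

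The only genuinely delicate step is the conjugate/transpose bookkeeping in the ricochet manipulation: one must verify that $U_g$ sitting on the leg of $\ket{\Phi}$ carrying $\cN$ migrates to the reference leg as $U_g^{T}$ (and $U_g^{\dagger}$ on the bra side as $\overline{U}_g$), so that the final operator is exactly $\overline{U}_g\ox V_g$ and not, say, $U_g\ox V_g$ or $\overline{U}_g\ox\overline{V}_g$. All other ingredients — linearity, injectivity of the Choi map, and unitarity of $W_g$ — are routine.
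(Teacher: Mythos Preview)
Your proof is correct and is the standard argument via the ricochet identity and the bijectivity of the Choi map; the bookkeeping on conjugates and transposes is handled carefully and lands on the right operator $\overline{U}_g\ox V_g$. The paper does not actually prove this lemma: it is quoted verbatim as Lemma~11 of \cite{Gschwendtner2021programmabilityof} and used as a black box, so there is no in-paper proof to compare against.
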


We find that the covariance of a quantum dynamic group $\{\cA_t\}_{t\geq 0}$ is mainly dominated by the infinitesimal generator $\cL$. In the following discussion, we denote
\begin{equation}
    \cJ_{\cA}:= \{J(\cA_t) \in \cB(\cH_1\ox \cH_2) : J(\cA_t) := d(\cI \ox \cA_t)(\ketbra{\Phi}{\Phi})\}
\end{equation}

\begin{lemma}\label{lem:choi_covariant_qds}
    Let $G$ be a compact group and $U$ be a  representation on $\cH$, respectively. If $\cL$ is $UU$-covariant, and $\cA_t = e^{t \cL}$, then $[J(\cA_t), \overline{U}_g \ox U_g] = 0$ for all $t\geq 0$.
\end{lemma}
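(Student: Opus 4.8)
The plan is to obtain the claim by chaining two results already in hand, rather than computing anything new. First, I would invoke Lemma~\ref{lem:lindbladian_covariant}: since $\cL$ is $UU$-covariant, the generated semigroup $\cA_t = e^{t\cL}$ is $UU$-covariant for every $t\geq 0$, i.e.\ $\cA_t(U_g X U_g^{\dagger}) = U_g \cA_t(X) U_g^{\dagger}$ for all $X\in\cB(\cH)$ and all $g\in G$. Second, because the GKSL structure of $\cL$ guarantees that each $\cA_t$ is a genuine quantum channel (CPTP), as recalled in Section~\ref{appendix:lindblad_to_choi}, I can apply Lemma~\ref{lem:covariant_channel_choi} with $V=U$ and $\cT=\cA_t$: the $UU$-covariance of $\cA_t$ is then equivalent to $[J(\cA_t),\overline{U}_g\ox U_g]=0$ for all $g\in G$. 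Running this for every $t\geq 0$ gives exactly $\cJ_{\cA}\subseteq\{U U\text{-covariant Choi operators}\}$, which is the statement.

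The only points needing care are bookkeeping, not substance: (i) Lemma~\ref{lem:covariant_channel_choi} is phrased for $UV$-covariant channels, so one must note it specializes to the diagonal $V=U$ — immediate; and (ii) one must confirm $\cA_t\in\cptp$ so the hypotheses of that lemma are met, which is the standard fact about Lindbladian semigroups. So I expect no real obstacle on this route.

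As an alternative that stays entirely in Choi space, I could argue by an ODE/uniqueness argument: set $M_g(t):=[J(\cA_t),\overline{U}_g\ox U_g]$, note $M_g(0)=0$ because $J(\cA_0)=d\proj{\Phi}$ and $(\overline{U}_g\ox U_g)\ket{\Phi}=\ket{\Phi}$, differentiate $\tfrac{d}{dt}\cA_t=\cL\circ\cA_t$ at the Choi level to get $\tfrac{d}{dt}J(\cA_t)=(\cI\ox\cL)(J(\cA_t))$, and use the Choi characterization of $UU$-covariance of the (general linear) map $\cL$, namely $[J(\cL),\overline{U}_g\ox U_g]=0$, to show $\tfrac{d}{dt}M_g(t)$ is a fixed linear function of $M_g(t)$; uniqueness of the solution of this linear ODE with zero initial data then forces $M_g(t)\equiv 0$. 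The main annoyance in this second route is manipulating the link-product identities needed to convert ``$\cL$ commutes with $\cU_g(\cdot):=U_g(\cdot)U_g^{\dagger}$ under composition'' into the required linear dependence of $\tfrac{d}{dt}M_g(t)$ on $M_g(t)$ alone; the first route avoids this altogether, so I would present that one and relegate the direct computation to a remark if desired.
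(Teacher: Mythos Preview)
Your primary route is correct and is exactly the paper's argument: invoke Lemma~\ref{lem:lindbladian_covariant} to pass $UU$-covariance from $\cL$ to each $\cA_t$, then apply Lemma~\ref{lem:covariant_channel_choi} with $V=U$ to obtain $[J(\cA_t),\overline{U}_g\ox U_g]=0$ for all $t\geq 0$. Your alternative ODE route is unnecessary here but not wrong.
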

\begin{proof}
    The proof directly follows by Lemma~\ref{lem:lindbladian_covariant} and Lemma~\ref{lem:covariant_channel_choi}. Since $\cL$ is $UU$-covariant, $\cA_t$ is $UU$-covariant for any $t\geq0$, and therefore, the whole set $\cJ_{\cA}$ does.
\end{proof}

We study representations of the form $U\ox V$ with $U_g\in \cU_1$, the unitary representation group on $\cH_1$, $g\in G$. We denote the commutant
\begin{equation*}
\begin{aligned}
    \cK:=\cM(\overline{U}\ox V)' &= \{X\in \cB(\cH_1\ox \cH_2)\ | \ [X, \overline{U}_g \ox V_g] = 0 \ \forall g\in G\}\\
    &= \bigoplus_{k=1}^K I_{b_k} \ox \cB(\cH_k'),
\end{aligned}
\end{equation*}
where $b_k$ labels each irrep in the direct sum of $\cK$.

\begin{lemma}
    Let $U$ be an irrep of a compact group $G$ on $\cH$. If $\cL$ is $UU$-covariant. Then, 
    the corresponding dynamic semigroup $\cA_t$ is unital for any $t\geq0$, i.e., $\cL(I) = 0$.
\end{lemma}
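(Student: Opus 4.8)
The plan is to reduce the statement to two independent ingredients: Schur's lemma for the irreducible representation $U$, and the trace-annihilating property of any GKSL generator.

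First I would substitute $X = I$ into the $UU$-covariance identity $\cL(U_g X U_g^{\dagger}) = U_g \cL(X) U_g^{\dagger}$, which holds for all $g \in G$ by hypothesis (and, at the level of the semigroup, by Lemma~\ref{lem:lindbladian_covariant}). Since $U_g I U_g^{\dagger} = I$, this gives $\cL(I) = U_g \cL(I) U_g^{\dagger}$ for every $g$, i.e. $\cL(I)$ belongs to the commutant of $\{U_g\}_{g\in G}$. Because $U$ is an irrep of the compact group $G$, Schur's lemma forces that commutant to be $\CC I$, so $\cL(I) = c\,I$ for some scalar $c \in \CC$. Next I would pin down $c$ via trace preservation: directly from the GKSL form~\eqref{eq:lindblad_master_equation} one has $\tr(\cL(X)) = 0$ for all $X$, since the coherent term $-i[H,X]$ is traceless and each dissipative term obeys $\tr(L_j X L_j^{\dagger} - \tfrac12\{L_j^{\dagger}L_j,X\}) = 0$ by cyclicity of the trace (equivalently, differentiate $\tr(\cA_t(X)) = \tr(X)$ at $t=0$). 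Taking $X = I$ yields $0 = \tr(\cL(I)) = c\,d$ with $d = \dim\cH$, hence $c = 0$ and $\cL(I) = 0$. Unitality of the semigroup is then immediate: $\cL(I) = 0$ implies $\cL^k(I) = 0$ for all $k \geq 1$, so $\cA_t(I) = e^{t\cL}(I) = \sum_{k\geq 0}\tfrac{t^k}{k!}\cL^k(I) = I$ for every $t \geq 0$.

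I do not anticipate a genuine obstacle; the only point needing care is the representation-theoretic step, namely recognizing that ``$\cL(I)$ is fixed by conjugation by every $U_g$'' is precisely the statement that $\cL(I)$ commutes with the representation, so that Schur applies and collapses it to a multiple of $I$. It is also worth flagging why the irreducibility hypothesis is essential: for a merely reducible $U$ the same argument only places $\cL(I)$ in the block-diagonal commutant $\cK$ discussed above, which need not be scalar, and unitality could fail — so the irrep assumption is what makes the conclusion $\cL(I)=0$ clean.
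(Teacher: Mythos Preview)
Your proof is correct. Both your argument and the paper's hinge on Schur's lemma plus a trace identity, but they are applied at different levels and in opposite order. The paper works through the Choi operator: from Lemma~\ref{lem:choi_covariant_qds} it has $[J(\cA_t),\overline{U}_g\ox U_g]=0$, takes the partial trace over the first factor to deduce $[U_g,\tr_1 J(\cA_t)]=0$ (recall $\tr_1 J(\cA_t)=\cA_t(I)$), invokes Schur to get $\cA_t(I)=\lambda I$, uses trace preservation of the \emph{channel} to fix $\lambda=1$, and only then differentiates at $t=0$ to conclude $\cL(I)=0$. You reverse the flow: apply covariance and Schur directly to $\cL(I)$, use the trace-annihilating property of the \emph{generator} to pin down the scalar as zero, and then exponentiate to recover unitality of $\cA_t$. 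Your route is more elementary in that it bypasses the Choi formalism entirely; the paper's route is natural given its surrounding development of Choi-commutant lemmas, and has the mild advantage of establishing $\cA_t(I)=I$ directly for every $t$ rather than via the power series.
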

\begin{proof}
    Let $\cL$ to be $UU$-covariant. From Lemma~\ref{lem:choi_covariant_qds}, $[J(\cA_t), \overline{U}_g\ox U_g] = 0$ for all $g\in G$ and $t\geq 0$. We get,
    \begin{equation*}
        \tr_1(J(\cA_t)) = \tr_1((\overline{U}_g\ox U_g) J(\cA_t)(\overline{U}_g\ox U_g)^{\dagger}) = U_g \tr_1(J(\cA_t)) U_g^{\dagger}
    \end{equation*}
    for any $g\in G$, and therefore, $[U_g, \tr_1(J(\cA_t))] = 0$. We assume $U$ is irrep. According to the Schur's Lemma, $\tr_1(J(\cA_t)) = \lambda I_{d_2}$ for some $\lambda\in\CC$ for any time $t\geq 0$. Taking the trace on the above equation to have,
    \begin{equation*}
        d_1 = \tr(\tr_1(J(\cA_t))) = \lambda d_2 \Rightarrow \lambda = 1,
    \end{equation*}
    as $\cH_1 \simeq \cH_2$. Hence, we find $\cA_t$ is unital for any $t\geq 0$. Now,  taking the time derivative on the equation $\cA_t(I) = I$ at $t=0$ to complete the proof.
\end{proof}

In fact, from the studies of open quantum systems, the annihilation of the identity operator from $\cL$ is sufficient and necessary for $\cA_t$ to be unital~\cite{Rivas2012open}.

\begin{lemma}
    Let $U$ be an irrep of a compact group $G$. If $\cL$ is $UU$-covariant. Then, there exists a finite number of time-dependent probability amplitudes $p_j(t) \geq 0$, and $\sum_{j} p_j(t) = 1$ for $t\geq 0$ such that the corresponding $\cA_t = \sum_{j} p_j(t) \cE_j$, where $\{\cE_j\}_j$ are some fixed, time-independent $UU$-covariant channels.
\end{lemma}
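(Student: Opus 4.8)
The plan is to exploit the direct-sum structure of the commutant $\cK = \cM(\overline{U}\ox U)'$ together with the previously established facts that $\cA_t$ is $UU$-covariant and unital for all $t\geq 0$. By Lemma~\ref{lem:choi_covariant_qds}, the entire family of Choi operators $\{J(\cA_t)\}_{t\geq 0}$ lies in $\cK$, which by the preceding decomposition equals $\bigoplus_{k=1}^K I_{b_k}\ox \cB(\cH_k')$, a \emph{finite-dimensional} real-linear subspace of Hermitian operators once we intersect with the Hermitian part. So the curve $t\mapsto J(\cA_t)$ traces out a path inside a fixed finite-dimensional affine space $V$; the unitality constraint $\tr_1 J(\cA_t) = I$ further pins it to an affine slice. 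The key geometric observation is that the Choi operators of $UU$-covariant \emph{channels} form a convex polytope (or at least a compact convex set) $\Pi$ inside this same space: the positivity condition $J(\cA_t)\geq 0$ intersected with the affine subspace $V$ of covariant, trace-preserving Choi operators is a closed bounded convex set, and since $V$ is finite-dimensional, $\Pi$ is the convex hull of its finitely many extreme points $\{J(\cE_j)\}_{j=1}^m$ (compactness plus finite dimension; if $\Pi$ is a polytope this is immediate, and in general one invokes Minkowski/Carathéodory so that each point is a convex combination of at most $\dim V + 1$ extreme points).

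First I would record that every $\cA_t$ is a $UU$-covariant channel (Lemmas~\ref{lem:lindbladian_covariant} and the unitality lemma), hence $J(\cA_t)\in\Pi$ for all $t\geq 0$. Next I would set up the finite-dimensional ambient space explicitly: write $\cK_{\mathrm{herm}}$ for the Hermitian elements of $\cK$, of dimension $D := \sum_k (\dim\cH_k')^2$, and observe $J(\cA_t)$ lives in the compact convex $\Pi\subseteq \cK_{\mathrm{herm}}$. Then I would invoke Carathéodory's theorem: since $\Pi$ is a compact convex subset of a $D$-dimensional space, it is the convex hull of its extreme points, and moreover every element is a convex combination of at most $D+1$ of them. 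I would fix a finite set $\{\cE_j\}_{j=1}^m$ of extreme $UU$-covariant channels whose Choi operators generate $\Pi$, so that for each $t$ there exist $p_j(t)\geq 0$, $\sum_j p_j(t) = 1$, with $J(\cA_t) = \sum_j p_j(t) J(\cE_j)$, equivalently $\cA_t = \sum_j p_j(t)\cE_j$ by linearity of the Choi map. This is exactly the claimed decomposition.

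The main obstacle is twofold. First, one must argue that $\Pi$ is genuinely \emph{finitely generated} as a convex set — a compact convex set in finite dimensions need not be a polytope, but it \emph{is} the closed convex hull of its extreme set, and one needs the extreme set to be finite (or at least that finitely many extreme points suffice, via Carathéodory applied pointwise) for the statement ``finite number of probability amplitudes'' to hold with a single fixed channel family; the clean route is to note that the covariance + trace-preservation constraints cut out a polyhedral region when combined with the positivity facets, so $\Pi$ is in fact a polytope. Second, even granting a fixed finite family $\{\cE_j\}$, the coefficient functions $p_j(t)$ obtained from a naive pointwise Carathéodory argument need not be well-defined as functions (the representing measure may not be unique); but for the \emph{statement} as phrased only existence of nonnegative $p_j(t)$ summing to one is required, so one may simply select, for each $t$, any valid convex representation — no regularity of $p_j(\cdot)$ is claimed here. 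I would therefore keep the argument at the level of set membership $J(\cA_t)\in\mathrm{conv}\{J(\cE_j)\}$ and not attempt to construct canonical, continuous coefficients, which is the subtler point deferred to the discussion of program states $\pi_t = \sum_j p_j(t)\ketbra{j}{j}$ elsewhere.
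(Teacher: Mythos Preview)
Your strategy matches the paper's: show each $\cA_t$ is a $UU$-covariant channel (via Lemma~\ref{lem:lindbladian_covariant}), then use that the set of $UU$-covariant channels is a polytope with finitely many extreme points $\{\cE_j\}$, and write each $J(\cA_t)$ as a convex combination of the $J(\cE_j)$. The paper outsources the polytope claim to Lemma~16 of \cite{Gschwendtner2021programmabilityof}; you try to argue it from scratch.

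That self-contained attempt has a gap, precisely at the obstacle you flag. Your assertion that ``covariance $+$ trace-preservation $+$ positivity facets cut out a polyhedral region'' is not correct as stated: the PSD cone is not polyhedral, so its intersection with an affine subspace is in general a spectrahedron with a continuum of extreme points, not a polytope. Your Carath\'eodory fallback only says each individual $J(\cA_t)$ is a convex combination of at most $D+1$ extreme points, with those $D+1$ points allowed to vary with $t$; when the extreme set is infinite this does \emph{not} yield a single fixed finite family $\{\cE_j\}$ valid for all $t\ge 0$. To close the gap you need the actual polytope statement for the set of $UU$-covariant channels --- either invoke the cited lemma, or supply a genuine argument exploiting the block structure of the commutant $\bigoplus_k I_{b_k}\ox\cB(\cH_k')$ more carefully than a one-line appeal to ``facets.''
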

\begin{proof}
    Let $\cL$ be $UU$-covariant so that by Lemma~\ref{lem:lindbladian_covariant}, $\{\cA_t\}_{t\geq 0}$ lies as subset of the space of all $UU$-covariant channels from $\cH_1$ to $\cH_2$. According to Lemma 16~\cite{Gschwendtner2021programmabilityof}, such a set of channels is isomorphic to a polytope $P\subset \RR^K$ for some integer $K$. Taking all $J(\cE_j) \in \Op{ext}(P)$ being the set of extreme points of $P$, then, there exist $p_j(t)$ for any $t\geq 0$, such that $\sum_{j=1}^K p_j(t) = 1$, and the Choi operator of $J(\cA_t)$ can be represented as,
    \begin{equation*}
        J(\cA_{t}) = \sum_j p_j(t) J(\cE_j)
    \end{equation*}
    due to the definition of convex polytope.
\end{proof}

The covariant quantum dynamical semigroup with generator $\cL$ commuting with the symmetry action, so the entire trajectory remains in the symmetry-invariant subspace. This enforces a simultaneous block structure, yielding symmetry-resolved steady states, decoherence-free subspaces, noiseless subsystems, and selection rules for decay and oscillatory modes. In practice, covariance of Lindbladian provides an efficient strategy for simulating the generated dynamics.

\section{Programming Lindbladian dynamics beyond physical channels}\label{appendix:program_beyond_physical_channel}

Apart from CPTP-programmble systems, in the most cases, we can not find a CPTP programming channel to simulate. In this section, we will discuss the conditions that the CPTP-programmable Lindbladian $\cL$ must hold. We will first prove the main proposition~\ref{prop:sufficient_cond_cptp} to derive a necessary condition for $\cL$ to be CPTP-programmable. After that, we will discuss the typical systems that can not be physically programmed. Moreover, we will showcase the HPTP-programmability of these systems by constructing  feasible protocols in this framework.

\begin{proposition}[necessary condition]\label{prop:sufficient_cond_cptp}
    Let $\cL$ be a Lindbladian acting on $\cD(\cH_d)$. Then, $\cL$ is quantum programmable if there exists a quantum channel $\cE$, and a fixed $\alpha \geq 0$ such that for any state $\rho\in\cD(\cH_d)$, $\cL(\rho) = \alpha( \cE(\rho) - \rho)$.
\end{proposition}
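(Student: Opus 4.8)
The statement is best read as an equivalence: the label ``necessary condition'' and the main-text wording (``Let $\cL$ be a quantum-programmable Lindbladian\ldots Necessarily, there exists\ldots'') indicate that the substantive content is the \emph{forward} implication, used in contrapositive form in Propositions~\ref{prop:non-cptp_coherent} and~\ref{prop:non-cptp_ad}. The plan is therefore to prove that $\cL$ is quantum programmable \emph{iff} $\cL=\alpha(\cE-\cI)$ for some CPTP $\cE$ and some $\alpha\ge 0$. The reverse implication is a short measure-and-prepare construction: if $\cL=\alpha(\cE-\cI)$ then $e^{t\cL}=e^{-\alpha t}e^{\alpha t\cE}=\sum_{k\ge0}p_k(t)\,\cE^{\circ k}$ with $p_k(t)=e^{-\alpha t}(\alpha t)^k/k!$ Poisson and each $\cE^{\circ k}$ ($\cE^{\circ0}=\cI$) a fixed CPTP map; taking $\pi_t=\sum_k p_k(t)\ketbra{k}{k}$ and the measure-and-prepare retrieval channel $\cP(\omega_{SP})=\sum_k\cE^{\circ k}(\bra{k}_P\,\omega_{SP}\,\ket{k}_P)$ yields a CPTP $\cP$ with $\cP(\rho\otimes\pi_t)=e^{t\cL}(\rho)$ for all $t$ (the program register is countably infinite in general, and finite precisely when $\{\cE^{\circ k}\}_k$ spans a polytope of channels, e.g.\ $\cE$ drawn from a finite group, recovering the finite program dimension of the Pauli-Lindbladian theorem).

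For the forward direction, suppose $e^{t\cL}=\cP(\,\cdot\,\otimes\pi_t)$ for all $t\ge0$ with $\cP$ CPTP and $\pi_t$ analytic states. Differentiating at $t=0$ yields $\cI=\cP(\,\cdot\,\otimes\pi_0)$ and $\cL=\cP(\,\cdot\,\otimes\xi)$ with $\xi:=\dot\pi_0$ Hermitian, traceless, and positive semidefinite on $\ker\pi_0$ (the last from $\pi_t\succeq0$ as $t\downarrow0$). I would then seek $\alpha\ge0$ with $\sigma:=\xi+\alpha\pi_0\succeq0$: granting this, $\tr\sigma=\alpha$, and for $\alpha>0$ the map $\cE:=\alpha^{-1}\cP(\,\cdot\,\otimes\sigma)$ is CPTP with $\cL=\cP(\,\cdot\,\otimes\sigma)-\alpha\,\cP(\,\cdot\,\otimes\pi_0)=\alpha(\cE-\cI)$, while $\alpha=0$ forces $\xi=0$, hence $\cL=0$. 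If $\pi_0$ has full rank such an $\alpha$ is immediate; in general one should not test positivity of $\sigma$ itself but of its image $J(\cP(\,\cdot\,\otimes\sigma))=J(\cL)+\alpha J(\cI)$, using complete positivity of $\cP$. Expanding $\cP(\,\cdot\,\otimes\sigma)=\sum_{ij}\sigma_{ij}\cG_{ij}$ in the eigenbasis of $\pi_0$ (with $\cG_{ij}=\cP(\,\cdot\,\otimes\ketbra{e_i}{e_j})$), CP of $\cP$ says the operator matrix $[J(\cG_{ij})]_{ij}$ is positive semidefinite; since $\sum_i q_i J(\cG_{ii})=J(\cI)=\ketbra{\Omega}{\Omega}$ is rank one ($q_i>0$ the eigenvalues of $\pi_0$, $\ket{\Omega}=\sum_i\ket{ii}$) and $\cP$ is trace preserving, each $\cG_{ii}$ with $e_i\in\mathrm{supp}(\pi_0)$ equals $\cI$; block-positivity then forces every $J(\cG_{ij})$ with $i$ or $j$ in $\mathrm{supp}(\pi_0)$ to have range in $\mathrm{span}\{\ket{\Omega}\}$. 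Consequently $J(\cL)$ is a completely positive image of $\xi|_{\ker\pi_0}\succeq0$ plus a real multiple of $J(\cI)$, so $J(\cL)+\alpha J(\cI)\succeq0$ for all large $\alpha$, which gives the claimed decomposition.

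The step I expect to be the main obstacle is precisely this block-positivity book-keeping --- routing the argument through the Choi matrix of $\cP$ and establishing cleanly that $\cG_{ii}=\cI$ on $\mathrm{supp}(\pi_0)$ and that the ``mixed'' blocks are $\ket{\Omega}$-supported --- rather than any single hard estimate; the only analytic ingredient is the boundary positivity of $\xi$ on $\ker\pi_0$. As a sanity check, the contrapositive recovers the obstructions of the main text: a generator with $H\not\propto I$, or the single-jump amplitude-damping generator, has $J(\cL)+\alpha J(\cI)\not\succeq0$ for every $\alpha\ge0$ and so is not of the form $\alpha(\cE-\cI)$, hence not quantum programmable.
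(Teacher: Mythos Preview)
Your overall strategy is sound and genuinely different from the paper's. The paper expands the Kraus operators of $\cP$ in the eigenbasis of $\xi:=\dot\pi_0$, uses the TP identity $\sum_j L_{jk}^\dagger L_{jl}=\delta_{kl}I_S$ to conclude that each diagonal slice $\cE_k(\cdot)=\sum_j L_{jk}(\cdot)L_{jk}^\dagger$ is CPTP, writes $\cL=\alpha(\cE-\cF)$, and then argues $\cF=\cI$ by matching against the GKSL canonical form. You instead work in the eigenbasis of $\pi_0$, exploit the $t=0$ constraint $\cP(\cdot\otimes\pi_0)=\cI$ and the rank-one structure of $J(\cI)=\ketbra{\Omega}{\Omega}$ to pin down $\cG_{ii}=\cI$ on $\mathrm{supp}(\pi_0)$, and aim directly at $J(\cL)+\alpha J(\cI)\succeq0$. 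Your route avoids the detour through the canonical form.

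There is, however, a real gap at ``Consequently $J(\cL)$ is a completely positive image of $\xi|_{\ker\pi_0}$ plus a real multiple of $J(\cI)$.'' Block positivity alone only says that for $i\in\mathrm{supp}(\pi_0)$ the off-diagonal $J(\cG_{ij})$ has \emph{range} in $\mathrm{span}\{\ket{\Omega}\}$, i.e.\ $J(\cG_{ij})=\ket{\Omega}\bra{\phi_{ij}}$ for some $\phi_{ij}$; the resulting cross contributions $\xi_{ij}\ket{\Omega}\bra{\phi_{ij}}+\text{h.c.}$ are not multiples of $\ketbra{\Omega}{\Omega}$ in general, and positivity of $J(\cL)+\alpha\ketbra{\Omega}{\Omega}$ would then require the unverified range condition $P_\perp\ket{\phi_{ij}}\in\mathrm{ran}(P_\perp M P_\perp)$. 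The fix is to invoke TP of $\cP$ a second time: the very identity $\sum_m L_{mi}^\dagger L_{mj}=\delta_{ij}I_S$ shows that $\cG_{ij}$ is trace-annihilating for $i\neq j$, hence $\tr_{S'}J(\cG_{ij})=0$; but $\tr_{S'}(\ket{\Omega}\bra{\phi_{ij}})=0$ forces $\phi_{ij}=0$. Thus every $J(\cG_{ij})$ with $i$ or $j$ in $\mathrm{supp}(\pi_0)$ and $i\neq j$ actually \emph{vanishes}, the cross terms disappear, and your claimed decomposition $J(\cL)=r\,\ketbra{\Omega}{\Omega}+M$ with $r=\tr(P\xi P)\le 0$ and $M=J\big(\cP(\cdot\otimes Q\xi Q)\big)\succeq 0$ follows, giving $J(\cL)+\alpha\ketbra{\Omega}{\Omega}\succeq0$ for all $\alpha\ge -r$. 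With this one extra observation your argument is complete.
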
  

\begin{proof}
Suppose $\cL$ is $\cptp$-programmable, and let $\{\cP, \pi_t\}_t$ be the programming protocol, so that,
\begin{equation*}
    \cP(\rho \ox \pi_t) = e^{t \cL}(\rho),\; \forall \rho\in \cD(\cH_d)\; \forall t \geq 0.
\end{equation*}
Taking the gradient with respect to $t$ at $t=0$, we can derive:
\begin{equation*}
    \cL(\rho)=\frac{d}{dt}\Big|_{t=0}\cP(\rho\ox\pi_t) = \cP(\rho\ox \frac{d\pi_t}{dt}\Big|_{t=0}) = \cP(\rho\ox A),
\end{equation*}
where we denote $A = \frac{d\pi_t}{dt}|_{t=0}$. Since $\pi_t\in\cD(\cH_P)$, and the conservation of probability guarantees that $A$ is both Hermitian and traceless. Suppose $A=\sum_k a_k \ketbra{k}{k}$, denoted the eigen-decomposition where $a_k\in\RR$'s are the eigenvalues. Now write $\cP$ into the Kraus representation $\{E_j\}_j$ for $E_j:\cH_S\ox \cH_P\rightarrow \cH_S$ the Kraus operators. Expand $E_j = \sum_{k} L_{jk} \ox \bra{k}$ with the eigenbasis of $A$. 
The trace-preserving condition of $\cP$ gives,
\begin{equation*}
\begin{aligned}
    \sum_j E_j^{\dagger} E_j = I_S \ox I_P =\sum_{kl} \left(\sum_jL_{jk}^{\dagger}L_{jl}\right) \ox \ketbra{k}{l} \Rightarrow \sum_{j} L_{jk}^{\dagger} L_{jl} = \delta_{kl}I_S.
\end{aligned}
\end{equation*}
We can also derive the action of $\cP$ using the expansion to obtain,
\begin{equation*}
\begin{aligned}
    \cL(\rho) &= \sum_j E_j (\rho \ox A) E_j^{\dagger} = \sum_{j}\sum_{m} a_m \sum_k L_{jk} \ox \bra{k} (\rho \ox \ketbra{m}{m}) \sum_l L_{jl}^{\dagger} \ox \ket{l} \\
    &= \sum_{m}\sum_{jkl} a_m L_{jk} \rho L_{jl}^{\dagger} \delta_{km} \delta_{ml} = \sum_k a_k \sum_j L_{jk} \rho L_{jk}^{\dagger}.
\end{aligned}
\end{equation*}
Notice that for any fixed $k$, from the previous condition, we have $\sum_{j} L_{jk}^{\dagger} L_{jk} = I_S$, satisfying the trace-preserving condition, therefore, defines a valid quantum channel acting on $\cD(\cH_S)$. Denote each of them $\cE_k$ with the Kraus operators $\{L_{jk}\}_j$, we have,
\begin{equation*}
    \cL(\rho) = \sum_k a_k \cE_k(\rho) = \sum_{k, a_k \geq 0} a_k \cE_k(\rho) + \sum_{k, a_k < 0} a_k \cE_k(\rho).
\end{equation*}
Denote $\alpha = \sum_{k, a_k\geq 0} a_k; \beta = \sum_{k, a_k < 0} |a_k|$. Clearly, $\alpha=\beta$. We define $\cE = \frac{1}{\alpha}\sum_{k, a_k \geq 0} a_k \cE_k$, and $\cF = \frac{1}{\alpha}\sum_{k, a_k < 0} |a_k| \cE_k$, to be the positive and negative component quantum channels, respectively, so that, $\cL = \alpha(\cE - \cF)$.

We equate the derived form $\cL = \alpha(\cE - \cF)$ with the canonical Lindblad form:
\begin{equation*}
    \cL(\rho) = \Theta(\rho) - \frac{1}{2}\{\Theta^{\dagger}(I_d), \rho\} - i[H,\rho],
\end{equation*}
where $\Theta$ is some completely positive (CP) map, representing the 'jump' processes; $\Theta^{\dagger}$ is the adjoint map of $\Theta$, and $H$ is some Hermitian operator. Since $\alpha \cE$ is a CP map, we can identify the jump superoperator $\Theta = \alpha \cE$. This implies the remaining terms must match:
\begin{equation*}
    -\alpha \cF(\rho) = - \frac{1}{2}\{\Theta^\dagger(I), \rho\} - i[H, \rho].
\end{equation*}
Using the fact that the dual map of a CPTP map is CP unital. We then, have,
\begin{equation*}
\begin{aligned}
    \alpha \cE(\rho) - \alpha \cF(\rho) &= \alpha \cE(\rho) - \alpha \rho - i[H,\rho]\\
    \Rightarrow \cF(\rho) &= \rho + \frac{i}{\alpha}[H,\rho].
\end{aligned}
\end{equation*}

Notice that, $\cF$ is CPTP. The map $i[H,\cdot]$ is CP iff. $H\propto I_d$ by Lemma~\ref{lem:commutator_cp}. We remain to show that the RHS. is CPTP iff. $H\propto I_d$. The backward direction automatically holds as,
\begin{equation*}
    \cF(\rho) = \rho + \frac{ic}{\alpha}[I_d, \rho] = \cI(\rho),
\end{equation*}
where $H = cI_d$. 

For the forward direction, suppose $H\neq cI_d$. The Choi state of $\cF$ is,
\begin{equation*}
    J({\cF})/d = \Phi_1 - i[I_d\ox H, \Phi_1].
\end{equation*}
Let $\{\ket{\Phi_j}\}_{j=1}^{d^2}$ form an orthonormal basis where $\ket{\Phi_1}$ is the normalized maximally entangled state. We now express $J({\cF})$ into this basis by denoting $a_{kl} = \bra{\Phi_k} J({\cF}) \ket{\Phi_l}$. Notice that
\begin{equation*}
    a_{11} = \bra{\Phi_1} (\Phi_1 - i[I_d\ox H, \Phi_1]) \ket{\Phi_1} = 1.
\end{equation*}
For $k=1,l\geq 2$, the orthogonality implies,
\begin{equation*}
    a_{1l} = \bra{\Phi_1} (\Phi_1 - i[I_d\ox H, \Phi_1]) \ket{\Phi_l} = i\bra{\Phi_1}(I_d\ox H)\ket{\Phi_l}.
\end{equation*}
By Hermiticity, $a_{l1} = a_{1l}^*$. All remaining terms of $k,l\geq 2$ vanish due to the orthogonality, and we derive,
\begin{equation*}
    J({\cF})/d = \begin{pmatrix}
        1 & a_{12} & a_{13} & \hdots\\\
        a_{12}^* & 0 & 0 & \hdots\\
        a_{13}^* & 0 & 0 & \hdots\\
        \vdots & \vdots & \vdots & \ddots
    \end{pmatrix}.
\end{equation*}
Now, we aim to solve for the eigenvalues $\lambda$ by assuming $J({\cF})\ket{\phi} = \lambda \ket{\phi}$ where $\ket{\phi} = c_1\ket{\Phi_1} + \sum_{k>1} c_k\ket{\Phi_k}$. We get a system of equations:
\begin{equation*}
\begin{cases}
    \text{For}\; k=1:\; & c_1 + \sum_{l=2} a_{1l} c_l = \lambda c_1\\
    \text{For}\; k> 1:\; & a_{k1}c_1 = \lambda c_k
\end{cases}.
\end{equation*}
Assuming $\lambda \neq 0$, we have $c_k = a_{k1}c_1 / \lambda$. Substitude this into the first equation to derive,
\begin{equation*}
    c_1\left(1+\frac{1}{\lambda} \sum_{l=2} |a_{1l}|^2 - \lambda\right) = 0 \implies c_1 = 0\; \text{or}\; 1+\frac{1}{\lambda} \sum_{l=2} |a_{1l}|^2 - \lambda = 0.
\end{equation*}
We can assume $c_1 \neq 0$ as the eigenvectors spanned by the orthogonal subspace of $\ket{\Phi_1}$ also spans the kernel of $J({\cF})$. Let $S = \sum_{l=2} |a_{1l}|^2$. We can then solve for the eigenvalues as,
\begin{equation*}
    \lambda_{\pm} = \frac{1 \pm \sqrt{1+4S}}{2}.
\end{equation*}
Since $S \geq 0$, and $S = 0$ iff. all the off-diagonal terms $a_{1l} = 0$, and this only happens when $(I\ox H)\ket{\Phi_1}$ has no component in the orthogonal subspace of $\ket{\Phi_1}$, or $H\propto I_d$. As we assume $H\neq cI_d$. We have $S>0$ and  $\lambda_- < 0$, and $\cF$ is not CP.

Above all, we have shown that $\cF = \cI$ and we conclude
\begin{equation*}
    \cL(\rho) = \alpha(\cE(\rho) - \rho),
\end{equation*}
as required.
\end{proof}

\subsection{Proofs of programming dynamics of isolated systems}\label{appendix:isolated_system}

A particular class of Lindbladians is those without a dissipative part. Let $\gamma_j = 0$ for all $j$, the system becomes isolated and the dynamics are governed by the unitary evolution $\cU_t(\rho_0) = e^{-iHt}(\rho_0) e^{iHt}$. 
\begin{lemma}\label{lem:commutator_cp}
    Let $H$ be a $d$-by-$d$ Hermitian operator and denote its adjoint map as $\cE(\cdot) = -i[H,\cdot]$. Then, $\cE$ is CP iff. $H\propto I_d$.
\end{lemma}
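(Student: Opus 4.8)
The plan is to combine the Choi-matrix test for complete positivity with the elementary observation that the commutator superoperator is trace-annihilating. The $(\Leftarrow)$ direction is immediate: if $H = cI_d$ with $c\in\RR$, then $[H,\cdot]$ vanishes identically, so $\cE\equiv 0$, which is CP.

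For the $(\Rightarrow)$ direction I would first record two structural facts about $\cE(\cdot)=-i[H,\cdot]$. Since $H=H^{\dagger}$, a one-line computation gives $\cE(X)^{\dagger}=-i[H,X^{\dagger}]=\cE(X^{\dagger})$, so $\cE$ is Hermitian-preserving and its Choi operator $J(\cE)=\sum_{i,j}\ketbra{i}{j}\ox\cE(\ketbra{i}{j})$ is Hermitian. Second, $\tr(\cE(X))=-i\,\tr(HX-XH)=0$ for every $X$, whence $\tr(J(\cE))=\sum_i\tr(\cE(\ketbra{i}{i}))=0$. Now I would invoke the standard equivalence ``$\cE$ is CP $\iff J(\cE)\succeq 0$'': a positive semidefinite matrix with vanishing trace must be the zero matrix, so $J(\cE)=0$, and injectivity of the Choi isomorphism forces $\cE\equiv 0$, that is, $[H,X]=0$ for all $X\in\cB(\cH_d)$. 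Testing this on the matrix units $\ketbra{i}{j}$ shows that $H$ commutes with a spanning set of $\cB(\cH_d)$, so $H=cI_d$ for some scalar $c$, which is real since $H$ is Hermitian.

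I do not expect a genuine obstacle: the whole argument collapses to ``a positive semidefinite matrix of zero trace is zero'' applied to $J(\cE)$, and the only step deserving a line of care is the trace identity $\tr(J(\cE))=0$. An equally short alternative is to test complete positivity directly on the maximally entangled state $\ket{\Phi}$: the operator $(\cI\ox\cE)(\proj{\Phi})=-i\,[\,I_d\ox H,\ \proj{\Phi}\,]$ is Hermitian and traceless, so positivity again forces it to vanish, which gives $(I_d\ox H)\ket{\Phi}\propto\ket{\Phi}$ and hence $H\propto I_d$; the failure of positivity when $H\not\propto I_d$ is already visible in the $2\times 2$ principal block spanned by $\ket{\Phi}$ and the component of $(I_d\ox H)\ket{\Phi}$ orthogonal to it, whose off-diagonal entry is then nonzero.
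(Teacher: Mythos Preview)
Your proposal is correct and follows essentially the same route as the paper: compute the Choi operator of $\cE$, note it is Hermitian (from Hermiticity-preservation) and traceless (from $\tr[H,\cdot]=0$), and conclude that positive semidefiniteness forces $J(\cE)=0$, hence $\cE=0$ and $H\propto I_d$. Your write-up is in fact slightly more explicit than the paper's, which stops at ``$\sum_j\lambda_j=0$ and CP give a contradiction'' without spelling out that the contradiction only arises when $H\not\propto I_d$ (equivalently, that the PSD-and-traceless argument yields $J(\cE)=0$ and then $H$ central).
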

\begin{proof}
    The backward condition is trivially satisfied as $\cE$ becomes the zero map. For the forward direction, derive the Choi state,
    \begin{equation*}
        J({\cE})/d = \frac{1}{d}\sum_{ij} \ketbra{i}{j} \ox \cE(\ketbra{i}{j}) = -\frac{i}{d}\sum_{ij} \ketbra{i}{j} \ox [H, \ketbra{i}{j}] = -i[I_d\ox H, \Phi], 
    \end{equation*}
    where $\Phi = \ketbra{\Phi}{\Phi}$ and $\ket{\Phi} = \frac{1}{\sqrt{d}}\sum_j \ket{jj}$. Since $\cE$ is Hermitian-preserving, $J({\cE})$ is Hermitian and have real spectrum $(\lambda_j)_j$. Notice that $\tr(J({\cE})) = 0$ by using the property of commutator, which implies $\sum_j \lambda_j = 0$. Assuming $\cE$ is CP can cause a contradiction, as there must exist $\lambda_j <0$. 
\end{proof}

This Lemma showcases that for any $H$ that is not proportional to the identity operator, there exists no program states and a CPTP programming channel to retrieve the entire dynamics. However, we can still estimate the cost in the scenario. By  defining suitable program states $\pi_t$, one can estimate the value of the cost by solving the following optimization problem,
\begin{equation}
\begin{aligned}
&\underline{\textbf{Primal Program}}\\
    2^{\gamma(\pi_t, \cL)}=\min\;&p_1+p_2\\
    {\rm s.t.}\;\;& J^{\cP}:=J_1-J_2,\\
    &\tr_{E}[J^{\cP}(\pi_t^{T} \ox I_{SS'})] = d\dketbra{U_t}{U_t}_{SS'},\, \forall t \in \RR,\\
    &J_1\geq 0,\,\tr_{S'}[J_1] =p_1 I_{ES},\\
    &J_2\geq 0,\,\tr_{S'}[J_2] =p_2 I_{ES},
\end{aligned}
\end{equation}
where $U_t = e^{iHt}$. We start by considering the program state $\pi_t = \dketbra{U_t}{U_t}/d$, i.e., the Choi state of the evolution. From the previous study, we have derived that there exists an optimal solution to the problem by raising the symmetry,
\begin{equation}
    [J^{\cP}_{SPS'}, U_t \ox \overline{U}_t\ox U_{\tau} \ox \overline{U}_{\tau}] = 0\quad\forall t,\tau\in\RR.
\end{equation}
Assuming the Hamiltonian $H = \sum_{\alpha} E_{\alpha} \ketbra{E_\alpha}{E_\alpha}$ where $\{\ket{E_\alpha}\}$ form an orthonormal eigenbasis. Then, $e^{iHt} = \sum_{\alpha}e^{iE_{\alpha}t} \ketbra{E_\alpha}{E_\alpha}$. The Choi state reads,
\begin{equation}
\begin{aligned}
    (I\ox U_t) \dketbra{I}{I} (I\ox U_{-t}) = \frac{1}{d}\sum_{\alpha,\beta} \ketbra{E_\alpha}{E_\beta} \ox e^{i(E_{\alpha} - E_{\beta})t}\ketbra{E_\alpha}{E_\beta}.
\end{aligned}
\end{equation}
We then have to investigate the representation theory of the one-parameter group $e^{iHt}$. 
\begin{lemma}
    For a Hermitian operator $H\in\cB(\cH_d)$, and a linear operator $A\in\cB(\cH_d)$, $[A,H] = 0$ if and only if $[e^{iHt},A] = 0$ for all $t\in\RR$.
\end{lemma}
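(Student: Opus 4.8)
The plan is to prove the two implications separately, both by elementary means, since in finite dimension all the relevant series converge absolutely and term-by-term differentiation is justified.

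For the forward direction ($[A,H]=0 \Rightarrow [e^{iHt},A]=0$), I would argue that $[A,H]=0$ implies $[A,H^n]=0$ for every $n\in\NN$ by a trivial induction (commuting $A$ past one factor of $H$ at a time). Hence $A$ commutes with every partial sum $\sum_{n=0}^N (iHt)^n/n!$, and since in $\cB(\cH_d)$ the exponential series converges in operator norm and multiplication is norm-continuous, passing to the limit gives $A e^{iHt}=e^{iHt}A$ for all $t\in\RR$.

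For the converse, I would take the hypothesis $e^{iHt}A = A e^{iHt}$ for all $t$ and differentiate both sides at $t=0$. Since $t\mapsto e^{iHt}$ is (real-)analytic with $\frac{d}{dt}e^{iHt} = iH e^{iHt}$, evaluating the derivative of the identity at $t=0$ yields $iHA = iAH$, i.e. $[H,A]=0$. Equivalently, one can expand $e^{iHt} = I + iHt + O(t^2)$, substitute into $e^{iHt}A - Ae^{iHt}=0$, and read off the coefficient of $t^1$; either route is a one-line calculation. (If one prefers a spectral argument matching the surrounding discussion, one may instead write $H=\sum_\alpha E_\alpha \proj{E_\alpha}$, note that $e^{iHt}Ae^{-iHt}$ has matrix entries $e^{i(E_\alpha-E_\beta)t}\langle E_\alpha|A|E_\beta\rangle$, and observe that these equal $\langle E_\alpha|A|E_\beta\rangle$ for all $t$ only if $\langle E_\alpha|A|E_\beta\rangle=0$ whenever $E_\alpha\neq E_\beta$, which is exactly $[A,H]=0$; but this is more than needed.)

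There is no substantive obstacle here: the only points requiring a word of care are the absolute convergence of the exponential series and the legitimacy of differentiating the operator identity in $t$, both of which are automatic in the finite-dimensional setting. I would therefore keep the proof to a few lines, emphasizing the power-series argument one way and the derivative-at-zero argument the other way.
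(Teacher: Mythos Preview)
Your proposal is correct and matches the paper's proof essentially line for line: the forward direction uses $[A,H]=0\Rightarrow[A,H^n]=0$ and the power-series expansion of $e^{iHt}$, and the converse differentiates $[e^{iHt},A]=0$ in $t$ and evaluates at $t=0$. The only cosmetic difference is that you spell out the convergence and differentiability justifications more carefully than the paper does.
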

\begin{proof}
    For the ($\Rightarrow$) part, suppose $[A,H] = 0$. Then $[A, H^j] = 0$ for any power $j\in\NN$. Expanding the matrix exponentiation to get,
    \begin{equation*}
        [e^{iHt}, A] = \sum_{j=0} \frac{(it)^j}{j!}[H^j, A] = 0 \quad \forall t\in\RR.
    \end{equation*}
    For the ($\Leftarrow$) part, suppose $[e^{iHt}, A] = 0$ for any $t$. We have,
    \begin{equation*}
        \frac{d}{dt}([e^{iHt}, A]) = iHe^{iHt}A - AiHe^{iHt} = 0.
    \end{equation*}
    Let $t = 0$, we have $e^{iHt} = I$ and hence proves the backward direction, i.e., $[H,A] = 0$.
\end{proof}

Based on the above Lemma, any linear operator $A\in \cB(\cH_d^{\ox t})$ commutes with $(e^{iHt})^{\ox t}$ must satisfy,
\begin{equation}
    \sum_{j = 0}^{t-1}[H^{(j)}, A] = 0
\end{equation}
where $H^{(j)} = I^{\ox j} \ox H \ox I^{\ox t-1-j}$. Let $t = 2$ in our cases, suppose $H = \sum_{j=0}^{d-1}E_j\ketbra{E_j}{E_j}$ be the spectral decomposition of $H$. Then for the operator $H\ox I + I\ox H$, $\{\ket{E_{jk}}:=\ket{E_j} \ox \ket{E_k}\}$ forms the eigenvectors with the corresponding eigenvalues $E_{jk}:=E_j + E_k$. Express $A$ in this basis to have,
\begin{equation*}
    [A, H\ox I + I\ox H] = \sum_{jk,mn} a_{jk,mn}(E_{jk} - E_{mn}) \ketbra{E_{jk}}{E_{mn}} = 0.
\end{equation*}
Since $a_{jk,mn} \neq 0$ in general, we derive the condition of commutation as $E_{jk} = E_{mn}$. Similarly, for $A$ commute with $e^{iHt} \ox e^{-i\overline{H} t}$ for all $t$. We have,
\begin{equation}
    [A, H\ox I - I \ox \overline{H}] = 0.
\end{equation}
Let us denote the matrix elements of $A$ in this basis $\{\ket{E_{j\overline{k}}} = \ket{E_j} \ox \ket{E^*_k}\}$. The commutation relation implies that $C_{jk, mn}$ can be non-zero only if the corresponding eigenvalues are equal, i.e., $E_j - E_k = E_m - E_n$. This is the fundamental structural constraint on $A$. It partitions the space $\mathcal{H}_d \otimes \mathcal{H}_d$ into subspaces based on the value of the energy difference $E_j - E_k$ (Bohr frequencies). The operator $A$ cannot connect these different subspaces. In other words, if we define the set of possible eigenvalue differences,
\begin{equation*}
    \Lambda = \{\mu \in \RR: \mu = E_j - E_k \text{ for }j,k\}.
\end{equation*}
Then for each $\mu$, the subspace $V_{\mu} = \spn\{\ket{E_{j\overline{k}}}: E_j - E_k = \mu\}$ is an invariant block under $H\ox I - I \ox \overline{H}.$ The commutation condition guarantees that $A = \bigoplus_{\mu\in \Lambda} A_{\mu}$ where $A_{\mu}$ acts on the subspace $V_{\mu}$.

Taking a simple example of qubit phase gate, which has been well-studied in~\cite{Vidal2002storing,Sedlak2020probabilistic}. Consider $H = Z$. Notice that $Z$ has $\pm 1$ spaces making $\Lambda = \{2, 0, -2\}$ and,
\begin{equation}
    V_{2} = \spn\{\ket{01}\}; \; V_{0} = \spn\{\ket{00}, \ket{11}\}; \; V_{-2} = \spn\{\ket{10}\};
\end{equation}

As from previous investigation, the optimal protocol $J^{\cP}$ can be constructed inspired by the decomposition of space. In the following discussion, we will prove that a class of purely coherent Lindbladians can be HPTP-programmable by constructing feasible protocols.
\begin{proposition}\label{prop:hptp_protocol_arbitrary_H_appendix}
    Let $\cL(\rho) = i[H, \rho]$ where $H$ is an arbitrary $d$-dimensional Hermitian operator with eigen-decomposition $H = \sum_{j=1}^K \lambda_j \Pi_j$ with $\Pi_j$ the orthogonal projections onto each (degenerate) eigenspaces. Then, there exists an HPTP protocol $\{\cP, \pi_t\}_t$ to exactly program $e^{\cL t}$ for any $t\geq 0$, and $d_P = K$.
\end{proposition}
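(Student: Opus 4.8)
The plan is to exploit that the whole semigroup is a function of only $K$ phases. Write $U_t=e^{iHt}=\sum_{j=1}^{K}e^{i\lambda_j t}\Pi_j$, so that $e^{t\cL}(\rho)=U_t\rho U_t^\dagger=\sum_{j,k}e^{i(\lambda_j-\lambda_k)t}\Pi_j\rho\Pi_k$, and the only $t$‑dependence sits in the numbers $e^{i\lambda_1 t},\dots,e^{i\lambda_K t}$. Accordingly I would take $\cH_P=\CC^{K}$ with a fixed orthonormal basis $\{\ket{j}\}_{j=1}^{K}$ and program states $\pi_t=\ketbra{\psi_t}{\psi_t}$ with $\ket{\psi_t}=\tfrac{1}{\sqrt{K}}\sum_{j=1}^{K}e^{i\lambda_j t}\ket{j}$. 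These are unit vectors for every $t\in\RR$, the map $t\mapsto\pi_t$ is (real) analytic, and the key feature is that the matrix elements $\langle j|\pi_t|k\rangle=\tfrac1K e^{i(\lambda_j-\lambda_k)t}$ already store all the Bohr‑frequency phases needed to rebuild the coherences $\Pi_j\rho\Pi_k$; this is exactly where a genuinely coherent (pure) program state is needed, since a diagonal/classical program would only reproduce the pinched part $\sum_j\Pi_j\rho\Pi_j$ and lose the cross terms.

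Next I would introduce the ``slice‑and‑sandwich'' linear map $\cP_0:\cB(\cH_S\ox\cH_P)\to\cB(\cH_S)$ defined by
\begin{equation*}
\cP_0(X)\;=\;K\sum_{j,k=1}^{K}\Pi_j\,(I_S\ox\bra{j})\,X\,(I_S\ox\ket{k})\,\Pi_k ,
\end{equation*}
and verify two things by direct computation. First, substituting $X=\rho\ox\pi_t$ and using $\langle j|\pi_t|k\rangle=\tfrac1K e^{i(\lambda_j-\lambda_k)t}$ gives $\cP_0(\rho\ox\pi_t)=\sum_{j,k}e^{i(\lambda_j-\lambda_k)t}\Pi_j\rho\Pi_k=e^{t\cL}(\rho)$ for all $\rho$ and $t\ge 0$. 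Second, $\cP_0$ is Hermitian‑preserving: taking the adjoint, using that the $\Pi_j$ are Hermitian, and relabelling $j\leftrightarrow k$ yields $\cP_0(X^\dagger)=\cP_0(X)^\dagger$.

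The step that needs care — and the one I expect to be the main (if mild) obstacle — is that $\cP_0$ is \emph{not} trace‑preserving on all of $\cB(\cH_S\ox\cH_P)$: using $\Pi_j\Pi_k=\delta_{jk}\Pi_j$ and cyclicity one gets $\tr[\cP_0(X)]=K\sum_j\tr[\Pi_j(I_S\ox\bra j)X(I_S\ox\ket j)]$, which equals $\tr[X]$ on the product inputs $\rho\ox\pi_t$ but not in general. I would repair this with the rank‑one trace correction $\cP(X):=\cP_0(X)+\tfrac{1}{d_S}\bigl(\tr[X]-\tr[\cP_0(X)]\bigr)I_S$: it is still linear; still Hermitian‑preserving, since on Hermitian $X$ the coefficient $\tr[X]-\tr[\cP_0(X)]$ is real and $I_S$ is Hermitian; trace‑preserving by construction, as $\tr I_S=d_S$; and, because $\tr[\cP_0(\rho\ox\pi_t)]=\tr[\rho]=\tr[\rho\ox\pi_t]$, the correction vanishes on precisely the inputs we use, so $\cP(\rho\ox\pi_t)=e^{t\cL}(\rho)$ still holds. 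Hence $\cP\in\hptp(SP\to S)$, and $\{\cP,\pi_t\}$ is an exact HPTP programming protocol of $\cL$ with $d_P=K$, which is the claim. The only genuine content beyond bookkeeping is the simultaneous check that the patch preserves Hermiticity‑preservation, restores trace‑preservation globally, and is invisible on the programmed inputs.

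Finally I would add a few remarks: consistently with Proposition~\ref{prop:sufficient_cond_cptp}, $\cP$ cannot be completely positive once $H\not\propto I_d$, so this is genuinely a quasi‑quantum protocol; since $\cP$ is trace‑preserving, by the results recalled in the preliminaries $\nu(\cP)=\log\|\cP\|_\diamond<\infty$, so decomposing $\cP=\eta_+\cE_+-\eta_-\cE_-$ into two CPTP maps following Jiang \emph{et al.} upgrades the construction to an explicit Monte‑Carlo quasi‑sampling and post‑processing scheme with finite overhead (equivalently, $\cP$ can be realized as a channel on $SP$ together with a post‑selected ancilla implementing the controlled readout by $\{\Pi_j\}$). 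The qubit phase gate $H=Z$ has $K=2$ and recovers the single‑qubit‑program example of~\cite{Vidal2002storing}.
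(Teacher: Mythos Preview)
Your proof is correct and follows essentially the same route as the paper: identical program states $\ket{\psi_t}\propto\sum_j e^{i\lambda_j t}\ket{j}$ on $\CC^K$ and identical core retrieval map $\cP_0(X)=K\sum_{j,k}(\Pi_j\ox\bra{j})X(\Pi_k\ox\ket{k})$, which is exactly the paper's $\Phi_1$. The only difference is in how trace preservation is restored: the paper precomposes $\Phi_1$ with $\cI_S\ox\cM$, where $\cM(\sigma)=\tfrac{I_K}{K}\tr\sigma+\sigma-\Delta(\sigma)$ replaces the diagonal of the program register by the flat one, whereas you add the rank-one output correction $\tfrac{1}{d_S}(\tr X-\tr\cP_0(X))I_S$; both fixes are HP, make the composite map TP, and vanish on the programmed inputs $\rho\ox\pi_t$, so they are equally valid and equally elementary.
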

\begin{proof}
    We construct $\cP$ using the projections, as follows: Define the map $\Phi_{0,1}$,
    \begin{equation*}
    \begin{aligned}
        \Phi_0(\rho_{SP}) &= (\cI_S \ox \cM)(\rho_{SP}),\; \cM(\rho_P) = (\frac{I_K}{K}\tr(\rho_P) + \rho_P  - \Delta(\rho_P))\\
        \Phi_1(\rho_{SP}) &= K \sum_{jk} (\Pi_j \ox \bra{j})(\rho_{SP})(\Pi_k \ox \ket{k}),
    \end{aligned}
    \end{equation*}
    where $\Delta$ is the completely dephasing channel with respect to the computational basis. We claim that the map $\cP:=\Phi_1\circ \Phi_0$ and the program states $\ket{\pi_t}=\sum_{j=1}^K e^{i\lambda_j t} \ket{j}$ serves as a feasible solution.

    Our first goal is to prove $\cP$ is HPTP. Clearly since $\Phi_0$ and $\Phi_1$ are HP maps, their composition $\cP$ is HP. It remains to prove that $\cP$ is TP. Notice that for any bipartite linear operator $M = \sum_{ij} a_{ij} P_i \ox Q_j$, we have,
    \begin{equation*}
    \begin{aligned}
        \Phi_0(M) &= \sum_{ij} a_{ij} P_i \ox \cM(Q_j) = \sum_{ij} a_{ij} P_i \ox  (\frac{I_K}{K}\tr(Q_j) + Q_j - \Delta(Q_j))
    \end{aligned}
    \end{equation*}
    Therefore, $\cP(M)$ reads,
    \begin{equation*}
    \begin{aligned}
        \cP(M) &= \Phi_1\left(\sum_{ij} a_{ij} P_i \ox  (\frac{I_K}{K}\tr(Q_j) + Q_j  - \Delta(Q_j))\right) \\
        &= K\sum_{ij} a_{ij} \sum_{kl} (\Pi_k \ox \bra{k})(P_i \ox  (\frac{I_K}{K}\tr(Q_j) + Q_j - \Delta(Q_j)))(\Pi_l \ox \ket{l})\\
        &= K\sum_{ij} a_{ij}\left(\frac{\tr(Q_j)}{K} \sum_{k=l} \Pi_k P_i \Pi_k + \sum_{k\neq l} \bra{k}Q_j\ket{l}\Pi_k P_i \Pi_l\right)
    \end{aligned}
    \end{equation*}
    Now, taking trace operation and using the orthogonality of $\Pi_{k}$'s to derive,
    \begin{equation*}
        \tr(\cP(M)) = \sum_{ij} a_{ij} \tr(Q_j) \tr((\sum_k\Pi_k) P_i) = \sum_{ij} a_{ij} \tr(Q_j) \tr(P_i) = \tr(M).
    \end{equation*}
    The above showcase $\cP$ is indeed TP. We remains to show that the protocol is feasible. But,
    \begin{equation*}
    \begin{aligned}
        \cP(\rho\ox \ketbra{\pi_t}{\pi_t}) &= \Phi_1(\rho \ox \ketbra{\pi_t}{\pi_t})\\
        &= \sum_{k} \Pi_k \rho \Pi_k + \sum_{k\neq l} e^{i(\lambda_k + \lambda_l)t}\Pi_k \rho \Pi_l\\
        &= \left(\sum_{k=1}^K e^{i\lambda_k t}\Pi_k\right) \rho \left(\sum_{l=1}^K e^{i\lambda_l t}\Pi_l\right)\\
        &= e^{iHt}\rho e^{-iHt},
    \end{aligned}
    \end{equation*}
    which is exactly the evolution $e^{\cL t}(\rho)$ at time $t$.
\end{proof}

This can be seen as a natural extension on the results from~\cite{Vidal2002storing}. In particular, for any given single-qubit non-trivial Hamiltonian $H$, there exists an HPTP map and corresponding program states $\pi_t$ of dimension $2$ to program the dynamics generated by $\cL = -i[H, \cdot]$.

According to the research of Jiaqing et. al.~\cite{Jiang2021physical}, we can write the SDP and its dual problem to get the cost of HPTP maps. By solving them, we can find that the cost of $\Phi$ is 2. The general programmability of coherent Lindbladians is closely related to the unitary discrimination problem. Let $U,V$ be two fixed unitary operators. The CPTP-programmability requires strong relations between them. In particular, we have,
\begin{lemma}
    Let $\cS$ be a set of unitary operations. Then, $\cS$ is CPTP-programmable if and only if $\cS=\{V|V=e^{i\phi}U,\phi\in \mathbb{R},U \text{is a fixed unitary operator}\}$.
\end{lemma}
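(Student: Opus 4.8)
The plan is to prove the equivalence by dispatching the trivial ``if'' direction directly and reducing the ``only if'' direction, for a completely arbitrary set $\cS$, to a Nielsen--Chuang-type dilation identity. For the easy direction, if every $V\in\cS$ satisfies $V=e^{i\phi}U$ for one fixed unitary $U$, then each $V$ induces the same unitary channel $\cU(\cdot)=U(\cdot)U^{\dagger}$; this single channel is CPTP-programmable with a trivial resource by taking $d_P=1$ with the unique program state $\pi$ and setting $\cP=\cU\circ\tr_P$, which is CPTP and obeys $\cP(\rho\ox\pi)=V\rho V^{\dagger}$ for all $V\in\cS$.

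For the converse, suppose $\{\cP,\{\pi_V\}_{V\in\cS}\}$ programs $\cS$. I would first argue that the programs may be taken pure: since $\cP(\cdot\ox\pi_V)=\cV$ is a unitary channel, and unitary channels are extreme points of the CPTP cone, every pure state in the support of $\pi_V$ individually implements $\cV$, so fix one such $\ket{\pi_V}$. Choosing a Stinespring isometry $W$ for $\cP$, the output $\cP(\proj{\psi}\ox\proj{\pi_V})=V\proj{\psi}V^{\dagger}$ is pure for every input, so $W(\ket{\psi}\ox\ket{\pi_V})$ is a purification of a pure state and hence factorizes as $(V\ket{\psi})\ox\ket{\eta_V}$, with the environment vector $\ket{\eta_V}$ forced to be independent of $\ket{\psi}$ by linearity of $W$.

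The heart of the argument is to evaluate, for any $U,V\in\cS$, the overlap preserved by the isometry $W$, which gives
\[
\braket{\psi}{\phi}\,\braket{\pi_U}{\pi_V}=\bra{\psi}U^{\dagger}V\ket{\phi}\,\braket{\eta_U}{\eta_V}\qquad\forall\,\ket{\psi},\ket{\phi}.
\]
Reading both sides as sesquilinear forms yields the operator identity $\braket{\pi_U}{\pi_V}\,I=\braket{\eta_U}{\eta_V}\,U^{\dagger}V$. Thus either $\braket{\eta_U}{\eta_V}\neq0$, forcing $U^{\dagger}V\propto I$, i.e.\ $V=e^{i\phi}U$ and $\cU=\cV$; or $\braket{\eta_U}{\eta_V}=0$, forcing $\braket{\pi_U}{\pi_V}=0$. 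This dichotomy --- phase-equivalent, or orthogonal programs --- holds for \emph{every} pair in $\cS$ and is the general-set form of the no-programming theorem. To exclude the orthogonal alternative I then invoke the standing definition of programmability in this framework: the program assignment is a continuous family valued in a fixed register $\cH_P$ of finite dimension $d_P$, which admits at most $d_P$ mutually orthogonal states; the dichotomy then permits only finitely many pairwise-distinct channels among $\{\cV\}_{V\in\cS}$, and continuity of $V\mapsto\pi_V$ makes the induced channel map locally constant, hence constant, on $\cS$. Therefore all $V\in\cS$ coincide up to a global phase, i.e.\ $\cS\subseteq\{e^{i\phi}U:\phi\in\RR\}$, completing the equivalence.

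The delicate parts are twofold. The first is making the dilation step rigorous for a \emph{general} CPTP processor with possibly mixed programs: the extremality reduction to pure programs and the ``purity forces a product form with a program-independent environment'' step are where care is needed. The second, and the real obstacle, is converting the pairwise dichotomy into a global collapse for an arbitrary $\cS$; this is precisely where the finite program dimension $d_P$ and the continuity of the program family built into the definition of programmability must be used to forbid an orthogonal continuum of program states and thereby force the channel map to be constant.
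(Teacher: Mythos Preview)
Your argument is correct and follows the classical Nielsen--Chuang route: purify the programs, pass to a Stinespring isometry, use purity of the output to force a product structure, and read off the operator identity $\braket{\pi_U}{\pi_V}\,I=\braket{\eta_U}{\eta_V}\,U^{\dagger}V$ with its phase-equal/orthogonal dichotomy. The paper's proof is the same idea in spirit---it also takes a Stinespring extension $\tilde{\cP}$ and uses that a pure marginal forces a tensor product---but it collapses the argument into a single asserted identity $\tilde{\cP}(\pi_V\ox\psi)=(U\psi U^{\dagger})\ox\sum_jW_j\pi_VW_j^{\dagger}$ with the \emph{same} environment Kraus operators $\{W_j\}$ for every program, from which $U\psi U^{\dagger}=cV\psi V^{\dagger}$ is read off directly by partial trace; it never isolates the orthogonal branch of the dichotomy.

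Your treatment is more careful on two points. First, you justify why the environment vector $\ket{\eta_V}$ is independent of $\ket{\psi}$ (linearity of the isometry), which is the step underlying the paper's use of fixed $W_j$'s. Second, and more importantly, you recognise that the dichotomy alone does not yield the conclusion for an arbitrary set $\cS$: a finite $\cS$ of distinct unitaries is trivially CPTP-programmable with orthogonal flag states, so the lemma can only hold under the paper's standing hypotheses of a continuous one-parameter family and finite $d_P$. Your closing argument (at most $d_P$ pairwise-orthogonal programs, hence finitely many distinct channels, hence the continuous assignment is locally constant and therefore constant on the connected index set) is exactly the missing ingredient, and you correctly flag connectedness of $\cS$ as the implicit assumption that makes ``locally constant $\Rightarrow$ constant'' go through.
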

\begin{proof}
    Consider any pure state $\psi = \ketbra{\psi}{\psi}$. Let $\cP$ is the fixed $\cptp$ programming channel and $\pi_U$ is the program state for $U$. One can always extend $\cP$ to a fixed channel $\tilde{\cP}$ s.t. $\tr_{E'}\circ \tilde{\cP} = \cP$. Then,
    \begin{equation}\label{UP1}
       \tilde{\cP}(\pi_U \ox \psi) = (U\psi U^\dagger)\otimes \pi_U' = (U\psi U^\dagger)\otimes\sum_j W_j \pi_U W_j^{\dagger}, 
    \end{equation}
    since $U\psi U^{\dagger}$ is pure. Here, $\{W_j\}_j$ is a set of Kraus operators. Suppose $\cP$ is also the programming channel for another unitary operation $V$, $\cP$ meets,
    \begin{equation}\label{eq:UP2}
       \tilde{\cP}(\pi_V \ox \psi) =(U\psi U^\dagger)\otimes\sum_j W_j \pi_V W_j^{\dagger} = V\psi V^{\dagger} \ox \pi_V'. 
    \end{equation}
    Taking partial trace over environment system of Eq.~\ref{eq:UP2}, we have
    $U\psi U^\dagger=cV\psi V^\dagger$, where $c\in\CC$ is a constant on the unit circle. So we have $V=e^{i\phi}U$.
\end{proof}

This result matches our previous findings for general CPTP-programmable Lindbladians. For any evolutionary operator $\cU_t=e^{-iHt}(\cdot)e^{iHt}$, it is CPTP-programmable if and only if $H=cI$, where $I$ is identity operator. On the other hand, another particular example of non-CPTP programmable system that has special importance in the studies chaotic behaviors of open quantum dynamics~\cite{Zanardi2021information,Styliaris2021information} is the Exchange–Dephasing Model. Consider a $2$-qubit purely coherent Lindbladian $\cL = i[S, \cdot]$ where $S$ is the SWAP gate defined as,
\begin{equation*}
    S = 
    \begin{pmatrix}
        1 & 0 & 0 & 0\\
        0 & 0 & 1 & 0\\
        0 & 1 & 0 & 0\\
        0 & 0 & 0 & 1
    \end{pmatrix}
\end{equation*}
Notice that $S$ is Hermitian and has two distinct eigenspaces spanned by $V_- = \spn_{\RR}\{\ket{\Phi_-}\}$ and $V_+ = \spn_{\RR}\{\ket{00}, \ket{11}, \ket{\Psi_+}\}$ where $\ket{\Psi_{\pm}} = \frac{1}{\sqrt{2}}(\ket{01}\pm \ket{10})$. Denote $\Pi_{\pm}$ as the projections onto the positive and negative spaces $V_{\pm}$ of $S$, respectively and the basis $\mathbb{B} = \{\ket{\Phi_{\pm}}, \ket{00}, \ket{11}\}$. Define the map $\Phi_{0,1}$ as,
\begin{equation}\label{eq:programmable_swap}
\begin{aligned}
    \Phi_0 &= \cI_A \ox \cM,\; \cM(\rho_B) = (\frac{I_2}{2}\tr(\rho_B) + \rho_B  - \Delta(\rho_B))\\
    \Phi_1 &= 2(\Pi_+ \ox \bra{0} + \Pi_- \ox \bra{1})(\cdot)(\Pi_+ \ox \ket{0} + \Pi_- \ox \ket{1}),
\end{aligned}
\end{equation}
where $\Delta$ is the completely dephasing channel with respect to the computational basis. From proposition~\ref{prop:hptp_protocol_arbitrary_H_appendix}, the map $\cP:=\Phi_1\circ\Phi_0$ is a feasible HPTP programming map for $\cL$ with program states $\ket{\pi_t} = \frac{1}{\sqrt{2}}(e^{it}\ket{0} + e^{-it}\ket{1})$.


\subsection{Proofs of programming collective dephasing two-qubit system with exchange interaction}\label{appendix:proofs_of_programming_dephasing_exchange}
In this section, we investigate a two-qubit open quantum system governed by a Lindbladian that combines coherent exchange dynamics with collective dephasing. The coherent evolution is generated by the two-qubit swap Hamiltonian, while dissipation is introduced through a collective dephasing mechanism that projects the system onto the Bell-state basis.

\begin{lemma}
    Consider $\mathcal{L} = i \, \Op{ad}_S + \lambda \left( \mathcal{D}_{\mathbb{B}} - \cI \right)$ be the Lindbladian of collective dephasing two-qubit system with exchange interaction, with $S$ is the SWAP operator and $\mathcal{D}_{\mathbb{B}}$ is the dephasing superoperator in Bell basis. The notation $\Op{ad}_S$ denotes the adjoint action of the SWAP operator $S$ on the operators, $\Op{ad}_S=[S,\cdot]$. In the case, $[\Op{ad}_S,\mathcal{D}_{\mathbb{B}}]=0$, the dynamics can be given by a sum of a time-dependent unitary and a dephasing channel $\mathcal{E}_t = e^{t \mathcal{L}}=e^{-\lambda t}e^{\Op{ad}_S}+(1-e^{-\lambda t})\mathcal{D}_{\mathbb{B}}$.
\end{lemma}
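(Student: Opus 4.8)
The plan is to use the commutativity hypothesis to factor the semigroup into a product of two commuting one-parameter groups, evaluate each factor in closed form, and then collapse the product using the fact that the Bell basis $\mathbb{B}$ diagonalizes the SWAP operator $S$. First I would write $\cL = \cL_1 + \cL_2$ with the two commuting generators $\cL_1 = i\,\Op{ad}_S$ and $\cL_2 = \lambda(\cD_{\mathbb{B}}-\cI)$; since $[\cL_1,\cL_2]=0$ on the finite-dimensional superoperator space and $\cI$ is central, $e^{t\cL} = e^{it\,\Op{ad}_S}\,e^{\lambda t(\cD_{\mathbb{B}}-\cI)} = e^{-\lambda t}\,e^{it\,\Op{ad}_S}\,e^{\lambda t\,\cD_{\mathbb{B}}}$.

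Next I would compute $e^{\lambda t\,\cD_{\mathbb{B}}}$ explicitly. The key observation is that $\cD_{\mathbb{B}}$ is idempotent, $\cD_{\mathbb{B}}^2 = \cD_{\mathbb{B}}$ (projecting twice onto the Bell-diagonal part equals projecting once), so its exponential series telescopes: $e^{s\cD_{\mathbb{B}}} = \cI + (e^{s}-1)\cD_{\mathbb{B}}$ for any real $s$. Taking $s=\lambda t$ and substituting gives
\begin{equation*}
e^{t\cL} = e^{-\lambda t}\,e^{it\,\Op{ad}_S} + (1-e^{-\lambda t})\,e^{it\,\Op{ad}_S}\circ\cD_{\mathbb{B}},
\end{equation*}
so the statement reduces to showing $e^{it\,\Op{ad}_S}\circ\cD_{\mathbb{B}} = \cD_{\mathbb{B}}$.

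For that last step I would write $\cD_{\mathbb{B}}(\rho) = \sum_k \langle\beta_k|\rho|\beta_k\rangle\,\ketbra{\beta_k}{\beta_k}$, where $\{\ket{\beta_k}\}$ is the Bell basis, and use that each $\ket{\beta_k}$ is an eigenvector of $S$ with eigenvalue $\pm 1$ (the antisymmetric Bell state with $-1$, the other three with $+1$). Then $e^{itS}\ketbra{\beta_k}{\beta_k}e^{-itS} = e^{\pm it}e^{\mp it}\ketbra{\beta_k}{\beta_k} = \ketbra{\beta_k}{\beta_k}$, so $e^{it\,\Op{ad}_S}$ fixes the range of $\cD_{\mathbb{B}}$ pointwise, giving $e^{it\,\Op{ad}_S}\circ\cD_{\mathbb{B}} = \cD_{\mathbb{B}}$ and hence $\cE_t = e^{-\lambda t}\,e^{it\,\Op{ad}_S} + (1-e^{-\lambda t})\,\cD_{\mathbb{B}}$. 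I would also remark that the same spectral fact — that the operator basis $\{\ketbra{\beta_j}{\beta_k}\}$ simultaneously diagonalizes $\Op{ad}_S$ and $\cD_{\mathbb{B}}$ — is exactly what makes the hypothesis $[\Op{ad}_S,\cD_{\mathbb{B}}]=0$ automatic in this model, rather than an extra assumption.

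The only place requiring genuine care is the bookkeeping of the operator-exponential manipulations: one should note that $e^{t(\cL_1+\cL_2)}=e^{t\cL_1}e^{t\cL_2}$ follows from $[\cL_1,\cL_2]=0$ on the finite-dimensional Liouville space (immediate), and that $\cD_{\mathbb{B}}$ being a genuine orthogonal projection in the Hilbert–Schmidt inner product is what licenses $\cD_{\mathbb{B}}^2=\cD_{\mathbb{B}}$. Neither is a real obstacle; the substantive content is the eigenbasis argument in the final step, so that is where I would spend the words.
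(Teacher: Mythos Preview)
Your proposal is correct and follows essentially the same route as the paper: factor the exponential using commutativity, sum the series for $e^{\lambda t\,\cD_{\mathbb{B}}}$ via idempotence of $\cD_{\mathbb{B}}$, and then use $e^{it\,\Op{ad}_S}\circ\cD_{\mathbb{B}}=\cD_{\mathbb{B}}$. In fact you give more justification for that last identity than the paper does, spelling out that the Bell states are eigenvectors of $S$ so each $\ketbra{\beta_k}{\beta_k}$ is fixed by $e^{it\,\Op{ad}_S}$.
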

\begin{proof}
    Since $[\Op{ad}_S,\mathcal{D}_{\mathbb{B}}]=0$, the channel can be written as,
    \begin{equation*}
        \mathcal{E}_t = e^{t \mathcal{L}}= e^{i \,\Op{ad}_S + \lambda \left( \mathcal{D}_{\mathbb{B}} - \cI \right)} =  e^{i \,\Op{ad}_S}e^{\lambda t (\mathcal{D}_{\mathbb{B}}-\cI)}.
    \end{equation*}
    For dephasing channel $\mathcal{D}_{\mathbb{B}}$, any power $n >1$ satisfy $\mathcal{D}^n_{\mathbb{B}}=\mathcal{D}_{\mathbb{B}}$, so we can expand the exponential series$e^{\lambda t \mathcal{D}_{\mathbb{B}}}$,
    \begin{equation*}
        e^{\lambda t \mathcal{D}_\mathbb{B}} = \sum_{n=0}^\infty \frac{(\lambda t)^n}{n!} \mathcal{D}_\mathbb{B}^n = \cI + \left( \sum_{n=1}^\infty \frac{(\lambda t)^n}{n!} \right) \mathcal{D}_\mathbb{B} = \cI + (e^{\lambda t} - 1) \mathcal{D}_\mathbb{B}.
    \end{equation*}
    Hence,
    \begin{equation*}
        e^{\lambda t (\mathcal{D}_{\mathbb{B}}-\cI)}=e^{-\lambda t} \cI + \left(1 - e^{-\lambda t}\right) \mathcal{D}_{\mathbb{B}}.
    \end{equation*}
    Noted that $\mathcal{D}_{\text{B}}$ commutes with the unitary evolution and leaves its image invariant,
    \begin{equation*}
        \mathcal{D}_\mathbb{B} \circ e^{i \,\Op{ad}_S} = e^{i \,\Op{ad}_S}\circ \mathcal{D}_\mathbb{B}=\mathcal{D}_\mathbb{B}.
    \end{equation*}
    Therefore,
    \begin{equation*}
        \mathcal{E}_t =e^{i \,\Op{ad}_S}e^{\lambda t (\mathcal{D}_{\mathbb{B}}-\cI)} =e^{-\lambda t}e^{\Op{ad}_S}+(1-e^{-\lambda t})\mathcal{D}_{\mathbb{B}}.
    \end{equation*}
\end{proof}
\begin{proposition}\label{prop: S_dephasing_map}
    Let $\cP$ be a fixed HPTP map consists of a post-measurement process delivering outcomes $\{\ket{j}\}_{j=0}^1$ and $2$ corresponding decomposition operations $\{\cP_j\}$ (shown in Fig.~\ref{fig: SDep}). Define the program state $\pi_t = \ketbra{\theta_t}{\theta_t}\otimes\ketbra{\sigma_t}{\sigma_t}$, where the (pure) program state $\ket{\theta_t} = \frac{1}{\sqrt{2}}(e^{it}\ket{0} + e^{-it}\ket{1})$ and the control qubit $\ket{\sigma_t} = \sqrt{e^{-\lambda t}}\ket{0}+\sqrt{1-e^{-\lambda t}}\ket{1}$.
\end{proposition}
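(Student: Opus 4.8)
The plan is to show that $\{\cP,\pi_t\}_t$ is an \emph{exact} quasi-quantum (HPTP) programming protocol for $\cL = i\,\Op{ad}_S + \lambda(\cD_{\mathbb{B}}-\cI)$, i.e.\ that $\cP$ is Hermitian-preserving and trace-preserving and that $\cP(\rho\ox\pi_t) = e^{t\cL}(\rho)$ for all $\rho\in\cD(\cH_S)$ and all $t\ge 0$. The entry point is the decomposition of the immediately preceding lemma, $e^{t\cL} = e^{-\lambda t}\,e^{it\Op{ad}_S} + (1-e^{-\lambda t})\,\cD_{\mathbb{B}}$, which for $\lambda,t\ge 0$ is an honest convex mixture of the SWAP unitary channel $\rho\mapsto e^{itS}\rho e^{-itS}$ and the static Bell-basis dephasing channel. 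Hence the retrieval map only has to (i) realize the one-parameter family $e^{it\Op{ad}_S}$ from a program register, which is exactly what Proposition~\ref{prop:hptp_protocol_arbitrary_H_appendix} delivers for $H=S$ through Eq.~\eqref{eq:programmable_swap}, and (ii) branch between this and the fixed channel $\cD_{\mathbb{B}}$ with branching weights carried by a control qubit.

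Concretely, I would write $\cH_P = \cH_{P'}\ox\cH_C$, with $\cH_{P'}$ carrying $\ket{\theta_t}$ and the qubit $\cH_C$ carrying $\ket{\sigma_t}$, and define $\cP$ by: measure $\cH_C$ in the computational basis; on outcome $0$ apply to $\cH_S\ox\cH_{P'}$ the HPTP map $\cP_0:=\Phi_1\circ\Phi_0$ of Eq.~\eqref{eq:programmable_swap}; on outcome $1$ discard $\cH_{P'}$ and apply $\cP_1:=\cD_{\mathbb{B}}$ to $\cH_S$. As a single $t$-independent superoperator this is $\cP(\omega_{SP}) = \cP_0(\bra{0}_C\omega_{SP}\ket{0}_C) + \cP_1\big(\tr_{P'}[\bra{1}_C\omega_{SP}\ket{1}_C]\big)$. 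Hermiticity preservation is immediate: $\cP_0$ is HP by Proposition~\ref{prop:hptp_protocol_arbitrary_H_appendix}, $\cP_1=\cD_{\mathbb{B}}$ is a channel, and the sandwich and partial-trace operations are HP. Trace preservation follows from $\tr[\cP(\omega_{SP})] = \tr[\bra{0}_C\omega_{SP}\ket{0}_C] + \tr[\bra{1}_C\omega_{SP}\ket{1}_C] = \tr[\omega_{SP}]$, using that both $\cP_0$ (shown TP in Proposition~\ref{prop:hptp_protocol_arbitrary_H_appendix}) and $\cD_{\mathbb{B}}$ are TP.

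For feasibility, substitute $\pi_t = \ketbra{\theta_t}{\theta_t}\ox\ketbra{\sigma_t}{\sigma_t}$: the control measurement yields outcome $0$ with weight $|\langle 0|\sigma_t\rangle|^2 = e^{-\lambda t}$ and outcome $1$ with weight $1-e^{-\lambda t}$, so
\begin{equation*}
\cP(\rho\ox\pi_t) = e^{-\lambda t}\,\cP_0\big(\rho\ox\ketbra{\theta_t}{\theta_t}\big) + (1-e^{-\lambda t})\,\cD_{\mathbb{B}}(\rho).
\end{equation*}
Now invoke Proposition~\ref{prop:hptp_protocol_arbitrary_H_appendix} with $H=S$: its two eigenprojections $\Pi_{\pm}$ with eigenvalues $\pm 1$ give $K=2$, and $\ket{\theta_t}$ is precisely its prescribed program state, hence $\cP_0(\rho\ox\ketbra{\theta_t}{\theta_t}) = e^{itS}\rho e^{-itS} = e^{it\Op{ad}_S}(\rho)$. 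Combining, $\cP(\rho\ox\pi_t) = e^{-\lambda t}e^{it\Op{ad}_S}(\rho) + (1-e^{-\lambda t})\cD_{\mathbb{B}}(\rho) = e^{t\cL}(\rho)$, as claimed; the protocol is exact, uses $d_P = 4$, and its programming cost can be controlled through the SDP~\eqref{eq:primal_program_pga} using that $\cP_0$ has cost $2$.

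I do not expect a deep obstacle; the work is bookkeeping. The points requiring care are: (a) confirming that the outcome-$1$ branch must trace out the otherwise idle $\ket{\theta_t}$ register so that $\cP$ is a genuine map on all of $\cH_S\ox\cH_P$ rather than only on product inputs; (b) verifying trace and Hermiticity preservation on arbitrary, possibly entangled, inputs $\omega_{SP}$ and not just on $\rho\ox\pi_t$; and (c) checking that the feasibility identity of Proposition~\ref{prop:hptp_protocol_arbitrary_H_appendix} is untouched when $\cP_0$ is called inside the outcome-$0$ branch of the larger measure-and-branch map. All three go through directly once the map is written out explicitly.
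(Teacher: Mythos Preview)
Your proposal is correct and follows essentially the same route as the paper: define the retrieval map as a measure-and-branch on the control qubit, with the outcome-$0$ branch invoking the HPTP SWAP-programming map of Eq.~\eqref{eq:programmable_swap} and the outcome-$1$ branch applying the fixed $\cD_{\mathbb{B}}$, then verify HPTP and exact feasibility via the preceding lemma's decomposition $e^{t\cL}=e^{-\lambda t}e^{it\Op{ad}_S}+(1-e^{-\lambda t})\cD_{\mathbb{B}}$. Your bookkeeping is arguably tidier than the paper's (you explicitly trace out the idle $\ket{\theta_t}$ register on outcome $1$ and check TP on arbitrary inputs), but there is no substantive difference.
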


\begin{figure}[!ht]
    \centering
    \includegraphics[width=0.7\linewidth]{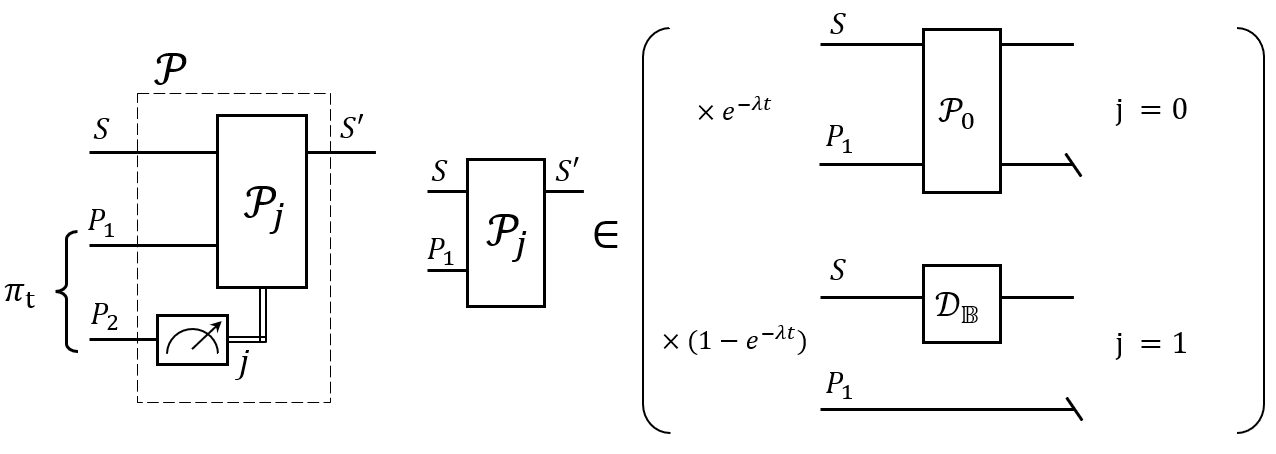}
    \caption{An $\hptp$-programming protocol of swap-dephasing channel based on $2$ measurement outcomes of the control qubit. The measurement in the programming channel is taken with respect to the computational basis $\ket{j}$ and the outcome $j$ decides which $\cP_j$ is applied. The quasi-decomposition coefficients are attached on the LHS, in front of each operation circuit.}
    \label{fig: SDep}
\end{figure}

\begin{proof}
    Let the full program space be \(R' := R\otimes C\), where \(R\) carries the original program state $\ket\theta_t$ and \(C\) is the control qubit in state of $\ket{\sigma_t}$ .
    Define
    \[
    \Pi_0 := I_R\otimes \ket{0}\bra{0},
    \qquad
    \Pi_1 := I_R\otimes \ket{1}\bra{1},
    \]
    so that \(\Pi_0+\Pi_1 = I_{R'}\) and \(\Pi_0\Pi_1 = 0\).
    Suppose the programming supermap $\mathcal{P}_1$ that always outputs the fixed channel $\mathcal{D}_{\mathbb B}$ independently of the program:
    \[
    \mathcal{P}_D(X) := \mathcal{D}_{\mathbb B},
    \qquad
    \forall\, X \in \mathcal{B}(\mathcal{H}_{R'}).
    \]
    Since $\mathcal{D}_{\mathbb B}$ is CPTP, the associate map $\mathcal{P}_1$ is CPTP.
    We now define the new programming supermap,
    \begin{equation*}
      \mathcal{P}(M)
      \;:=\;
      \mathcal{P}_0\!\big(\Pi_0 M \Pi_0\big)
      \;+\;
      \mathcal{P}_1\!\big(\Pi_1 M \Pi_1\big),
      \qquad
      M\in\mathcal{B}(\mathcal{H}_{R'}),
    \end{equation*}
    where $\mathcal{P}_0$ is the same programming map as Lemma \ref{eq:programmable_swap}, and therefore $\mathcal{P}_0$ is HPTP. For \(M = M^\dagger\), \(\Pi_k M \Pi_k\) is Hermitian for \(k=0,1\). Hence, both $\mathcal{P}_0$ and $\mathcal{P}_1$ in $\mathcal{P}$ are HPTP, so $\mathcal{P}$ is HPTP.
    The program state $\pi_t$ can be written as,
    \begin{equation*}
        \pi_t = \theta_t \otimes \sigma_t
        =
        e^{-\lambda t}\theta_t\otimes\ket{0}\bra{0}
        +
        (1-e^{-\lambda t})\theta_t\otimes\ket{1}\bra{1}
        + \text{(off-diagonal terms)}.
    \end{equation*}  
    Hence,
    \[
    \Pi_0\pi_t\Pi_0
    =
    e^{-\lambda t}\theta_t\otimes\ket{0}\bra{0},
    \qquad
    \Pi_1\pi_t\Pi_1
    =
    (1-e^{-\lambda t})\theta_t\otimes\ket{1}\bra{1}.
    \]
    Therefore, for any $\rho$,
    \begin{equation*}
    \begin{aligned}
        \mathcal{P}(\rho \otimes \pi_t) & = \mathcal{P}_0\!\big(\rho \otimes\Pi_0 \pi_t \Pi_0\big)+
        \mathcal{P}_1\!\big(\rho \otimes\Pi_1 \pi_t \Pi_1\big)\\
        & = e^{-\lambda t}\mathcal{P}_0\!\big(\rho \otimes\pi_t \otimes \ketbra{0}{0})+
        (1-e^{-\lambda t})\mathcal{P}_1\!\big(\rho \otimes\pi_t \otimes \ketbra{1}{1})\\
        & = e^{-\lambda t}e^{\Op{ad}_S}(\rho)+(1-e^{-\lambda t})\mathcal{D}_{\mathbb{B}} (\rho)\\
        & = \mathcal{E}_t (\rho)
    \end{aligned}
    \end{equation*}
\end{proof}

\begin{lemma}\label{lem: swap_dephasing_curve}
    Consider the initial state $\ket{\psi_0}=\ket{01}=\frac{1}{\sqrt{2}}(\ket{\Psi^{+}}+\ket{\Psi^{-}})$ be the ground state of the collective dephasing two-qubit open quantum system with exchange interaction, where $\ket{\Psi^{\pm}}$ is the Bell state $\ket{\Psi^{\pm}} = \frac{1}{\sqrt{2}}(\ket{01}\pm\ket{10})$. For the Lindbladian $\mathcal{L} = i \, \Op{ad}_S + \lambda \left( \mathcal{D}_{\mathbb{B}} - \cI \right)$, the comparison of the evolution of the initial state overlap $\bra{\psi_0}\rho(t)\ket{\psi_0}=\frac12[1 + e^{-\lambda t}\cos(2t)]$.
\end{lemma}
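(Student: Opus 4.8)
The plan is to use the closed form of the semigroup derived in the preceding lemma, namely $\cE_t = e^{t\cL} = e^{-\lambda t}\,e^{it\Op{ad}_S} + (1-e^{-\lambda t})\,\cD_{\mathbb B}$ with $e^{it\Op{ad}_S}(\rho) = e^{itS}\rho\,e^{-itS}$, apply it to $\rho_0 = \ketbra{\psi_0}{\psi_0}$, and evaluate the two contributions to $\bra{\psi_0}\rho(t)\ket{\psi_0}$ separately. The key structural input is that $\ket{\psi_0} = \tfrac{1}{\sqrt2}(\ket{\Psi^+}+\ket{\Psi^-})$ decomposes into exactly one vector from each eigenspace of $S$, and that both $\ket{\Psi^\pm}$ are Bell basis vectors, so that no cross terms survive either operation.

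For the coherent piece, I would first record that $S$ is Hermitian with eigenvalue $+1$ on $V_+ = \spn\{\ket{00},\ket{11},\ket{\Psi^+}\}$ and $-1$ on $V_- = \spn\{\ket{\Psi^-}\}$, so $e^{itS} = e^{it}\Pi_+ + e^{-it}\Pi_-$. Hence $e^{itS}\ket{\psi_0} = \tfrac{1}{\sqrt2}(e^{it}\ket{\Psi^+}+e^{-it}\ket{\Psi^-})$, which gives $\bra{\psi_0}e^{itS}\ket{\psi_0} = \cos t$ and therefore $\bra{\psi_0}\,e^{it\Op{ad}_S}(\rho_0)\,\ket{\psi_0} = \lvert\bra{\psi_0}e^{itS}\ket{\psi_0}\rvert^2 = \cos^2 t = \tfrac12(1+\cos 2t)$. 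For the dissipative piece, since $\ket{\Psi^\pm}$ belong to the Bell basis $\mathbb B$, the completely dephasing map annihilates the coherence between them, so $\cD_{\mathbb B}(\rho_0) = \tfrac12\ketbra{\Psi^+}{\Psi^+} + \tfrac12\ketbra{\Psi^-}{\Psi^-}$; using $\lvert\langle\psi_0|\Psi^\pm\rangle\rvert^2 = \tfrac12$ this yields $\bra{\psi_0}\cD_{\mathbb B}(\rho_0)\ket{\psi_0} = \tfrac12$.

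Combining the two contributions with the coefficients $e^{-\lambda t}$ and $1-e^{-\lambda t}$ gives $\bra{\psi_0}\rho(t)\ket{\psi_0} = e^{-\lambda t}\cdot\tfrac12(1+\cos 2t) + (1-e^{-\lambda t})\cdot\tfrac12 = \tfrac12\bigl[1 + e^{-\lambda t}\cos 2t\bigr]$, as claimed. I do not anticipate a genuine obstacle here; the only points needing a moment's care are that the factorization of $\cE_t$ relies on $[\Op{ad}_S,\cD_{\mathbb B}]=0$ (already established) and that the clean eigenspace/Bell-basis structure of $\ket{\psi_0}$ is what makes all off-diagonal terms drop out, so the computation reduces to the two scalar overlaps above.
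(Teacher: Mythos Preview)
Your proof is correct and arguably cleaner than the paper's. The paper does \emph{not} invoke the closed form $\cE_t = e^{-\lambda t}e^{it\Op{ad}_S} + (1-e^{-\lambda t})\cD_{\mathbb B}$ from the preceding lemma; instead it works directly from the master equation $\dot\rho = i[S,\rho] + \lambda(\cD_{\mathbb B}(\rho)-\rho)$ in the two-dimensional Bell subspace $\{\ket{\Psi^+},\ket{\Psi^-}\}$, observes that the diagonal entries are stationary while the off-diagonals obey $\dot\rho_{+-} = (2i-\lambda)\rho_{+-}$, solves these ODEs, and then contracts the resulting $2\times 2$ matrix against $v = \tfrac{1}{\sqrt2}(1,1)^T$. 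Your route instead reuses the factorized semigroup already established, splits the overlap into the unitary term $\lvert\bra{\psi_0}e^{itS}\ket{\psi_0}\rvert^2 = \cos^2 t$ and the dephased term $\tfrac12$, and recombines. The paper's approach is more self-contained (it does not logically depend on the previous lemma), while yours is shorter and makes transparent exactly which structural facts are doing the work: the eigenspace decomposition of $\ket{\psi_0}$ under $S$ and its membership in the span of two Bell vectors. Both arrive at the same expression by what is ultimately the same spectral content, just packaged differently.
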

\begin{proof}
    The SWAP operator has the following eigenstates with eigenvalues,
    \begin{equation*}
        S|\Psi_+\rangle = +|\Psi_+\rangle,\qquad
        S|\Psi_-\rangle = -|\Psi_-\rangle.
    \end{equation*}
    Denote that $\alpha$ and $\beta$ are indices that label the Bell eigenstates of the SWAP. Hence,
    \begin{equation*}
        \alpha,\beta \in \{+, -\},\quad |\alpha\rangle \in {|\Psi_+\rangle,|\Psi_-\rangle},
    \end{equation*}
    and $\rho_{\alpha\beta} = \langle \alpha|\rho|\beta\rangle$ are the corresponding matrix elements of $\rho$ in this eigenbasis,
    \begin{equation*}
        \rho(t) =
        \begin{pmatrix}
        \rho_{++}(t) & \rho_{+-}(t)\\
        \rho_{-+}(t) & \rho_{--}(t)
        \end{pmatrix}.
    \end{equation*}
    At $t=0$, 
    \begin{equation*}
        \rho(0) =
        \begin{pmatrix}
        \rho_{++}(0) & \rho_{+-}(0)\\
        \rho_{-+}(0) & \rho_{--}(0)
        \end{pmatrix}=
        \ketbra{01}{01}=
        \frac{1}{2}
        \begin{pmatrix}
            1&1\\
            1&1
        \end{pmatrix}
    \end{equation*}
    The dephasing channel $\mathcal{D}_{\mathbb{B}}$ acts by killing all off-diagonal elements in this eigenbasis,
    \begin{equation*}
        \mathcal{D}_{\mathbb{B}}(\rho) = \sum_{\alpha} P_\alpha \rho P_\alpha,
        \quad P_i = \ketbra{\alpha}{\alpha},
    \end{equation*}
    Denote that $\lambda_\alpha$ and $\lambda_\beta$ are the eigenvalues of $S$ with eigenstates $\ket{\alpha}$ and $\ket{\beta}$. In the eigenbasis of $S$, the commutator has matrix elements,
    \begin{equation*}
        (\Op{ad}_S(\rho))_{\alpha\beta}
   = (S\rho - \rho S)_{\alpha\beta}
   = (\lambda_\alpha - \lambda_\beta)\rho_{\alpha\beta}.
    \end{equation*}
    We can now write the equation and solve for the diagonal and off-diagonal entries,
    \begin{equation*}
        \dot{\rho} = i[S,\rho] + \lambda(\mathcal{D}_{\mathbb{B}}(\rho) - \rho)
    \end{equation*}
    For the diagonal entries, $\alpha=\beta$, the commutator has zero diagonal in the eigenbasis,
    \begin{equation*}
        (\Op{ad}_S(\rho))_{\alpha\alpha} = (\lambda_\alpha-\lambda_\alpha)\rho_{\alpha\alpha} = 0,
    \end{equation*}
    while dephasing $\mathcal{D}_{\mathbb{B}}$ does nothing to the diagonal,
    \begin{equation*}
        \frac{d}{dt}\rho_{\alpha\alpha}== i(\Op{ad}_S(\rho))_{\alpha\alpha}
+\lambda\bigl[(\mathcal{D}_{\mathbb{B}}(\rho))_{\alpha\alpha}-\rho_{\alpha\alpha}\bigr]
  = 0,
    \end{equation*}
    so $\rho_{\pm\pm}(t) = \rho_{\pm\pm}(0) = \frac12$.
    For the off-diagonal entries, $\alpha\neq\beta$, 
    \begin{equation*}
        \frac{d}{dt}\rho_{\alpha\beta}
        = i(\lambda_\alpha-\lambda_\beta)\rho_{\alpha\beta}
        +\lambda(0 - \rho_{\alpha\beta})= \bigl[i(\lambda_\alpha-\lambda_\beta) - \lambda\bigr]\rho_{\alpha\beta}.
    \end{equation*}
    With eigenvalues $\lambda_\pm=\pm1$, the solution of this ordinary differential equation is,
    \begin{equation*}
    \begin{aligned}
        &\rho_{+-}(t) = \rho_{+-}(0)e^{(2i - \lambda)t} = \frac12 e^{(-\lambda + 2i)t},\\
        & \rho_{-+}(t) = \rho_{-+}(0)e^{(-2i - \lambda)t} = \frac12 e^{(-\lambda -2i)t}.
    \end{aligned}
    \end{equation*}
    Hence,
    \begin{equation*}
        \rho(t) =
        \begin{pmatrix}
        \rho_{++}(t) & \rho_{+-}(t)\\
        \rho_{-+}(t) & \rho_{--}(t)
        \end{pmatrix}= \frac{1}{2}
        \begin{pmatrix}
            1 & e^{-\lambda t}e^{i2t}\\
            e^{-\lambda t}e^{-i2t} & 1 
        \end{pmatrix}.
    \end{equation*}
    We already know the coordinates of $|01\rangle$ in basis of $\Psi_\pm$ is,
    \begin{equation*}
        |01\rangle \ \leftrightarrow\ v = \frac{1}{\sqrt2}\begin{pmatrix}1\\1\end{pmatrix}.
    \end{equation*}
    Hence,
    \begin{equation*}
        \bra{\psi_0}\rho(t)\ket{\psi_0}=v^{\dagger}\rho(t)v=\frac1{4}\left[2 + e^{-\lambda t}(e^{i2t}+e^{-i2t})\right]. 
    \end{equation*}
    Recall that $e^{i2t}+e^{-i2t} = 2\cos(2t)$, therefore,
    \begin{equation*}
        \bra{\psi_0}\rho(t)\ket{\psi_0}= \frac12\left[1 + e^{-\lambda t}\cos(2t)\right]
    \end{equation*}
\end{proof}

\subsection{Proofs of programming photon loss system}\label{appendix:proofs_of_photonic_loss}

In this section, we take a step to investigate a purely dissipative Lindbladian that is also not CPTP-programmable. The photonic loss systems are essential for understanding how dissipation and decoherence shape light–matter interactions, enabling accurate modeling of open quantum systems and non-Hermitian dynamics. Insights into engineered loss inform the design of robust photonic devices by optimizing performance, stability, and noise resilience. We consider the Lindbladian characterized by a single jump operator $\sqrt{\gamma}\ketbra{0}{1}$ corresponding to the annihilation process of rate $\gamma$. The following results can be established.
\begin{lemma}
    Let $\cL$ be the Lindbladian of a qubit photon loss system with one jump operator $L = \sqrt{\gamma} \ketbra{0}{1}$. Then, $\cL$ is not CPTP-programmable.
\end{lemma}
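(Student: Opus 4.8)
The plan is to apply the necessary condition of Proposition~\ref{prop:sufficient_cond_cptp} in contrapositive form: it suffices to show that no quantum channel $\cE$ and constant $\alpha\geq 0$ satisfy $\cL(\rho)=\alpha\big(\cE(\rho)-\rho\big)$ for all $\rho$. Everything then reduces to checking one candidate map against complete positivity.

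First I would write $\cL$ in matrix units on the qubit. With $L=\sqrt{\gamma}\ketbra{0}{1}$ one has $L^{\dagger}L=\gamma\proj{1}$, so a one-line computation gives
\begin{equation*}
\cL(\rho)=\gamma\left(\rho_{11}\proj{0}-\tfrac12\{\proj{1},\rho\}\right)=\gamma\begin{pmatrix}\rho_{11} & -\tfrac12\rho_{01}\\ -\tfrac12\rho_{10} & -\rho_{11}\end{pmatrix}.
\end{equation*}
Since $\cL\neq 0$, the case $\alpha=0$ is excluded, so any candidate decomposition forces the unique map $\cE=\cI+\alpha^{-1}\cL$ with $\alpha>0$, which is automatically Hermitian- and trace-preserving. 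Setting $x:=\gamma/\alpha>0$, this $\cE$ is an amplitude-damping-type map: $\cE(\proj{0})=\proj{0}$, $\cE(\proj{1})=x\proj{0}+(1-x)\proj{1}$, and $\cE(\ketbra{0}{1})=(1-\tfrac{x}{2})\ketbra{0}{1}$. The only thing left to decide is whether $\cE$ is completely positive.

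Next I would compute the Choi operator $J(\cE)$; in the basis $\{\ket{00},\ket{01},\ket{10},\ket{11}\}$ it splits as the zero block on $\ket{01}$, the scalar $x$ on $\ket{10}$, and the $2\times2$ block $\big(\begin{smallmatrix}1 & 1-x/2\\ 1-x/2 & 1-x\end{smallmatrix}\big)$ on $\spn\{\ket{00},\ket{11}\}$. Positive semidefiniteness of that $2\times2$ block requires $(1-\tfrac{x}{2})^{2}\leq 1-x$, i.e.\ $x^{2}/4\leq 0$, hence $x=0$. But $x=\gamma/\alpha>0$ since $\gamma>0$ and $\alpha$ is finite, a contradiction. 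Therefore no admissible pair $(\cE,\alpha)$ exists, and by Proposition~\ref{prop:sufficient_cond_cptp} the photon-loss Lindbladian is not $\cptp$-programmable.

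The step I expect to matter most is the complete-positivity check, not anything about the spectrum: the eigenvalues of $\cL$ are all real and nonpositive ($0,-\gamma,-\tfrac{\gamma}{2},-\tfrac{\gamma}{2}$), so $\cE=\cI+\alpha^{-1}\cL$ has spectrum inside the unit disc whenever $\gamma/\alpha\in[0,2]$, and even positivity of $\cE$ on density operators does not obstruct the decomposition. It is precisely the Choi-matrix inequality $(1-x/2)^2\le 1-x$ — whose unique solution is $x=0$ — that rules out every finite $\alpha$, so care is needed to use the full CP condition rather than a weaker positivity or spectral argument.
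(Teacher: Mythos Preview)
Your proof is correct and follows essentially the same route as the paper: write $\cE=\cI+\alpha^{-1}\cL$, compute its Choi matrix, and observe that the $2\times 2$ block on $\spn\{\ket{00},\ket{11}\}$ has determinant $-x^{2}/4<0$ (the paper writes $\eta$ for your $x=\gamma/\alpha$). The only cosmetic difference is that you phrase the obstruction as the Sylvester inequality $(1-x/2)^{2}\le 1-x$ while the paper computes the determinant directly; your closing remark distinguishing CP from mere positivity/spectral conditions is a nice addition not present in the original.
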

\begin{proof}
    Suppose the damping rate $\gamma > 0$. For a qubit with basis ${|0\rangle, |1\rangle}$, if we write the action of $\cL$ as 
    \begin{equation*}
        \cL(\rho) = \cL\left(\begin{pmatrix} \rho_{00} & \rho_{01} \\ \rho_{10} & \rho_{11} \end{pmatrix}\right) = \gamma
        \begin{pmatrix} \rho_{11} & -\frac{1}{2}\rho_{01} \\ -\frac{1}{2}\rho_{10} & -\rho_{11} \end{pmatrix},
    \end{equation*} 
    We analyze $\mathcal{L}$ as a linear operator on the space of $2 \times 2$ density matrices. 
    Now suppose there exist $\alpha > 0$ and a quantum channel $\cE$ such that
    \begin{equation*}
    \cL = \alpha(\cE - \cI).
    \end{equation*}
    We can rearrange the assumed equality to solve for the map $\cE(\rho) = \rho + \frac{1}{\alpha}\cL(\rho)$ so that,
    \begin{equation*}
        \cE(\rho) = 
        \begin{pmatrix} \rho_{00} + \frac{\gamma}{\alpha}\rho_{11} & (1-\frac{\gamma}{2\alpha})\rho_{01} \\ (1-\frac{\gamma}{2\alpha})\rho_{10} & (1- \frac{\gamma}{\alpha})\rho_{11} \end{pmatrix}.
    \end{equation*}
    For $\cE$ to be a quantum channel, it must be a CPTP linear map, or equivalent to say $J(\cE) \geq 0$ and $\tr_2(J(\cE)) = I_2$. 
    Now, we construct the $4 \times 4$ Choi matrix $J(\mathcal{E})$ in the basis ${|00\rangle, |01\rangle, |10\rangle, |11\rangle}$:
    \begin{equation*}
    J(\cE) = \sum_{ij} \ketbra{i}{j} \ox \cE(\ketbra{i}{j}) = 
    \begin{pmatrix}
        1 & 0 & 0 & 1-\frac{\gamma}{2\alpha}\\
        0 & 0 & 0 & 0\\
        0 & 0 & \frac{\gamma}{\alpha}& 0\\
        1 - \frac{\gamma}{2\alpha} & 0 & 0 & 1 - \frac{\gamma}{\alpha}
    \end{pmatrix}
\end{equation*}
For this matrix to be positive semidefinite, all of its principal minors must be non-negative. Setting $\eta = \gamma / \alpha$, we reorder the basis to $\{|00\rangle, |11\rangle, |01\rangle, |10\rangle\}$ to make the block structure,
$$ J'(\mathcal{E}) = \begin{pmatrix} 1 & 1-\frac{\eta}{2} & 0 & 0 \\ 1-\frac{\eta}{2} & 1-\eta & 0 & 0 \\ 0 & 0 & 0 & 0 \\ 0 & 0 & 0 & \eta \end{pmatrix} 
$$ 
This is a block-diagonal matrix. For $J'(\mathcal{E})$ to be positive semidefinite, each block on the diagonal must be positive semidefinite. The two $2 \times 2$ block on the right-down corner is trivially positive semidefinite. It remains to check the $2 \times 2$ left up-corner block $M$ must be positive semidefinite. However, its determinant
\begin{equation*}
    \det(M) = (1-\eta) - (1-\frac{\eta}{2})^2 = -\frac{\eta^2}{4} < 0,
\end{equation*}
as $\eta > 0$, which contradicts the CP condition of $\cE$, and therefore, $\cL$ is not CPTP-programmable.
\end{proof}

Denote $\cE^{\Op{AD}}_{\gamma}$ to be the amplitude damping channel where $\cE^{\Op{AD}}_{\gamma}(\rho) = E_0\rho E_0^{\dagger} + E_1 \rho E_1^{\dagger}$ with,
\begin{equation*}
    E_0 = 
    \begin{pmatrix}
        1 & 0\\
        0 & \sqrt{1-\gamma}
    \end{pmatrix};\quad
    E_1 = 
    \begin{pmatrix}
        0 & \sqrt{\gamma}\\
        0 & 0
    \end{pmatrix}
\end{equation*}
From~\cite{Nielsen2010quantum}, one can simulate amplitude damping channel via a simple parameterized quantum circuit as shown in Fig.~\ref{fig:amp} where $\theta = 2\arcsin{(\sqrt{\gamma})}$. Notice that the $\Op{CR}_Y$ gate can be decomposed into 
\begin{equation}\label{eq:decomp_cry}
    \Op{CR}_Y(\theta)=\Op{CNOT}\left(I\otimes R_Y\left(-\frac{\theta}{2}\right)\right)\Op{CNOT}\left(I\otimes R_Y\left(\frac{\theta}{2}\right)\right)
\end{equation}
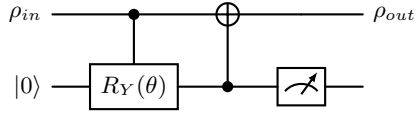
\begin{figure}[h!]
    \centering
    \begin{quantikz}[transparent]
    \lstick{$\rho_{in}$} & \ctrl[]{1} & \targ{} & {} & \rstick{$\rho_{out}$} \\
    \lstick{$\ket{0}$} & \gate[1]{R_Y(\theta)} & \ctrl[]{-1} & \meter{} & \\
    \end{quantikz}
    \caption{Circuit implementation  for realizing the amplitude damping channel.}
    \label{fig:amp}
\end{figure}
As from the studies on circuit cutting~\cite{Piveteau2023circuit,Jing2025circuit,Mitarai2021constructing}, one can decompose $\Op{CNOT}$ operation, denoted as curly $\cC\cX$ into affine combination of product operations:
\begin{equation}
    \cC\cX = \Pi_0 \ox \cI + \Pi_1 \ox \cX + \cI \ox \Pi_+ + \cZ \ox \Pi_- - \frac{1}{2}\cS \ox \hat{\cH}\cS^\dagger\hat{\cH} - \frac{1}{2}\cS^\dagger \ox \hat{\cH}\cS\hat{\cH}, 
\end{equation}
where $\cX(\rho) = X\rho X$, $\cZ(\rho) = Z\rho Z$, $\cS(\rho) = S\rho S^\dagger$, $\cS^\dagger(\rho) = S^\dagger \rho S$, and $\hat{\cH}(\rho) = H\rho H$, are the single-qubit common gate operations; $\Pi_{\pm}(\rho) = \ketbra{\pm}{\pm}X\ketbra{\pm}{\pm}$ are the projection operation onto the states $\ket{\pm}$. Each of the decomposition operation is attached with a coefficient $\alpha_j$ with the corresponding sign $\Op{sgn}(\alpha_j)$. The total sampling overhead of realizing the operation is computed as the summation of all the decomposition coefficients $\kappa = 4+2\times (1/2) = 5$. Combine with the quasi-sampling method, we propose an $\hptp$-programming protocol for the amplitude damping channel with the details in the following proposition.

\begin{proposition}
    Let $\cP$ be a fixed $\hptp$ map consists of a post-measurement process delivering outcomes $\{\ket{j}\}_{j=1}^6$ and $6$ corresponding decomposition operations $\{\cP_j\}$ (shown in Fig.~\ref{fig:decom}). Define the program states $\pi_{\theta} = \sum_{j=1}^6 p_j\sigma_j(\theta) \ox \ketbra{j}{j}_{P_2}$ where the sampling probability $p_j = |\alpha_j| / \kappa$, the (pure) states $\sigma_j$'s are,
    \begin{equation*}
    \begin{cases}
        \ket{\sigma_1(\theta)} = \ket{0},\\
        \ket{\sigma_2(\theta)} = \sin\left(\frac{\theta}{2}\right)\ket{0} + \cos\left(\frac{\theta}{2}\right)\ket{1},\\
        \ket{\sigma_3(\theta)} =\frac{1}{\sqrt{2}}(\sqrt{1+\sin(\theta / 2)} \ket{0} + \frac{\cos(\theta / 2)}{\sqrt{1+\sin(\theta/2)}}\ket{1}),\\
        \ket{\sigma_4(\theta)} =\frac{1}{\sqrt{2}}(\sqrt{1-\sin(\theta / 2)} \ket{0} - \frac{\cos(\theta / 2)}{\sqrt{1-\sin(\theta/2)}}\ket{1}),\\
        \ket{\sigma_5(\theta)} =  \frac{1}{2}(((1-i) + \sin(\theta / 2)(1+i))\ket{0} + (1+i)\cos(\theta / 2)\ket{1}),\\
        \ket{\sigma_6(\theta)} =  \frac{1}{2}(((1+i) + \sin(\theta / 2)(1-i))\ket{0} + (1-i)\cos(\theta / 2)\ket{1}),
    \end{cases}
    \end{equation*}
    where $\sigma_j = \ketbra{\sigma_j}{\sigma_j}$. Then, by setting $\theta = 2\arcsin{(\sqrt{\gamma})}$, the programming channel $\cP$ defined in Fig.~\ref{fig:decom} can exactly simulate $\cA_{\gamma}$.
\end{proposition}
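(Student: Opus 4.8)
The plan is to realize $\cA_\gamma=\cE^{\Op{AD}}_\gamma$ by its Stinespring dilation (Fig.~\ref{fig:amp}), relocate all $\gamma$-dependence onto single-qubit gates, and then ``knit'' the dilation circuit with the affine $\cC\cX$ decomposition introduced above (with $\sum_{j=1}^6|\alpha_j|=\kappa=5$), so that the environment wire together with those single-qubit gates becomes a six-element family of pure program states $\sigma_j(\theta)$ on a qubit register $P_1$, while a six-dimensional register $P_2$ records which term of the decomposition is in force; the retrieval map $\cP$ is then the branch-conditioned data-side processor carrying the signs $\Op{sgn}(\alpha_j)$. Throughout set $\theta=2\arcsin\sqrt{\gamma}$, so that $\sin(\theta/2)=\sqrt{\gamma}$ and $\cos(\theta/2)=\sqrt{1-\gamma}$.

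First I would check the dilation: applying $\Op{CR}_Y(\theta)$ to $\rho\ox\ketbra{0}{0}$, then the $\Op{CNOT}$ (control on the environment, target on the data qubit), and discarding the environment reproduces the Kraus pair $\{E_0,E_1\}$ of $\cE^{\Op{AD}}_\gamma$. Next, use Eq.~\eqref{eq:decomp_cry} to rewrite $\Op{CR}_Y(\theta)$ as two $\Op{CNOT}$s interleaved with the single-qubit rotations $R_Y(\pm\theta/2)$ on the environment wire; after this step every two-qubit gate is a fixed $\Op{CNOT}$ and the only $\theta$-dependent operations are single-qubit. Replacing the $\Op{CNOT}$(s) that couple the data and environment wires by the quasi-decomposition $\cC\cX=\sum_j\alpha_j\,\cA_j\ox\cB_j$ disentangles the two wires: along each branch $j$ the data wire sees only fixed local channels $\cA_j$ or $\cB_j$, while the environment wire sees a fixed local circuit interleaved with the $R_Y(\pm\theta/2)$ gates and ending in a projective measurement. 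Tracing out the environment collapses its branch to (i) a classical label $j$ emitted with probability $p_j=|\alpha_j|/\kappa$ and accompanied by the sign $\Op{sgn}(\alpha_j)$, and (ii) a single pure qubit state $\sigma_j(\theta)$ handed to the data-side processor as its working ancilla; a direct computation of this environment sub-circuit on each branch yields the six closed forms in the statement. Reading this off gives the program state $\pi_\theta=\sum_{j=1}^6 p_j\,\sigma_j(\theta)\ox\ketbra{j}{j}_{P_2}$ and the retrieval map $\cP$: measure $P_2$ to learn $j$, run the fixed branch circuit $\cP_j$ of Fig.~\ref{fig:decom} on $\rho\ox\sigma_j(\theta)$ (a fixed isometry, a computational-basis measurement with post-selection, and a partial trace), and report the output reweighted by $\Op{sgn}(\alpha_j)$. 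Since $P_1$ is a qubit and $P_2$ is six-dimensional, $d_P=12$.

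Two verifications then close the argument. For $\cP\in\hptp$: each $\cP_j$ is a composition of isometries, measurements, post-selections, and partial traces, hence completely positive, so $\cP$ --- a real (signed) combination of the $\cP_j$'s after the $P_2$ measurement --- is Hermitian-preserving; trace preservation follows because the signed combination reassembles a trace-preserving map, equivalently $\sum_j\alpha_j\,\tr[\cP_j(X\ox\sigma_j(\theta))]=\tr[X]$, which is inherited from the trace preservation of the original dilation together with the affine reconstruction identity for $\cC\cX$. For exact feasibility: unwinding the knitting, $\cP(\rho\ox\pi_\theta)=\sum_j p_j\big(\kappa\,\Op{sgn}(\alpha_j)\,\cP_j(\rho\ox\sigma_j(\theta))\big)=\sum_j\alpha_j\,\cP_j(\rho\ox\sigma_j(\theta))$, which by the knitting construction equals $\tr_{\mathrm{env}}[W(\rho\ox\ketbra{0}{0})W^\dagger]=\cE^{\Op{AD}}_\gamma(\rho)=\cA_\gamma(\rho)$ for every $\rho$ and every admissible $\gamma$ (hence, writing $\gamma=\gamma(t)$, for every time $t$), with no approximation incurred.

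The hard part will be the knitting bookkeeping: propagating the $\theta$-dependent rotations $R_Y(\pm\theta/2)$ together with the local operators introduced by each $\cC\cX$ cut through the environment wire, so as to confirm that (i) exactly six distinct pure states survive and match the somewhat intricate closed forms $\sigma_1(\theta),\dots,\sigma_6(\theta)$, (ii) nothing $\theta$-dependent leaks onto the data wire, so that $\cP$ is genuinely fixed, and (iii) the branch-wise post-selection probabilities and signs assemble into a trace-preserving signed sum. This is long but entirely mechanical circuit algebra; the conceptual content is just the reduction to circuit knitting of a fixed dilation with the environment supplied through the program.
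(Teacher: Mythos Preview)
Your proposal is correct and follows essentially the same route as the paper: realize $\cA_\gamma$ via the dilation of Fig.~\ref{fig:amp}, expand $\Op{CR}_Y(\theta)$ via Eq.~\eqref{eq:decomp_cry}, cut a $\Op{CNOT}$ using the six-term affine decomposition, and read off the six program states $\sigma_j(\theta)$ from the environment wire while the data-side branch operations become the $\cP_j$'s. Two small points where the paper is slightly more explicit than you: (i) only \emph{one} $\Op{CNOT}$ is cut (the inner one sandwiched between the two $R_Y(\pm\theta/2)$ rotations), so the states are literally $\ket{\sigma_j(\theta)}=R_Y(-\theta/2)\,\cE_j\,R_Y(\theta/2)\ket{0}$ with $\cE_j$ the environment-side factor of the $j$th term---this pins down the six closed forms without further bookkeeping and keeps the remaining (fixed) two-qubit gates inside the $\cP_j$'s; (ii) the paper flags that $\sigma_{3,4}$ arise from the CPTN branch operations $\Pi_\pm$ and are therefore sub-normalized before renormalization, a subtlety your ``direct computation'' sweeps under the rug but which you would hit when carrying it out.
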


\begin{figure}[!ht]
    \centering
    \includegraphics[width=1.0\linewidth]{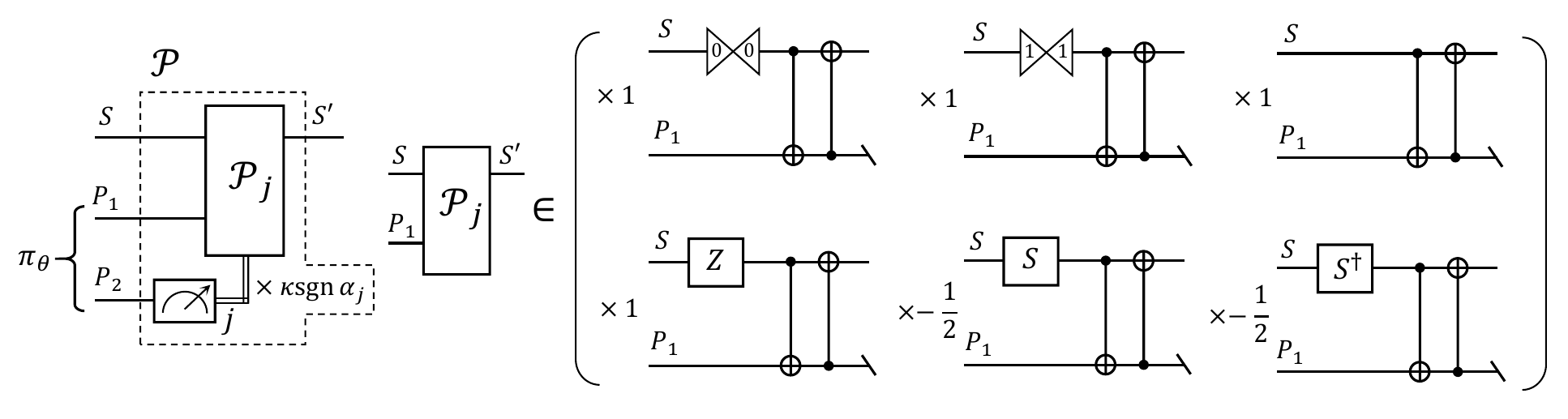}
    \label{fig:decom}
    \caption{An $\hptp$-programming protocol of amplitude damping channel based on $6$ decomposition product operations from circuit cutting of $\Op{CNOT}$ gate. The measurement in the programming channel is taken with respect to the computational basis $\ket{j}$ and the outcome $j$ decides which $\cP_j$ is applied. The quasi-decomposition coefficients are attached on the LHS, in front of each operation circuit.}
\end{figure}

\begin{proof}
    We start with the derivation of $\ket{\sigma_j}$'s using the gate decomposition~\eqref{eq:decomp_cry}. These states is given by applying the $R_Y(-\theta / 2) \cE_j R_Y(\theta/2)$ to $\ket{0}$ where $\cE_j$ are the local operations from the quasi-decomposition of $\cC\cX$. States $\sigma_{1,2,5,6}$ are the normal parametrized (pure) quantum states. While $\sigma_{3,4}$ are originally unormalized states from applying the completely positive and trace-nonincreasing (CPTN) operation $R_Y(-\theta / 2) \Pi_{\pm} R_Y(\theta/2)$. These can be physically implemented by extending the CPTN operation to CPTP. Combine with the measurements and the classical post-measurement processes, we derive,
    \begin{equation*}
    \begin{aligned}
        \cP(\rho \ox \pi_{\theta}) &= \kappa \sum_j p_j\Op{sgn}(\alpha_j) (\cI\ox \cR_Y(-\theta / 2))\circ\cP_j\circ(\cI\ox \circ\cR_Y(\theta / 2))(\rho\ox \sigma_j(\theta))\\
        &=\sum_j \alpha_j(\cI\ox \cR_Y(-\theta / 2))\circ\cP_j\circ(\cI\ox \circ\cR_Y(\theta / 2))(\rho\ox \sigma_j(\theta))\\
        &= \tr_{P_1}(\cC\cR_Y(\theta) \circ \cC\cX_{P_1\rightarrow S}(\rho\ox \ketbra{0}{0}))\\
        &= \cA_{\gamma}(\rho),
    \end{aligned}
    \end{equation*}
    where the last equality holds due to the implementation in Fig.~\ref{fig:amp}.
\end{proof}

\section{Estimating programming cost of open-system dynamics}\label{appendix:programmability of oqs}

The programmability of open quantum systems raises from the physical implementability and the simulation cost of general linear maps~\cite{Regula2021operational}. We target to construct a physically realizable processor that can simulate the dynamics $\cA_t$ generated by a fixed Lindbladian $\cL$ at any time $t\geq 0$. In particular, Let $\cH_{tot} = \cH_S \ox \cH_P$ be a composite quantum system. A programmable quantum processor integrates a HPTP linear map $\cP\in \hptp(\cH_{tot} \rightarrow \cH_{S'})$ with $\cH_{S'} \simeq \cH_{S}$, and a continuous set of program states $\pi_t$ acting on $\cH_P$.
\begin{definition}[Programmability of Lindbladian]
    Let $\cL$ be a Lindbladian and denote $\Omega=\hptp(\cH_{tot} \rightarrow \cH_{S'})$. For some $T\geq 0$, we say $\cL$ is $\Omega_{\epsilon}$-programmable within the time interval $[0,T]$ if there exists an one-parameter, continuous set of program states $\pi_t \in \cD(\cH_P)$ and a $\cP\in\Omega$ such that $\forall t\in [0,T]$, 
    \begin{equation*}
        \frac{1}{2}\|\cP(\cdot \ox \pi_t) - e^{t \cL} \|_{\diamond} \leq \epsilon.
    \end{equation*}
    We call $\{\cP, \pi_t\}_t$ a $\Omega_{\epsilon}$-programming protocol of $\cL$ within $[0,T]$, and $\cP$ the programming channel, (or retrieval channel). If such a protocol exists as $T\rightarrow \infty$, we say $\cL$ is $\Omega_{\epsilon}$-programmable.
\end{definition}
Specifically, when $\epsilon = 0$, we call it the (exact) $\Omega$-programmable scenario, and omit writing the $\epsilon$ subscript. The programming cost of $\cL$ is defined as follows:
\begin{definition}[Physical programming cost]\label{def:cost_vir} Given $T\geq 0$, a continuous set of program states $\{\pi_t\}$ defined on $[0,T]$, and $\epsilon\geq 0$, the $\epsilon$-error \textit{programming cost} of the superoperator $\cL$ with respect to the program states $\pi_t$ in the time interval is defined as
\begin{equation}\label{eq:def_programming_cost}
\begin{aligned}
    \gamma_{\epsilon}(\pi_t,\cL, T):=\log \min_{\cP \in \Omega}\left\{||\cP||_\diamond\Bigg\arrowvert \forall t\in [0,T], \frac{1}{2}\left\|\cP(\cdot \otimes\pi_t) - e^{t \cL}\right\|_{\diamond} \leq \epsilon\right\}.
\end{aligned}
\end{equation}
\end{definition}
There is no guarantee for the cost to be finite. Numerically, most choices of $\pi_t$ can leads to a infinite cost value. We say $\pi_t$ is \textit{veritable} if $\gamma_{\epsilon}(\pi_t, \cL, T) < \infty$. Clearly, if there exists such program states for $\cL$ such that the cost is strictly bounded above, $\cL$ is $\Omega_{\epsilon}$-programmable within $[0,T]$. 
\begin{remark}
    For any $\epsilon \geq 0$, Lindbladian $\cL$ with veritable program states $\{\pi_t\}$ within the interval $[0,T]$, we have,
    \begin{enumerate}
        \item \textbf{(Initial cost)} $\gamma_{\epsilon}(\pi_t, \cL, 0) = 0$.
        \item \textbf{(Monotonicity in $T$)} Let $T_2 \geq T_1 \geq 0$, for any $\epsilon \geq 0$,  Lindbladian $\cL$ with veritable $\pi_t$, we have $\gamma_{\epsilon}(\pi_t, \cL, T_2) \geq \gamma_{\epsilon}(\pi_t, \cL, T_1)$.
    \end{enumerate}
\end{remark}
\begin{proof}
    For the initial cost, given $\epsilon$, $\cL$, and $\pi_t$, $e^{0 \cdot \cL} = \cI$, construct $\cP = \tr_{P}$, so that, $\cP(\rho \ox \pi_0) = \cI(\rho)$. Hence, $0 \leq \gamma_{\epsilon}(\pi_t, \cL, 0) \leq 0$; The monotonicity directly follows by the property of the optimization problem. With $T_2\geq T_1$, the solution space is narrowed. 
\end{proof}

Since $\gamma_{\epsilon}(\pi_t, \cL, T)$ is monotone in $T$, if the programming cost can be bounded above by taking $T\rightarrow \infty$, the entire semigroup $e^{t \cL}$ can be $\Omega_{\epsilon}$-programmable, and we can switch the order of limit and minimization to define,
\begin{equation}
    \gamma_{\epsilon}(\pi_t,\cL) :=\lim_{T\rightarrow \infty}\gamma_{\epsilon}(\pi_t,\cL, T)
\end{equation}

Notice that the program states in the definition of programmability is defined by a continuous parameter $t$. In general, this can cause theoretical hardness on analyzing the trajectory of $\pi_t$. Within an infinitesimal time difference, there may exist multiple veritable program states $\pi_t$ and $\pi_{t+\delta t}$, which induces extreme `zigzag' patterns in the $\cD(\cH_d)$. In such cases, preparing $\{\pi_t\}_t$ can be physically impossible. To avoid this, we have made the assumption that the program states are defined analytically in $t$.

\begin{theorem}[Identity theorem~\cite{Ahlfors1979complex,Krantz2002primer}]
    Given functions $f$ and $g$ analytic on a domain $D$ (open and connected subset of $\mathbb{R}$ or $\mathbb {C}$), if $f = g$ on some $S\subseteq D$, where $S$ has an accumulation point in $D$, then $f = g$ on $D$.
\end{theorem}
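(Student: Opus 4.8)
The plan is to reduce the claim to showing that $h := f - g$, which is analytic on $D$ and vanishes on $S$, vanishes identically on $D$. The argument splits into a local statement — an accumulation point of the zero set of $h$ forces $h$ to vanish on a whole neighborhood — together with a global connectedness argument that spreads this conclusion over all of $D$.

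First I would establish the local lemma: if $w \in D$ is an accumulation point of the zero set $Z(h) := \{z \in D : h(z) = 0\}$, then $h \equiv 0$ on some neighborhood of $w$. By continuity $h(w) = 0$, and analyticity furnishes a power series $h(z) = \sum_{n \geq 0} a_n (z-w)^n$ converging on some ball (in the complex case) or interval (in the real case) $B \subseteq D$ around $w$. If not every $a_n$ vanished, I would set $m := \min\{n : a_n \neq 0\}$ and factor $h(z) = (z-w)^m \phi(z)$ with $\phi$ analytic on $B$ and $\phi(w) = a_m \neq 0$; continuity of $\phi$ then yields a punctured neighborhood of $w$ on which $h$ does not vanish, contradicting that $w$ accumulates zeros of $h$. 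Hence all $a_n = 0$ and $h \equiv 0$ on $B$. This is the one place where analyticity, rather than mere smoothness, is indispensable: a $C^\infty$ bump function can vanish on a set with accumulation points without being identically zero.

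Next I would run the standard open–closed dichotomy. Let $V := \{z \in D : h \text{ vanishes on a neighborhood of } z\}$, the interior of $Z(h)$. Then $V$ is open by construction; $V \neq \emptyset$ because the hypothesis supplies an accumulation point $z_0 \in D$ of $S \subseteq Z(h)$, so $z_0 \in V$ by the local lemma; and $V$ is closed in $D$ because any $w \in D$ in the closure of $V$ is either already in $V$ or a limit point of $V \subseteq Z(h)$, in which case $w$ is an accumulation point of $Z(h)$ and the local lemma again puts $w \in V$. Since $D$ is connected, a nonempty subset that is simultaneously open and closed in $D$ must equal $D$; hence $V = D$, so $h \equiv 0$ on $D$ and therefore $f = g$ on $D$.

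There is no deep obstacle here — the statement is classical — but the step requiring the most care is the local lemma, specifically the finite-order factorization $h = (z-w)^m \phi$ and the verification that $\phi$ stays bounded away from $0$ near $w$; in the real-analytic setting one must also be sure to invoke the convergent power series representation rather than arguing only with derivatives of $h$. The global step is then purely topological.
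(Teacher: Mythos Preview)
Your proof is correct and is precisely the standard textbook argument for the identity theorem. The paper itself does not prove this statement at all; it merely quotes it as a classical result from the cited references (Ahlfors, Krantz--Parks) and then invokes it as a black box in the proof of Lemma~\ref{lem:analytic_continuation_protocol}, so there is no paper-internal proof to compare against.
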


\begin{lemma}\label{lem:analytic_continuation_protocol}
    Let $\pi_t$ be analytically defined on $[0,\infty)$, and $\cP$ is some exact programming channel of $\cL$ within the time interval $[0,T]$ for some $T>0$. Then, $\{\cP, \pi_t\}_t$ serves an exact programming protocol for any $t \geq T$.
\end{lemma}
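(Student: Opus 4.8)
The plan is to exploit the rigidity of analytic functions: both the programmed family $t\mapsto \cP(\cdot\ox\pi_t)$ and the semigroup $t\mapsto e^{t\cL}$ are analytic curves in the finite‑dimensional space of linear maps on $\cB(\cH_S)$, they coincide on the nondegenerate interval $[0,T]$, and the Identity theorem then forces them to coincide on all of $[0,\infty)$. Note that $\cP$ need only be a fixed linear map for this argument; its HPTP or CPTP character plays no role here.

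First I would reduce to scalar functions. Fix a basis and let $\bm{M}(t)$ and $\bm{K}(t)=e^{\bm{L}t}$ be the $d_S^2\times d_S^2$ superoperator matrices (Liouville representation; equivalently one may use Choi matrices) of $\cP(\cdot\ox\pi_t)$ and of $e^{t\cL}$ respectively. Each entry of $\bm{M}(t)$ is a fixed linear functional of the entries of $\pi_t$, and $t\mapsto\pi_t$ is (real) analytic on $[0,\infty)$ by assumption, so each entry of $\bm{M}(t)$ is real analytic in $t$; each entry of $\bm{K}(t)$ is entire in $t$ (matrix exponential), as already noted in the preliminaries. Hence every entry $F_{ij}(t)$ of $F(t):=\bm{M}(t)-\bm{K}(t)$ is real analytic on $[0,\infty)$, i.e.\ extends to a holomorphic function on some open connected neighbourhood $D\subseteq\CC$ of $[0,\infty)$ (real analyticity at each point is the existence of a locally convergent power series, and the complexified disks glue along the connected set $(-\delta,\infty)$).

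Second I would invoke exactness. That $\{\cP,\pi_t\}_t$ is an exact programming protocol on $[0,T]$ means $\tfrac12\|\cP(\cdot\ox\pi_t)-e^{t\cL}\|_{\diamond}=0$ for all $t\in[0,T]$, hence $\cP(\cdot\ox\pi_t)=e^{t\cL}$ as superoperators, hence $F_{ij}(t)=0$ for every $t\in[0,T]$ and every $i,j$. Since $T>0$, the set $[0,T]$ has accumulation points in $D$; applying the Identity theorem componentwise to the holomorphic functions $F_{ij}$ on $D$ gives $F_{ij}\equiv 0$ on $D$, in particular on $[0,\infty)$. Therefore $\cP(\cdot\ox\pi_t)=e^{t\cL}$ for all $t\ge 0$, and in particular for all $t\ge T$, which is the claim. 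Combined with monotonicity in $T$, this also yields the Analytic continuation property $\gamma(\pi_t,\cL)=\gamma(\pi_t,\cL,T)$ of the main text, since a $\cP$ that exactly programs on $[0,T]$ exactly programs on every interval, so the feasible sets of the defining optimizations coincide.

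The only delicate point is the passage from ``real analytic on the closed half‑line $[0,\infty)$'' to ``holomorphic on an open connected domain,'' which is needed so that the Identity theorem as quoted applies; this is standard but worth spelling out, together with the trivial remark that equality of the finitely many scalar reductions $F_{ij}$ is equivalent to equality of the superoperators, by faithfulness of the Liouville/Choi representation and of the diamond distance on such maps.
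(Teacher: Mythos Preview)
Your proof is correct and follows essentially the same route as the paper: establish that both $t\mapsto \cP(\cdot\otimes\pi_t)$ and $t\mapsto e^{t\cL}$ are analytic (the first via linearity of $\cP$ applied to the analytic family $\pi_t$, the second via the exponential series), then invoke the Identity theorem on the interval $[0,T]$. The only cosmetic difference is that the paper argues directly at the level of operator-valued power series, whereas you reduce componentwise to scalar functions via the Liouville representation; your version is a bit more explicit about the passage from real analyticity on $[0,\infty)$ to holomorphy on an open connected domain, a point the paper leaves implicit.
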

\begin{proof}
    Let us define two functions, $\Phi_t = e^{t \cL}$ and $\Psi_t = \cP(\cdot \ox \pi_t)$, which map non-negative real numbers to the space of superoperators. By definition of $e^{t \cL}$, the power series converges in the operator norm for $t\in \RR$, and hence, is analytic. Now, since $\cP$ is a fixed linear map, write $\pi_t$ as a convergent power series,
    \begin{equation*}
        \pi_t = \sum_{k=0}^{\infty} A_k (t - t_0)^k,
    \end{equation*}
    at any $t_0\in[0,\infty)$, so that for any $\rho\in \cD(\cH_S)$,
    \begin{equation*}
        \cP(\rho \ox \pi_t) = \sum_{k=0}^{\infty} \cP(\rho \ox A_k) (t - t_0)^k = \sum_{k=0}^{\infty} \cP_k(\rho) (t - t_0)^k
    \end{equation*}
    where $\cP_k(\rho) := \cP(\rho \ox A_k)$. Notice that $\cP_k$ are linear maps so that $\cP(\cdot \ox \pi_t)$ is defined as the composition and the power series of linear maps, therefore is analytic. Now, since $\Phi_t = \Psi_t$ for all $t$ in the interval $[0, T]$. According to the identity theorem, we have $\Phi_t = \Psi_t$ for any $t\geq T$.
\end{proof}

This Lemma showcases a counterintuitive fact that if we can exactly and deterministically simulate the dynamics of $e^{t \cL}$ within a small time interval $[0,T]$. Then, the mathematical fundamentals forces $\cP(\cdot\ox \pi_t) = e^{t \cL}$ for any $t$ extending to infinity On the other hand, this provides a physical convenience that if we can construct a perfect programming channels by injecting analytic $\pi_t$ within a short time interval. Then, it can characterize the exact programming cost as $T\rightarrow \infty$, i.e., $\gamma(\pi_t, \cL, T) = \gamma(\pi_t, \cL)$.

By setting specific set of program states $\pi_t$, one can estimate the (exact) cost value via solving the following optimization problem~\eqref{eq:sdp_virtual_cost}.
\begin{equation}\label{eq:sdp_virtual_cost}
\begin{aligned}
    2^{\gamma(\pi_t, \cL)}&=\min\;p_1+p_2\\
    {\rm s.t.}\;\;& J^{\cP}:=J_1-J_2,\\
    &\tr_{P}[J^{\cP}(\pi_t^{T} \ox I_{SS'})] = J(e^{t \cL})_{SS'},\, \forall t \geq 0,\\
    &J_1\geq 0,\,\tr_{S'}[J_1] =p_1 I_{SP},\\
    &J_2\geq 0,\,\tr_{S'}[J_2] =p_2 I_{SP},
\end{aligned}
\end{equation}
where $J_{1,2}$ and $p_{1,2}$ are the optimization variables $d$ is the dimension of $\cH_S$ and the subscript alphabet on the operators represent the acted corresponding systems. By sampling a sufficient number of time steps, the cost can be estimated via solving a semidefinite programming. The transformation between the Lindbladian and the Choi representation can be exactly computed using the formulas derived in Appendix~\ref{appendix:lindblad_to_choi}.

\subsection{Properties of programming cost}\label{appendix:properties_cost}

In this section, we have assumed to use analytic program states to remove the $T$ dependence in the programming cost. We will first state some basic properties and definitions of Lindbladians and the cost for a further discussion.

\begin{definition}[Commutator of linear maps]
    Let $\cA$ and $\cB$ be two linear maps acting on $\cB(\cH_d)$. The commutator of $\cA,\cB$ is a linear map defined as,
    \begin{equation*}
        [\cA,\cB](X) := \cA\circ \cB(X) - \cB\circ \cA(X) = (\cA \circ \cB - \cB\circ \cA)(X).
    \end{equation*}
    We say $\cA$ and $\cB$ commute if for any $X\in\cB(\cH_d)$, $[\cA, \cB] = \bm{0}$.
\end{definition}

\begin{remark}
    Given $\cL_1$, $\cL_2$ two physical Lindbladians acting on $\cD(\cH)$. Then, the following operations can make valid Lindbladians
    \begin{enumerate}
        \item For any real numbers $a,b\geq 0$, $a\cL_1 + b\cL_2$ is also a physical Lindbladian.
        \item Let $\cI$ denote the identity map acting on $\cD(\cH)$, then, $\cL_1\ox \cI + \cI \ox \cL_2$ is a physical Lindbladian.
    \end{enumerate}
\end{remark}

\begin{lemma}\label{lem:bound_approx_sum_of_lind}
    Let $\cL$ be some $\Op{HPTP}_{\epsilon}$-programmable Lindbladian with program states $\{\pi_t\}_t$ for $\epsilon \geq 0$. Then, for any fixed $a>0$, there exist program states $\{\omega_t\}_t$ s.t., $\gamma_{\epsilon}(\pi_t, a\cL) = \gamma_{\epsilon}(\omega_t, \cL)$.
\end{lemma}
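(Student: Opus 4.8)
The plan is to exploit the elementary semigroup identity $e^{t(a\cL)}=e^{(at)\cL}$: rescaling the generator by a constant $a>0$ is the same as rescaling time by $a$. The candidate program states are the time-dilated family $\omega_t:=\pi_{t/a}$. Since $\pi$ takes values in $\cD(\cH_P)$ and $t\mapsto t/a$ is affine, $\omega_t$ is again a legitimate continuous one-parameter family of program states on $[0,\infty)$ (and analytic whenever $\pi_t$ is, analyticity being preserved under composition with an affine map), with the same program dimension $d_P$.

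First I would fix an arbitrary $T>0$ and compare the two optimization problems defining $\gamma_{\epsilon}(\pi_t,a\cL,T)$ and $\gamma_{\epsilon}(\omega_t,\cL,aT)$. A retrieval map $\cP\in\Omega$ is feasible for the first exactly when $\tfrac{1}{2}\|\cP(\cdot\ox\pi_t)-e^{t(a\cL)}\|_{\diamond}\le\epsilon$ for all $t\in[0,T]$. Substituting $s=at$, which is a bijection of $[0,T]$ onto $[0,aT]$, and using $e^{t(a\cL)}=e^{s\cL}$ together with $\pi_t=\pi_{s/a}=\omega_s$, this condition is equivalent to $\tfrac{1}{2}\|\cP(\cdot\ox\omega_s)-e^{s\cL}\|_{\diamond}\le\epsilon$ for all $s\in[0,aT]$, which is precisely feasibility for the second problem. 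Hence the two feasible sets of maps $\cP$ coincide, and since the objective $\|\cP\|_{\diamond}$ depends only on $\cP$ and not on the Lindbladian or the program states, the two minima agree: $\gamma_{\epsilon}(\pi_t,a\cL,T)=\gamma_{\epsilon}(\omega_t,\cL,aT)$.

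Then I would let $T\to\infty$. Since $a>0$ is fixed, $aT\to\infty$ as well, and the limit $\gamma_{\epsilon}(\cdot)=\lim_{T\to\infty}\gamma_{\epsilon}(\cdot,\cdot,T)$ exists in $[0,\infty]$ by monotonicity in $T$; therefore $\gamma_{\epsilon}(\pi_t,a\cL)=\lim_{T\to\infty}\gamma_{\epsilon}(\omega_t,\cL,aT)=\gamma_{\epsilon}(\omega_t,\cL)$. The hypothesis that $\cL$ is $\Op{HPTP}_{\epsilon}$-programmable ensures these costs are finite in the regime of interest (otherwise the identity holds verbatim in $[0,\infty]$); optionally, in the exact case $\epsilon=0$ one may instead invoke Lemma~\ref{lem:analytic_continuation_protocol} so that no limit is needed at all, the finite-$T$ equality already being the full statement.

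The argument carries essentially no deep content; the only point requiring care is the bookkeeping of the time reparametrization — keeping straight that it is $\omega_t=\pi_{t/a}$ (not $\pi_{at}$) that pairs with $\cL$, and that $s=at$ maps $[0,T]$ onto precisely $[0,aT]$ — together with the routine check that $\omega_t$ remains an admissible, density-operator–valued, analytic program family. Everything else is immediate from $e^{t(a\cL)}=e^{(at)\cL}$ and the insensitivity of $\|\cP\|_{\diamond}$ to $\cL$ and to $\pi_t$.
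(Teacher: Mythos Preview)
Your proof is correct and takes essentially the same approach as the paper: both exploit the time-reparametrization identity $e^{t(a\cL)}=e^{(at)\cL}$ together with the dilated program family $\omega_t=\pi_{t/a}$. Your version is in fact slightly tidier than the paper's, since you observe directly that the feasible sets of retrieval maps for $(\pi_t,a\cL,T)$ and $(\omega_t,\cL,aT)$ coincide and hence the minima agree, whereas the paper argues the two inequalities separately by transporting optimal solutions back and forth.
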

\begin{proof}
    Let $\cP$ be some optimal $\epsilon$-error programming channel for $\cL$ w.r.t. $\pi_t$. Setting $\tau = t/a$, $\omega_{\tau} := \pi_{t}$, we have,
    $$\frac{1}{2}\|\cP(\cdot \ox \omega_{\tau}) - e^{a\cL \tau} \|_{\diamond} = \frac{1}{2}\|\cP(\cdot \ox \pi_{t}) - e^{a\cL t/a} \|_{\diamond} \leq \epsilon$$
    for all $\tau\geq 0$, so that $\{\cP, \omega_{\tau}\}_{\tau}$ forms a feasible solution, and $\gamma_{\epsilon}(\pi_t, a\cL) \leq \gamma_{\epsilon}(\omega_t, \cL)$. One the other hand, Let $\cQ$ be some optimal $\epsilon$-error programming channel for $a\cL$ w.r.t. $\omega_t$. Setting $\tau = at$, $\pi_\tau:=\omega_t$ so that $\cQ$ can form a feasible solution for $\cL$. Therefore, $\gamma_{\epsilon}(\pi_t, a\cL) \geq \gamma_{\epsilon}(\omega_t, \cL)$
\end{proof}

This Lemma showcases that the (real) scalar multiplication on the Lindbladian does not cause differences in the programming cost. The definition of the programmability works for any time $t\geq 0$ which absorbs the scalar effect in the programming process. As a result, we will discuss the properties of the programming cost by ignoring the scalar multiplication on the Lindbladian in the following sections.
\begin{lemma}
    Let $\cL$ be some $\Op{HPTP}_{\epsilon}$-programmable Lindbladian with program states $\{\pi_t\}_t$, and $\cE$ be some invertible quantum channel acting on $\cH_P$. Then, $\gamma(\pi_t, \cL) \leq \gamma(\cE(\pi_t), \cL) \leq \gamma(\pi_t, \cL) + \log\| \cE^{-1}\|_{\diamond}$.
\end{lemma}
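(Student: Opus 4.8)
The plan is to prove the two inequalities separately, each time transporting an optimal retrieval map for one family of program states to a feasible retrieval map for the other by absorbing $\cE$ (resp.\ $\cE^{-1}$) into the processor, and controlling the resulting diamond norm by its sub-multiplicativity.

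For $\gamma(\pi_t,\cL)\le\gamma(\cE(\pi_t),\cL)$, I would let $\cQ\in\Omega=\hptp$ be an optimal retrieval channel for $\cL$ with respect to the program states $\cE(\pi_t)$, so that $\cQ(\rho\ox\cE(\pi_t))=e^{t\cL}(\rho)$ for all $\rho$ and $t\ge0$ and $\|\cQ\|_\diamond=2^{\gamma(\cE(\pi_t),\cL)}$. Setting $\cP:=\cQ\circ(\cI_S\ox\cE)$, and using that $\cE$ is a channel so $\cI_S\ox\cE$ is CPTP (hence HPTP) and $\cP\in\hptp=\Omega$, one has $\cP(\rho\ox\pi_t)=\cQ(\rho\ox\cE(\pi_t))=e^{t\cL}(\rho)$, so $\{\cP,\pi_t\}_t$ is a valid exact protocol for $\cL$. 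By $\|\cA\circ\cB\|_\diamond\le\|\cA\|_\diamond\|\cB\|_\diamond$ and $\|\cI_S\ox\cE\|_\diamond=\|\cE\|_\diamond=1$ we get $\|\cP\|_\diamond\le\|\cQ\|_\diamond$, whence $2^{\gamma(\pi_t,\cL)}\le\|\cP\|_\diamond\le\|\cQ\|_\diamond=2^{\gamma(\cE(\pi_t),\cL)}$.

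For $\gamma(\cE(\pi_t),\cL)\le\gamma(\pi_t,\cL)+\log\|\cE^{-1}\|_\diamond$, I would run the mirror construction: take $\cP\in\Omega$ optimal for $\cL$ with respect to $\pi_t$, with $\|\cP\|_\diamond=2^{\gamma(\pi_t,\cL)}$, and set $\cQ:=\cP\circ(\cI_S\ox\cE^{-1})$. First I would check $\cE^{-1}\in\hptp$: Hermiticity preservation follows since, writing $X=X_1+iX_2$ with $X_1,X_2$ Hermitian, $\cE(X)$ Hermitian forces $\cE(X_2)=0$, hence $X_2=0$ by invertibility; trace preservation follows from $\tr\cE^{-1}(Y)=\tr\cE(\cE^{-1}(Y))=\tr Y$. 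Thus $\cQ\in\Omega$, and $\cQ(\rho\ox\cE(\pi_t))=\cP(\rho\ox\pi_t)=e^{t\cL}(\rho)$ by linearity. Using $\|\cQ\|_\diamond\le\|\cP\|_\diamond\,\|\cI_S\ox\cE^{-1}\|_\diamond=\|\cP\|_\diamond\,\|\cE^{-1}\|_\diamond$ (the diamond norm being unchanged by tensoring with an identity), we obtain $2^{\gamma(\cE(\pi_t),\cL)}\le\|\cQ\|_\diamond\le 2^{\gamma(\pi_t,\cL)}\|\cE^{-1}\|_\diamond$, and taking binary logarithms closes the argument.

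The proof is essentially routine; I do not expect a genuine obstacle. The points that need care are: (i) verifying the constructed maps stay in $\Omega=\hptp$, in particular that $\cE^{-1}$ is HPTP and that invertibility of $\cE$ guarantees $\|\cE^{-1}\|_\diamond<\infty$, so the right-hand side is meaningful and, together with $\gamma(\pi_t,\cL)<\infty$, $\cQ$ remains veritable; (ii) observing that $t\mapsto\cE^{\pm1}(\pi_t)$ is real analytic (a fixed linear map applied to an analytic curve), so that by Lemma~\ref{lem:analytic_continuation_protocol} the $T$-independent costs $\gamma(\cdot,\cL)$ are the relevant objects and the feasibility transfer only needs to be checked on a nontrivial interval before extending to all $t$; and (iii) invoking the standard diamond-norm facts $\|\cA\circ\cB\|_\diamond\le\|\cA\|_\diamond\|\cB\|_\diamond$, $\|\cI\ox\cM\|_\diamond=\|\cM\|_\diamond$, and $\|\cN\|_\diamond=1$ for any CPTP $\cN$. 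If one wishes to be fully rigorous about the minimum defining $\gamma$ being attained, one appeals to compactness of the SDP feasible set in~\eqref{eq:sdp_virtual_cost}, or equivalently runs the estimates with an arbitrarily small slack.
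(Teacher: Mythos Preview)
Your proposal is correct and follows essentially the same approach as the paper: absorb $\cE$ (resp.\ $\cE^{-1}$) into the processor to transport a feasible/optimal retrieval map from one program family to the other, then bound the diamond norm via sub-multiplicativity and $\|\cE\|_\diamond=1$. Your write-up is in fact more careful than the paper's on the technical points (explicit HPTP verification of $\cE^{-1}$, analyticity of $t\mapsto\cE^{\pm1}(\pi_t)$, attainment of the minimum), but these are routine elaborations of the same argument.
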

\begin{proof}
    Let $\cP$ be an optimal solution to the optimization problem so that $2^{\gamma(\pi_t, \cL)} = \|\cP\|_{\diamond}$. Now for the new program states $\{\cE(\pi_t)\}_t$, define  $\Tilde{\cP}:=\cP\circ(\cI\ox \cE^{-1})$ so that, for any $\rho_0 \in \cD(\cH_d)$,
    \begin{equation*}
        \Tilde{\cP}(\rho_0 \ox \cE(\pi_t)) = \cP(\rho_0 \ox \cE^{-1} \circ \cE(\pi_t)) = \cP(\rho_0 \ox \pi_t) = e^{t \cL}(\rho_0),
    \end{equation*}
    which showcases that $\Tilde{\cP}$ serves as a feasible solution for $\cE(\pi_t)$. Since $\cE^{-1}$ is a HP unital map, using the subadditivity w.r.t the map composition~\cite{Jiang2021physical},
    $$2^{\gamma(\cE(\pi_t), \cL)} \leq \|\cP\circ(\cI\ox \cE^{-1})\|_{\diamond} \leq  \|\cP\|_{\diamond}\cdot\| \cE^{-1}\|_{\diamond} \leq 2^{\gamma(\pi_t, \cL)}\| \cE^{-1}\|_{\diamond}$$ 
    which implies $\gamma(\cE(\pi_t), \cL) \leq \gamma(\pi_t, \cL) + \log\| \cE^{-1}\|_{\diamond}$. Using similar logic, we can construct a feasible solution for $\pi_t$ by assuming the optimality from $\cE(\pi_t)$, and hence, $\gamma(\cE(\pi_t), \cL) \geq \gamma(\pi_t, \cL)$. 
\end{proof}

With this in mind, when we take $\cE$ to any unitary operation, $\|\cE^{-1}\|_{\diamond} = 1$ and the upper and the lower bounds become equal so that $\gamma(\cE(\pi_t), \cL) = \gamma(\pi_t, \cL)$. On the other hand, this Lemma shows that if the program states are inevitably subjected to the influence of a invertible quantum noise, the cost of the entire process will increase.
\begin{lemma}
    Let $\cL_1, \cL_2$ be two commute  Lindbladians. Suppose $\{\pi_t\}_t$ and $\{\sigma_t\}_t$ are two sets of (exact) veritable program states for $\cL_1$ and $\cL_2$, respectively. Then, there exists program states $\{\omega_t\}_t$ such that,
    \begin{equation*}
        \gamma(\omega_t, \cL_1 + \cL_2) \leq \gamma(\pi_t, \cL_1) + \gamma(\sigma_t, \cL_2).
    \end{equation*}
\end{lemma}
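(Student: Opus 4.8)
The plan is to use commutativity to factor the target dynamical semigroup and then concatenate the two individual programming protocols with a tensor‑product program state. Since $[\cL_1,\cL_2]=\bm{0}$, we have $e^{t(\cL_1+\cL_2)}=e^{t\cL_1}\circ e^{t\cL_2}$ for all $t\geq 0$, and $\cL_1+\cL_2$ is itself a valid Lindbladian by the Remark preceding the statement. Let $\cP_1\in\hptp(\cH_S\ox\cH_{P_1}\rar\cH_{S'})$ be an optimal exact programming channel for $\cL_1$ with respect to $\pi_t$, so $\cP_1(\cdot\ox\pi_t)=e^{t\cL_1}$ and $\|\cP_1\|_\diamond=2^{\gamma(\pi_t,\cL_1)}$; likewise let $\cP_2\in\hptp(\cH_S\ox\cH_{P_2}\rar\cH_{S'})$ be optimal for $\cL_2$ with respect to $\sigma_t$. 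Define the program states $\omega_t:=\sigma_t\ox\pi_t$ on $\cH_P:=\cH_{P_2}\ox\cH_{P_1}$ and the candidate processor $\cP:=\cP_1\circ(\cP_2\ox\id_{P_1})$.

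First I would check feasibility. The family $\omega_t$ is a legitimate analytic family of program states, because a product of analytic density‑operator‑valued maps is again analytic. The map $\cP_2\ox\id_{P_1}$ is HPTP (a tensor product of HPTP maps has a Hermitian, properly normalised Choi operator), and composing it with the HPTP map $\cP_1$ keeps $\cP\in\Omega=\hptp$. Then for any $\rho\in\cD(\cH_S)$,
$$\cP(\rho\ox\omega_t)=\cP_1\big(\cP_2(\rho\ox\sigma_t)\ox\pi_t\big)=\cP_1\big(e^{t\cL_2}(\rho)\ox\pi_t\big)=e^{t\cL_1}\big(e^{t\cL_2}(\rho)\big)=e^{t(\cL_1+\cL_2)}(\rho),$$
so $\{\cP,\omega_t\}_t$ is an exact programming protocol for $\cL_1+\cL_2$.

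To bound the cost I would invoke two standard features of the diamond norm: it is unchanged by tensoring with an identity channel, $\|\cP_2\ox\id_{P_1}\|_\diamond=\|\cP_2\|_\diamond$, and it is submultiplicative under composition, $\|\cP_1\circ\Psi\|_\diamond\leq\|\cP_1\|_\diamond\,\|\Psi\|_\diamond$ — the latter being exactly the subadditivity of the implementation cost $\nu$ under composition recalled in Section~\ref{appendix:physical_implementability}. Hence
$$2^{\gamma(\omega_t,\cL_1+\cL_2)}\leq\|\cP\|_\diamond\leq\|\cP_1\|_\diamond\,\|\cP_2\|_\diamond=2^{\gamma(\pi_t,\cL_1)}\,2^{\gamma(\sigma_t,\cL_2)},$$
and taking binary logarithms yields the claimed inequality. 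The main thing to be careful about is purely bookkeeping: one must keep the register structure straight so that $\cP_2\ox\id_{P_1}$ is treated as a channel on all of $\cB(\cH_S\ox\cH_{P_2}\ox\cH_{P_1})$ (so the stabilised diamond norm really is unaffected) and so the output space of $\cP_2\ox\id_{P_1}$ matches the input space of $\cP_1$. There is no conceptual obstacle beyond this; the content is carried entirely by the commutativity identity $e^{t\cL_1}e^{t\cL_2}=e^{t(\cL_1+\cL_2)}$ and the already‑established multiplicativity of the implementation cost.
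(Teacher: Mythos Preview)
Your proof is correct and is essentially identical to the paper's own argument: both take $\omega_t$ to be the tensor product of the two program states and build the processor by concatenating one optimal programming channel with the other tensored with an identity, then invoke submultiplicativity of the diamond norm under composition. The only differences are cosmetic (ordering of the two factors and naming of the maps).
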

\begin{proof}
    Let $\cL_1:\cB(\cH_1) \rightarrow \cB(\cH_2)$ and $\cL_2: \cB(\cH_2)\rightarrow\cB(\cH_3)$ with $\cH_1\simeq \cH_2 \simeq \cH_3$. Let $\cP$ be the optimal programming channel for $\cL_1$ and $\cQ$ be the map for $\cL_2$. We can define $\omega_t:= \pi_t \ox \sigma_t$ acting on $\cH_P = \cH_{P_1} \ox \cH_{P_2}$, and a programming channel $\cW:=\cQ \circ(\cP \ox \cI_{P_2})$ acting on $\cD(\cH_S\ox \cH_{P})$. Notice that,
    \begin{equation*}
        \cW(\rho, \omega_t) = \cQ \circ(\cP \ox \cI_{P_2})(\rho, \pi_t \ox \sigma_t) = \cQ(\cP(\rho, \pi_t), \sigma_t) = e^{\cL_2 t} \circ e^{\cL_1 t}(\rho) = e^{(\cL_1 +\cL_2)t}(\rho).
    \end{equation*}
    for any $t\geq 0$. Therefore, $\{\cW, \omega_t\}_t$ forms a feasible protocol. Now, due to the  multiplicativity under map composition~\cite{Jiang2021physical,Regula2021operational}, we have,
    \begin{equation*}
        2^{\gamma(\omega_t, \cL_1 + \cL_2)} \leq \|\cQ \circ(\cP \ox \cI_{P_2})\|_{\diamond} \leq \|\cP\|_{\diamond}\| \cQ\|_{\diamond} = 2^{\gamma(\pi_t, \cL_1)} \cdot 2^{\gamma(\sigma_t, \cL_2)},
    \end{equation*}
    as required.
\end{proof}

\begin{lemma}\label{lem:tensor_subadditivity}
    Let $\cL_1$ and $\cL_2$ be two Lindbladians acting on $\cD(\cH_1)$ and $\cD(\cH_2)$, respectively. Suppose $\{\pi_t\}_t$ and $\{\sigma_t\}_t$ are two sets of (exact) veritable program states for $\cL_1$ and $\cL_2$, respectively. Then, there exists program states $\{\omega_t\}_t$ such that,
    \begin{equation*}
        \gamma(\omega_t, \cL_1 \ox \cI_2 + \cI_1 \ox \cL_2) \leq \gamma(\pi_t, \cL_1) + \gamma(\sigma_t, \cL_2).
    \end{equation*}
\end{lemma}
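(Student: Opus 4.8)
The plan is to mirror the proof of the preceding lemma on commuting Lindbladians, exploiting that $\cL_1\ox\cI_2$ and $\cI_1\ox\cL_2$ commute as superoperators on $\cD(\cH_1\ox\cH_2)$, so that the generated semigroup factorizes: $e^{t(\cL_1\ox\cI_2+\cI_1\ox\cL_2)}=e^{t\cL_1}\ox e^{t\cL_2}$ for every $t\geq 0$. Let $\cP:\cB(\cH_{S_1}\ox\cH_{P_1})\rightarrow\cB(\cH_{S_1'})$ and $\cQ:\cB(\cH_{S_2}\ox\cH_{P_2})\rightarrow\cB(\cH_{S_2'})$ be optimal exact programming channels in $\hptp$ for $\cL_1$ (with program states $\pi_t$) and for $\cL_2$ (with program states $\sigma_t$), so that $\|\cP\|_\diamond=2^{\gamma(\pi_t,\cL_1)}$ and $\|\cQ\|_\diamond=2^{\gamma(\sigma_t,\cL_2)}$.

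First I would set $\omega_t:=\pi_t\ox\sigma_t$ on $\cH_P:=\cH_{P_1}\ox\cH_{P_2}$; a product of (real) analytic matrix-valued functions is analytic, so $\omega_t$ is a legitimate analytic program state and, by Lemma~\ref{lem:analytic_continuation_protocol}, it suffices to exhibit an exact protocol valid for all $t\geq 0$. I would then build the retrieval map $\cW:=(\cP\ox\cQ)\circ\Pi$, where $\Pi$ is the unitary channel on $\cH_{S_1}\ox\cH_{S_2}\ox\cH_{P_1}\ox\cH_{P_2}$ that permutes the tensor factors into the order $\cH_{S_1}\ox\cH_{P_1}\ox\cH_{S_2}\ox\cH_{P_2}$, pairing each principal leg with its own program leg. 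Being the composition of an HPTP map with a unitary channel, $\cW$ is again HPTP.

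To check feasibility, writing a general input $X_{S_1S_2}=\sum_k A_k\ox B_k$, linearity of all maps involved gives
\begin{equation*}
\cW\!\left(X_{S_1S_2}\ox\pi_t\ox\sigma_t\right)=\sum_k \cP(A_k\ox\pi_t)\ox\cQ(B_k\ox\sigma_t)=\sum_k e^{t\cL_1}(A_k)\ox e^{t\cL_2}(B_k)=\big(e^{t\cL_1}\ox e^{t\cL_2}\big)(X_{S_1S_2}),
\end{equation*}
which equals $e^{t(\cL_1\ox\cI_2+\cI_1\ox\cL_2)}(X_{S_1S_2})$ by the commutation observation, so $\{\cW,\omega_t\}_t$ is an exact programming protocol. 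Finally, combining submultiplicativity of the diamond norm under composition and under tensor products with $\|\Pi\|_\diamond=1$ (see~\cite{Jiang2021physical,Regula2021operational}), I obtain $2^{\gamma(\omega_t,\cL_1\ox\cI_2+\cI_1\ox\cL_2)}\leq\|\cW\|_\diamond\leq\|\cP\|_\diamond\,\|\cQ\|_\diamond=2^{\gamma(\pi_t,\cL_1)}\cdot 2^{\gamma(\sigma_t,\cL_2)}$, and taking binary logarithms yields the claim.

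The algebra above is routine; the one point needing care — and the main (mild) obstacle — is the tensor-factor bookkeeping: ensuring that after $\Pi$ the legs $\cH_{S_1},\cH_{S_2}$ and $\cH_{P_1},\cH_{P_2}$ are paired correctly so that $\cP\ox\cQ$ acts as intended on an arbitrary, possibly entangled, input $X_{S_1S_2}$, and that the norm bound picks up no extra factor from $\Pi$ (it does not, since $\Pi$ is unitary). An essentially equivalent route is to observe that $\cL_1\ox\cI_2$ and $\cI_1\ox\cL_2$ are commuting Lindbladians on the joint system and invoke the preceding lemma with $\cL_1\leftarrow\cL_1\ox\cI_2$, $\cL_2\leftarrow\cI_1\ox\cL_2$; one then only needs that $\pi_t$ (resp. $\sigma_t$) remains a veritable program state for $\cL_1\ox\cI_2$ (resp. $\cI_1\ox\cL_2$), which follows by tensoring the corresponding retrieval channel with $\cI$ and using tensor-stability of the diamond norm.
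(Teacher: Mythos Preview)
Your proof is correct and follows essentially the same approach as the paper: the paper also sets $\omega_t=\pi_t\ox\sigma_t$ and builds $\cW=(\cP\ox\cQ)\circ\Op{SWAP}_{\cH_2\leftrightarrow\cH_{P_1}}$, which is exactly your $(\cP\ox\cQ)\circ\Pi$, then invokes unitary invariance and multiplicativity of the diamond norm to conclude. Your write-up is in fact slightly more careful about the feasibility check and the analyticity of $\omega_t$, and your closing remark reducing to the preceding lemma is a nice alternative the paper does not mention.
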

\begin{proof}
    Let $\cP$ be the optimal programming channel for $\cL_1$ and $\cQ$ be the map for $\cL_2$. Denote the total principal system as $\cH_{S}=\cH_1 \ox \cH_2$. We can define $\omega_t:= \pi_t \ox \sigma_t$ acting on $\cH_P = \cH_{P_1} \ox \cH_{P_2}$, and a programming channel $\cW:=(\cP\ox\cQ)\circ \Op{SWAP}_{\cH_2\leftrightarrow \cH_{P_1}}$ acting on $\cD(\cH_S\ox \cH_P)$. Notice that,
    \begin{equation*}
        \cW(\rho, \omega_t) = (\cP\ox\cQ)(\rho, \pi_t \ox \sigma_t) = (e^{\cL_1 t} \ox e^{\cL_2 t})(\rho)
    \end{equation*}
    for any $t\geq 0$. Therefore, $\{\cW, \omega_t\}_t$ forms a feasible protocol. Now, due to the invariance under unitary operation and multiplicativity under `$\ox$'~\cite{Jiang2021physical,Regula2021operational}, we have,
    \begin{equation*}
        2^{\gamma(\omega_t, \cL_1 \ox \cI_2 + \cI_1 \ox \cL_2)} \leq \|\Op{SWAP}_{\cH_2\leftrightarrow \cH_{P_1}}\|_{\diamond}\|\cP\ox \cQ\|_{\diamond} = \|\cP\|_{\diamond}\| \cQ\|_{\diamond} = 2^{\gamma(\pi_t, \cL_1)} \cdot 2^{\gamma(\sigma_t, \cL_2)}.
    \end{equation*}
\end{proof}

Analogous to the studies of physical implementability of general linear maps, a interesting question can be issued from Lemma~\ref{lem:tensor_subadditivity}, that is whether the backward inequality holds, or from a couterpart, whether there exists entangled state $\omega_t$ showcasing the advantage of quantumness by proving $$\gamma(\omega_t, \cL_1 \ox \cI_2 + \cI_1 \ox \cL_2)) < \gamma(\pi_t, \cL_1) + \gamma(\sigma_t, \cL_2).$$
However, this is challenging as there is no efficient method to optimize the veritable $\pi_t$ within the entire state space acting on $\cH_P$. We leave this for a futher investigation.

\begin{proposition}
    Let $\cL_1$ and $\cL_2$ be two $\hptp$-programmable  Lindbladians acting on $\cD(\cH_d)$ with $\pi_t, \sigma_t$ the program states, respectively. Then for any $\epsilon > 0$, there exists $n\in \NN_+$ and program state $\omega_t$ such that,
    \begin{equation*}
        \frac{1}{n}\gamma_{\epsilon}(\omega_t, \cL_1 + \cL_2) \leq \gamma(\pi_t, \cL_1) + \gamma(\sigma_t, \cL_2)
    \end{equation*}
\end{proposition}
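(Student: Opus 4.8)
The plan is to reduce the problem to the Lie--Trotter product formula. When $\cL_1$ and $\cL_2$ commute, the preceding lemma already shows that $\{\pi_t\ox\sigma_t\}_t$ with the composed programming channel realizes $e^{t(\cL_1+\cL_2)}$ exactly; the only genuinely new obstruction here is non-commutativity, which Trotter splitting absorbs at the cost of an $n$-fold increase in resources. Since $\cL_1,\cL_2$ are bounded superoperators on $\cB(\cH_d)$, I would use $e^{t(\cL_1+\cL_2)}=\lim_{n\to\infty}\bigl(e^{(t/n)\cL_1}e^{(t/n)\cL_2}\bigr)^{n}$ together with the standard error estimate of order $t^{2}\,\|[\cL_1,\cL_2]\|/n$ (with a constant depending on $\|\cL_1\|,\|\cL_2\|$), which is uniform for $t$ in any bounded window. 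Each factor $e^{(t/n)\cL_1}$ (resp.\ $e^{(t/n)\cL_2}$) is obtained by feeding the given exact protocol the rescaled program state $\pi_{t/n}$ (resp.\ $\sigma_{t/n}$), which is legitimate because exact programmability holds for every nonnegative parameter.

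First I would fix $n$ and build the composed processor. Let $\cP$ and $\cQ$ be optimal exact $\hptp$-programming channels for $(\cL_1,\pi_t)$ and $(\cL_2,\sigma_t)$, so that $\|\cP\|_{\diamond}=2^{\gamma(\pi_t,\cL_1)}$, $\|\cQ\|_{\diamond}=2^{\gamma(\sigma_t,\cL_2)}$, and $\cP(\rho\ox\pi_s)=e^{s\cL_1}(\rho)$, $\cQ(\rho\ox\sigma_s)=e^{s\cL_2}(\rho)$ for all $s\ge 0$. I take the program register $\cH_P=\cH_{P_1}^{\otimes n}\ox\cH_{P_2}^{\otimes n}$ and set
\begin{equation*}
  \omega_t:=\pi_{t/n}^{\otimes n}\ox\sigma_{t/n}^{\otimes n},
\end{equation*}
which is analytic in $t$ whenever $\pi_t,\sigma_t$ are. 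Then $\cW$ is defined as the iterated processor performing $n$ rounds: in round $k$ it applies $\cQ$ to the principal system and the $k$-th copy of $\sigma_{t/n}$, and then $\cP$ to the principal system and the $k$-th copy of $\pi_{t/n}$, each elementary step being the identity on the program registers not yet consumed. By construction
\begin{equation*}
  \cW(\rho\ox\omega_t)=\bigl(e^{(t/n)\cL_1}e^{(t/n)\cL_2}\bigr)^{n}(\rho),
\end{equation*}
and $\cW$ is $\hptp$, being a composition of $\hptp$ maps tensored with identities. Submultiplicativity of the diamond norm under composition and its multiplicativity under tensoring with the identity (both used earlier, see \cite{Jiang2021physical,Regula2021operational}) then give $\|\cW\|_{\diamond}\le\|\cP\|_{\diamond}^{n}\|\cQ\|_{\diamond}^{n}=2^{n(\gamma(\pi_t,\cL_1)+\gamma(\sigma_t,\cL_2))}$.

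Next I would close the argument with the error estimate. Given $\epsilon>0$, I choose $n$ large enough that $\frac{1}{2}\|(e^{(t/n)\cL_1}e^{(t/n)\cL_2})^{n}-e^{t(\cL_1+\cL_2)}\|_{\diamond}\le\epsilon$ on the time window in question; then $\{\cW,\omega_t\}_t$ is a valid $\hptp_{\epsilon}$-programming protocol for $\cL_1+\cL_2$, so $\gamma_{\epsilon}(\omega_t,\cL_1+\cL_2)\le\log\|\cW\|_{\diamond}\le n\bigl(\gamma(\pi_t,\cL_1)+\gamma(\sigma_t,\cL_2)\bigr)$, and dividing by $n$ yields the stated inequality.

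I expect the main obstacle to be the uniformity of the Trotter error in $t$: the estimate $O(t^{2}\|[\cL_1,\cL_2]\|/n)$ only controls $t$ on compact sets, whereas $\gamma_{\epsilon}(\omega_t,\cL_1+\cL_2)$ is the $T\to\infty$ limit of the finite-horizon cost, and for purely coherent $\cL_1,\cL_2$ a fixed-$n$ splitting genuinely cannot stay $\epsilon$-close for all $t$. I would handle this either by reading the claim over a fixed horizon $[0,T]$ (where $n=\Theta(T^{2}/\epsilon)$ suffices), or by exploiting that both the approximant and the target are CPTP --- so the error is a priori at most $2$ --- together with contractivity or spectral-gap estimates of the semigroups to obtain a uniform-in-$t$ bound, or, when $\cL_1+\cL_2$ is itself directly programmable (for instance the fully coherent case $\cL_1+\cL_2=i\,\Op{ad}_{H_1+H_2}$), by bypassing the splitting entirely. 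A lower-stakes, purely bookkeeping point is to make the ``$k$-th round uses the $k$-th program copy'' description precise and to verify the $\hptp$ property and the diamond-norm product bound elementary step by elementary step.
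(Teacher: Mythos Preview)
Your approach is essentially the same as the paper's: both construct the $n$-fold iterated processor from the optimal $\cP,\cQ$, take $\omega_t=\bigotimes^{n}(\pi_{t/n}\ox\sigma_{t/n})$, invoke the Trotter product formula to achieve the $\epsilon$-approximation, and bound $\|\cW\|_{\diamond}\le\|\cP\|_{\diamond}^{n}\|\cQ\|_{\diamond}^{n}$ via submultiplicativity.

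The uniformity-in-$t$ concern you flag is real, and the paper's proof does not address it either: it simply asserts ``by Trotter product formula, there exists $n\in\NN_+$ such that $\tfrac12\|\cW^n_{t/n}-e^{(\cL_1+\cL_2)t}\|_{\diamond}\le\epsilon$'' without saying whether this $n$ is meant to work for all $t\ge 0$ or only on a fixed horizon. So your proposal is at least as complete as the paper's own argument, and your explicit discussion of possible fixes (restricting to $[0,T]$, using contractivity, or bypassing the splitting in special cases) goes beyond what the paper provides.
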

\begin{proof}
    Let $\cP_t, \cQ_t$ be defined as
    \begin{equation*}
        \cP_t := \cP(\cdot\ox \pi_t)\; \cQ_t:= \cQ(\cdot \ox \sigma_t),
    \end{equation*}
    where $\cP, \cQ$ are the optimal programming channels of $\cL_1, \cL_2$, respectively. Define $\cW_t:= \cP_t \circ \cQ_t$ so that,
    \begin{equation*}
        \cW_t^n = (\cP_t \circ \cQ_t)^n = \cP(\cQ(\cdots \cP(\cQ(\cdot \ox \sigma_t)\ox \pi_t))) = (e^{\cL_1 t} \circ e^{\cL_2 t})^n .
    \end{equation*}
    $\forall t\geq 0$. Then, for any $\epsilon > 0$, by Trotter product formula, there exists $n\in \NN_+$, such that,
    \begin{equation*}
        \frac{1}{2}\|\cW^n_{t/n} - e^{(\cL_1 + \cL_2) t}\|_{\diamond} \leq \epsilon.
    \end{equation*}
    Define $\omega_t:= \bigotimes_{j=1}^n \pi_{t/n} \ox \sigma_{t/n}$. We then have,
    \begin{equation*}
        2^{\gamma_{\epsilon}(\omega_t, \cL_1 + \cL_2)} \leq \|((\cP\ox\cI) \circ \cQ)^n\|_{\diamond} \leq \| \cP\|^n_{\diamond} \| \cQ\|^n_{\diamond} = 2^{n\gamma(\pi_t, \cL_1)}2^{n\gamma(\sigma_t, \cL_2)}.
    \end{equation*}
    Taking the logarithm of both sides to derive,
    \begin{equation*}
        \frac{1}{n}\gamma_{\epsilon}(\omega_t, \cL_1 + \cL_2) \leq \gamma(\pi_t, \cL_1) + \gamma(\sigma_t, \cL_2)
    \end{equation*}
    as required.
\end{proof}

The Trotter decomposition does provide a convenient strategy to build up the approximate programming protocol from each exact protocol of $\cL_1$ and $\cL_2$. This resembles the framework designed in~\cite{Patel2023waveI,Patel2023waveII}. A interesting observation from the proposition is that when $\cL_1$, $\cL_2$ are CPTP-programmable so that $\gamma(\pi_t, \cL_1) = \gamma(\sigma_t, \cL_2) = 0$. Then, for any $\epsilon > 0$, we have $\gamma_{\epsilon}(\omega_t, \cL_1 +\cL_2) < 0$. It is natural to ask whether $\gamma(\omega_t, \cL_1 +\cL_2) = 0$. In that case, the summation of any CPTP-programmble Lindbladians will again, generate a CPTP-programmable Lindbladian. The (semi-)continuity of the programming cost in terms of the error $\epsilon$ can leads to the target, despite, the cost is generally not continuous at $\epsilon = 0$.

\subsection{Programming dynamics using Choi states}\label{appendix:choi_state_resource}
Given a Lindbladian $\cL$ acting on $\cD(\cH_d)$, a natural choice of analytic program states is to use the Choi states $J(e^{t \cL}) / d$ at time $t$. To see this, since $e^{t \cL}$ is defined analytically. Through the Choi isomorphism, the trajectory $J(e^{t \cL})$ within the space of Choi operators of quantum channels is analytically defined. Using Choi states as the resource states is closely related to the \textit{port-based} quantum channel simulation strategy in the quantum information theory~\cite{Kaur2017amortized,Bennett1996mixed,Horodecki1999general,Chiribella2009realization}. For convenient, we omit writing the $\pi_t$ dependence in the programming cost if $\pi_t = J(e^{t \cL}) / d$. One must be clear that $\gamma(\cL) > 0$ does not guarantee that $\cL$ is not CPTP-programmable even through the Choi states contain all the information of the semigroup element at time $t$. The optimality was only proven for several unitary scenarios~\cite{Vidal2002storing,Sedlak2019optimal,Bisio2010optimal}. Recalling the master equation with any initial state $\rho_0$, 
\begin{equation}\label{eq:de_master_eq_linear_maps}
    \frac{d}{dt} \cA_t(\rho_0) = \frac{d\cA_t}{dt}(\rho_0) = \cL(\rho(t)) = \cL\circ \cA_t(\rho_0) \Rightarrow \frac{d\cA_t}{dt} = \cL\circ \cA_t.
\end{equation}
We then derive a ordinary differential equation defined on the space of linear maps with a fixed initial condition $\cA_0 = \cI$. If the programming channel, exists as an HPTP channel $\cP$, such that $\cP(\pi_t \ox \rho_0) = \cA_t(\rho_0)$. By the linearity of $\cP$,
\begin{equation*}
\begin{aligned}
    \frac{d}{dt} \cP(\pi_t \ox \rho_0) &= \cP(\frac{d\pi_t}{dt} \ox \rho_0) = \frac{d}{dt}\cA_t(\rho_0) = \frac{d}{dt}e^{t \cL}(\rho_0) = \cL\circ \cA_t(\rho_0).
\end{aligned}
\end{equation*}
If we set $\pi_t$ to be the Choi state of $\cA_t$. The derivative can be computed,
\begin{equation*}
    \frac{d\pi_t}{dt} = \frac{1}{d}\sum_{ij} \ketbra{i}{j} \ox \frac{d}{dt}\cA_t(\ketbra{i}{j}) = \frac{1}{d}\sum_{ij} \ketbra{i}{j} \ox \cL\circ \cA_t(\ketbra{i}{j}) = \frac{J(\cL\circ \cA_t)}{d}.
\end{equation*}
We then have an equivalent condition for the programmability of $\cL$ as,
\begin{equation*}
    J({\cP}) \star J({\cL}) \star J(\cA_t) = dJ({\cL}) \star J(\cA_t) \quad \forall t\geq 0.
\end{equation*}
\begin{lemma}\label{lem:physical_trans}
    Let $A$ be some $d^2 \times d^2$ linear operator and $\cP$ be a CPTP map from $\cB(\cH_{d^3})$ to $\cB(\cH_{d})$. Then, $(\cI \ox \cP)(\Phi \ox A) = A$ iff.
    \begin{equation*}
        \cP(\ketbra{i}{j} \ox A) = dA_{ij}
    \end{equation*}
    for $\{\ket{j}\}_j$ a set of orthonormal basis of $\cH_{d}$ and $A_{ij}$ is $(i,j)$th $d\times d$ block matrix of $A$ i.e., $A = \sum_{ij} \ketbra{i}{j} \ox A_{ij}$.
\end{lemma}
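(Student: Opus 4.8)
The plan is to establish the equivalence by a direct expansion of the maximally entangled state followed by matching coefficients against the matrix-unit basis; no structural property of $\cP$ (not even complete positivity) enters beyond linearity, so the same argument proves both directions at once.

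First I would fix notation carefully, since the two tensor factors of $\Phi$ play asymmetric roles. Write $\Phi=\Phi_d$ on $\cH_R\ox\cH_M$ with $\cH_R\simeq\cH_M\simeq\cH_d$, view $A$ as an operator on $\cH_{M'}\simeq\cH_{d^2}$, and note that $\cP$ acts on $\cH_M\ox\cH_{M'}\simeq\cH_{d^3}$ with output on $\cH_S\simeq\cH_d$. Then $(\cI_R\ox\cP)(\Phi\ox A)$ and $A$ both live on $\cH_R\ox\cH_S\simeq\cH_{d^2}$, and I identify these spaces so that the block decomposition $A=\sum_{ij}\ketbra{i}{j}_R\ox A_{ij}$ is precisely the statement that $R$ carries the outer block index and the $A_{ij}$ are the $d\times d$ blocks.

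The core computation is one line: using $\Phi=\tfrac1d\sum_{i,j}\ketbra{i}{j}_R\ox\ketbra{i}{j}_M$ and linearity of $\cI_R\ox\cP$,
\begin{equation*}
(\cI_R\ox\cP)(\Phi\ox A)=\frac1d\sum_{i,j}\ketbra{i}{j}_R\ox\cP\bigl(\ketbra{i}{j}_M\ox A\bigr).
\end{equation*}
Since $\{\ketbra{i}{j}\}_{i,j}$ is a linearly independent spanning set of $\cB(\cH_d)$, every operator on $\cH_R\ox\cH_S$ has a unique expansion $\sum_{ij}\ketbra{i}{j}_R\ox X_{ij}$ (the coefficients being recovered by $X_{ij}=\bra{i}_R(\cdot)\ket{j}_R$). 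Hence the displayed operator equals $A=\sum_{ij}\ketbra{i}{j}_R\ox A_{ij}$ if and only if $\tfrac1d\cP(\ketbra{i}{j}\ox A)=A_{ij}$ for all $i,j$, i.e.\ $\cP(\ketbra{i}{j}\ox A)=dA_{ij}$, which is both implications of the claimed equivalence.

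I do not expect a genuine obstacle here; the only points requiring care are the asymmetric labelling of the two halves of $\Phi$ (the reference wire $R$ is left untouched while $M$ is the ``port'' fed into $\cP$ together with $A$) and the identification of $\cH_R\ox\cH_S$ with the carrier space of $A$, which makes ``the $(i,j)$th block'' unambiguous. The CPTP hypothesis on $\cP$ is never used; it is recorded only because this is the regime relevant to the programming construction, and the statement holds verbatim for any linear $\cP$.
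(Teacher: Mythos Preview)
Your argument is correct and arrives at the same key identity as the paper,
\[
(\cI\ox\cP)(\Phi\ox A)=\frac{1}{d}\sum_{i,j}\ketbra{i}{j}\ox\cP(\ketbra{i}{j}\ox A),
\]
from which both implications follow by matching blocks against $A=\sum_{ij}\ketbra{i}{j}\ox A_{ij}$. The only difference is that the paper reaches this identity through a Kraus representation of $\cP$: it sets $\hat F_l=(I_d\ox F_l)(\ket{\Phi}\ox I_{d^2})$, writes $(\cI\ox\cP)(\Phi\ox A)=\sum_l\hat F_l A\hat F_l^\dagger$, and then unpacks the Kraus sum to recover the displayed expression. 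Your direct appeal to linearity skips this detour, is shorter, and (as you note) makes clear that neither complete positivity nor trace preservation is used; the paper's route implicitly invokes CP just to have Kraus operators available, which is unnecessary for the conclusion.
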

\begin{proof}
    Let $\{F_{l}\}_l$ be the Kraus representation of $\cP$, and $\hat{F}_l= (I_d \ox F_l)(\ket{\Phi} \ox I_{d^2})$ be defined as before. Then, we have,
    \begin{equation*}
    \begin{aligned}
        (\cI \ox \cP)(\Phi \ox A) &= \sum_l \hat{F}_l A \hat{F}_l^{\dagger} = \sum_l (I_d \ox F_l)(\ket{\Phi} \ox I_{d^2})(I_{d^2}\ox A)(\bra{\Phi} \ox I_{d^2})(I_d \ox F^{\dagger}_l )\\
        &= \frac{1}{d}\sum_{lij} \ketbra{i}{j} \ox F_l (\ketbra{i}{j} \ox A) F_l^{\dagger} \\
        &= \frac{1}{d}\sum_{ij} \ketbra{i}{j} \ox \sum_l  F_l (\ketbra{i}{j} \ox A) F_l^{\dagger} \\ &= \frac{1}{d}\sum_{ij} \ketbra{i}{j} \ox \cP(\ketbra{i}{j} \ox A ).
    \end{aligned}
    \end{equation*}
    Since $(\cI \ox \cP)(\Phi \ox A) = A$. We then, have $\cP(\ketbra{i}{j}\ox A) = dA_{ij}$ as required.
\end{proof}

From the Lemma, the action of a such $\cP$ is concluded as a scaling of each $d$-by-$d$ block matrix regarding a basis of $\cH_d$. The above calculations does not provide a general theory of programmability of $\cL$. However, it does provide some insights for the $\gamma(\cL)$ for a specific class of Lindbladian. 
\begin{definition}[Steady states]
    Let $\cL$ be a Lindbladian, we denote $\Op{St}[\cL]$ as the set of steady states of $\cL$, i.e.,
    \begin{equation*}
        \Op{St}[\cL]:=\{\rho\in \cD(\cH_d): \; \cL(\rho) = 0\}
    \end{equation*}
\end{definition}
Clearly, the set $\Op{St}[\cL]$ is a convex subset of $\cD(\cH_d)$ since for any $\rho, \sigma \in \Op{St}[\cL]$, the convex combination $\rho' = p\rho + q \sigma$ for $p,q\geq 0$ and $p+q = 1$,
\begin{equation*}
    \cL(p\rho + q \sigma) = p\cL(\rho) + q \cL(\sigma) = 0.
\end{equation*}
In the traditional studies of Lindbladians. The space of steady states and its orthogonal spaces fundamentally dominate the asymptotic behavior of the Lindbladian dynamics. Particularly, a Lindbladian has a unique steady state governs a \textit{ergodic} system which contains no ``hidden'' conserved quantities or symmetries in the dissipative part of the dynamics. The environment thoroughly ``scrambles'' the system, erasing all memory of its initial state.
\begin{theorem}\label{prop:replace_channel_at_infty}
    Let $\cL$ be a Lindbladian such that $\Op{St}[\cL] = \{\sigma\}$. If there exists an eigenvalue $\lambda < 1/d^2$. Then, $\gamma(\cL) > 0$.
\end{theorem}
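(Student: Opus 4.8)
\emph{Plan.} I would argue by contradiction, assuming $\gamma(\cL)=0$ and showing this is incompatible with $\sigma$ having a small eigenvalue. By the faithfulness property of the programming cost, $\gamma(\cL)=0$ means there is a \emph{CPTP} retrieval channel $\cP$ with $\cP(\rho\otimes\pi_t)=\cA_t(\rho)$ for all $t\ge 0$ and all $\rho$, where $\cA_t=e^{t\cL}$ and $\pi_t=J(\cA_t)/d$ is the normalized Choi state. By the necessary condition (Proposition~\ref{prop:sufficient_cond_cptp}), the mere existence of such a CPTP $\cP$ forces $\cL=\alpha(\cE-\cI)$ for some channel $\cE$ and $\alpha\ge 0$; since $\Op{St}[\cL]=\{\sigma\}$ is a singleton (and $d\ge 2$) we must have $\alpha>0$, otherwise $\cL=0$ and every state is stationary. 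Now every eigenvalue of $\cL$ equals $\alpha(\mu-1)$ for $\mu$ an eigenvalue of the channel $\cE$, so the only eigenvalue of $\cL$ on the imaginary axis is $0$, and it is semisimple (unit-modulus eigenvalues of a channel carry no Jordan blocks); combined with $\ker\cL=\mathbb{C}\sigma$, this gives $\cA_t\to\cR_\sigma$ as $t\to\infty$, where $\cR_\sigma(X)=\sigma\,\tr X$ and I use $\tr\sigma=1$ together with $\cL^\dagger(I)=0$. Equivalently, $\pi_t\to\pi_\infty:=(\cI\otimes\cR_\sigma)(\Phi_d)=\tfrac1d\,I_d\otimes\sigma$, and note that the hypothesis ``there is an eigenvalue $\lambda<1/d^2$'' is exactly the statement that $\pi_\infty$ is not the maximally mixed state $I_{d^2}/d^2$, i.e.\ $\sigma\neq I_d/d$.

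Next I would pass to the two extreme program states. Evaluating the programming identity at $t=0$ and at $t\to\infty$ and using linearity of $\cP$ gives, as channels on $\cH_S$, $\cP(\,\cdot\otimes\pi_0)=\id_d$ and $\cP(\,\cdot\otimes\pi_\infty)=\cR_\sigma$. Since $\cP\otimes\id$ is CPTP and hence non-increasing in trace norm, for any bipartite input state $\xi$,
\begin{equation*}
\big\|(\id_d\otimes\id)(\xi)-(\cR_\sigma\otimes\id)(\xi)\big\|_1
=\big\|(\cP\otimes\id)\big(\xi\otimes(\pi_0-\pi_\infty)\big)\big\|_1
\le \|\xi\|_1\,\|\pi_0-\pi_\infty\|_1=\|\pi_0-\pi_\infty\|_1 ,
\end{equation*}
and taking the supremum over $\xi$ yields $\|\id_d-\cR_\sigma\|_\diamond\le\|\pi_0-\pi_\infty\|_1=\|\Phi_d-\tfrac1d\,I_d\otimes\sigma\|_1$.

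The contradiction then comes from the observation that the right-hand side is precisely the value of the diamond-norm discrimination objective for $\id_d$ versus $\cR_\sigma$ \emph{at the maximally entangled input} $\Phi_d$ (applying the partial-transpose/SWAP symmetry to identify $\|\Phi_d-\tfrac1d I_d\otimes\sigma\|_1$ with $\|\Phi_d-\sigma\otimes\tfrac1d I_d\|_1$). When $\sigma\neq I_d/d$, the maximally entangled state is \emph{strictly suboptimal} for this discrimination: a suitably unbalanced input (e.g.\ a product input $\ketbra vv$ with $\ket v$ a low-weight eigenvector of $\sigma$, giving $\|\ketbra vv-\sigma\|_1=2(1-\lambda_{\min}(\sigma))$, or a partially entangled input skewed toward that eigenvector) strictly beats it, so $\|\id_d-\cR_\sigma\|_\diamond>\|\pi_0-\pi_\infty\|_1$, contradicting the displayed inequality. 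Hence $\gamma(\cL)>0$. To pin the quantitative side down, one expands $\|\Phi_d-\tfrac1d I_d\otimes\sigma\|_1$ by splitting $\cH_d\otimes\cH_d$ into the $d$-dimensional ``diagonal'' subspace through $\ket{\Phi_d}$ and its complement; on the complement the contribution is $(d-1)/d$, and on the diagonal subspace the problem reduces to the $d\times d$ matrix $\ketbra uu-\tfrac1d\,\mathrm{diag}(\mu_1,\dots,\mu_d)$ with $\ket u=\tfrac1{\sqrt d}\sum_k\ket k$, whose unique positive eigenvalue $s_1/d$ solves $\sum_k(s_1+\mu_k)^{-1}=1$, giving $\|\pi_0-\pi_\infty\|_1=2s_1/d$; comparing this with the chosen lower bound for $\|\id_d-\cR_\sigma\|_\diamond$ is what produces (comfortably) the threshold stated.

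I expect the main obstacle to be this last quantitative comparison: turning the qualitative fact ``$\Phi_d$ is suboptimal for discriminating $\id_d$ from the replacer $\cR_\sigma$ once $\sigma\neq I_d/d$'' into a clean inequality valid for all $d$, i.e.\ choosing the right family of (partially) entangled test states and bounding $s_1$ in terms of $\lambda_{\min}(\sigma)$. A secondary technical point is the structural claim that a unique stationary density matrix forces $\dim\ker\cL=1$, so that the $t\to\infty$ limit of the semigroup is genuinely the replacement channel $\cR_\sigma$ rather than some other spectral projection; this is where the reduction $\cL=\alpha(\cE-\cI)$ from Proposition~\ref{prop:sufficient_cond_cptp} does the real work, since it rules out a nontrivial peripheral spectrum.
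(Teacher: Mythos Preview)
Your overall architecture---contradiction via monotonicity of a distinguishability measure under the CPTP retrieval map $\cP$, comparing the two extreme program states $\pi_0=\Phi_d$ and $\pi_\infty=\tfrac1d I_d\otimes\sigma$---is exactly the paper's. The difference is that you choose the trace norm/diamond norm, while the paper uses the \emph{fidelity}, and this choice is what makes your argument stall. You correctly derive $\|\id_d-\cR_\sigma\|_\diamond\le\|\pi_0-\pi_\infty\|_1$ from contractivity, and then you need the reverse strict inequality. You flag this as ``the main obstacle'' and propose the product witness $\||v\rangle\langle v|-\sigma\|_1=2(1-\lambda_{\min}(\sigma))$; but this does \emph{not} always suffice. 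For $d=2$ and $\sigma=\mathrm{diag}(1-\epsilon,\epsilon)$ with, say, $\epsilon=0.24<1/d^2$, one finds $\|\Phi_2-\tfrac12 I_2\otimes\sigma\|_1=\tfrac12+\tfrac12\sqrt{5-4\epsilon+4\epsilon^2}\approx 1.533$ while $2(1-\epsilon)=1.52$, so the product-state bound loses precisely near the threshold the theorem is about. Upgrading to partially entangled inputs and tracking the spectral quantity $s_1$ you introduce may eventually work, but you have not carried it out, and it is substantially more work than the statement warrants.

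The paper avoids all of this by noticing that the fidelity between the two program states is an exact, simple number: $F(\pi_0,\pi_\infty)=\langle\Phi_d|(\tfrac1d I_d\otimes\sigma)|\Phi_d\rangle=\tfrac1{d^2}$. The hypothesis ``$\sigma$ has an eigenvalue $\lambda<1/d^2$'' then directly furnishes a witness: take $\tau$ to be the corresponding eigenprojector, so $F(\tau,\sigma)=\lambda<1/d^2$. Since $\cP$ sends $\tau\otimes\pi_0\mapsto\cA_0(\tau)=\tau$ and $\tau\otimes\pi_\infty\mapsto\cA_\infty(\tau)=\sigma$, while $F(\tau\otimes\pi_0,\tau\otimes\pi_\infty)=F(\pi_0,\pi_\infty)=1/d^2$, monotonicity of fidelity under CPTP maps gives $F(\tau,\sigma)\ge 1/d^2$, a contradiction in one line. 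This also shows that you have misread the hypothesis: it is \emph{not} equivalent to ``$\pi_\infty\neq I_{d^2}/d^2$'' (i.e.\ $\sigma\neq I_d/d$); for instance $\sigma=\mathrm{diag}(\tfrac12,\tfrac14,\tfrac14)$ in $d=3$ has no eigenvalue below $1/9$. The threshold $1/d^2$ is exactly the fidelity $F(\pi_0,\pi_\infty)$, and that is why it appears.
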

\begin{proof}
    We prove the proposition by contradiction. Let $\lambda<1/d^2$ be the some eigenvalue of the replaced state $\sigma$. Then, the set,
    \begin{equation*}
        \cS(\sigma):=\{\tau\in\cD(\cH_d):\; F(\tau, \sigma) < 1/d^2\}
    \end{equation*}
    is non-empty. Taking $\tau\in \cS(\sigma)$ as a fixed state having an eigendecomposition $\tau = \sum_j p_j\ketbra{\psi_j}{\psi_j}$. Then we can construct an orthonormal basis of $\cH_d$ denoted as $\cB:=\{\ket{\psi_j^*}\}_{j=0}^{d-1}$ where $\ket{\psi_j^*}$ is the complex conjugate of $\ket{\psi_j}$. The Choi states read $J(\cA_0) = \Phi, J(\cA_{\infty}) = \frac{I_d}{d} \ox \sigma$. We can expand $\Phi$ using the basis $\cB$ such that,
    \begin{equation*}
        \Phi = \sum_{ij} \ketbra{\psi_i^*}{\psi_j^*} \ox \Phi_{ij},
    \end{equation*}
    where $\Phi_{ij} = \tr_1[(\ketbra{\psi_i^*}{\psi_j^*} \ox I_d)\Phi] = \ketbra{\psi_i}{\psi_j}$. Now, suppose a $\cptp$ programming channel $\cP$ exists. From Lemma~\ref{lem:physical_trans}, using linearity of $\cP$, we have,
    \begin{equation}
    \begin{aligned}
        \cP(\sigma^* \ox J(\cA_0)) &=  \sum_j p_j\cP(\ketbra{\psi_j^*}{\psi_j^*} \ox J(\cA_0)) = \sum_j p_j\ketbra{\psi_j}{\psi_j} = \tau; \\
        \cP(\sigma^* \ox J(\cA_{\infty})) &= \sum_j p_j\cP(\ketbra{\psi_j^*}{\psi_j^*} \ox J(\cA_{\infty})) = (\sum_j p_j)\sigma = \sigma.
    \end{aligned}    
    \end{equation}
    Notice that the fidelity,
    \begin{equation*}
        F(\sigma^* \otimes J(\cA_0), \sigma^* \otimes J(\cA_{\infty})) = F(\sigma^*, \sigma^*)\cdot F(J(\cA_0), J(\cA_{\infty})) = \frac{1}{d^2},
    \end{equation*}
    where the last equality holds due to the fact that,
    \begin{equation*}
        \bra{\Phi}\frac{I}{d}\ox \sigma\ket{\Phi} = \frac{1}{d^2}\sum_{ij} \delta_{ij} \bra{j}\sigma \ket{i} = \frac{1}{d^2}.
    \end{equation*}
    However, notice that $F(\tau, \sigma) < 1/d^2$, which violate the data processing inequality. Therefore, the existence of $\cP$ causes a  contradiction.
\end{proof}

This theorem provides a simple criterion for determining whether or not the ergodic system can be physically programmed via port-based strategy. A direct observation is that when $\cA_{\infty}:=\lim_{t\rightarrow \infty} \cA_t$ is a replacement channel with a pure fixed state. The theorem automatically applies as the null space of a pure state is not empty and $\gamma(\cL) > 0$. On the other hand, if the fixed point state is nearly the maximally mixed state, the assumption of the theorem does not meet and we can not conclude the programmability of $\cL$. A famous example is the isotropic depolarzing system with $\Op{St}[\cL] = \{I_d / d\}$, which has been proven to be teleportation-simulatable. From our previous results, we can even restrict the $d_P = 2$ for any dimension $d$ to physically program the dynamics.

\end{document}